\newtheorem{definizione}{Definition}[section]
\newtheorem{teorema}{Theorem}[section]
\newtheorem{prop}{Proposition}[section]
\newtheorem{lemma}{Lemma}[section]
\newcommand{\n}{\noindent}
\newcommand{\ve}{\varepsilon}
\newcommand{\erre}{\mathbb{R}}
\newcommand{\enne}{\mathbb{N} }
\newcommand{\f}{\frac}
\newcommand{\ba}{\begin{eqnarray}} \newcommand{\ea}{\end{eqnarray}}
\newcommand{\be}{\begin{equation}} \newcommand{\ee}{\end{equation}}
\newcommand{\bdm}{\begin{displaymath}} \newcommand{\edm}{\end{displaymath}}
\newcommand{\brr}{\begin{array}}\newcommand{\err}{\end{array}}
\newcommand{\bml}{\begin{gather}}
\newcommand{\eml}{\end{gather}}
\newcommand{\vs}{\vspace{.5cm}}
\newcommand{\bo}{\boldsymbol}
\newcommand{\bs}{\boldsymbol}
\newcommand{\mue}{\mathcal U^\varepsilon }
\newcommand{\mul}{\mathcal U_0^\varepsilon }
\newcommand{\un}{\underline{n}}
\newcommand{\um}{\underline{m}}
\newcommand{\uo}{\underline{0}}
\numberwithin{equation}{section}
\begin{document}


\begin{center}
{\Huge\bf Semiclassical wave-packets emerging \\
\vspace{0.2cm} from interaction with an environment  }
 
\vs
\vs
\vs

{\large\bf Carla Recchia$^1$ and Alessandro Teta$^2$}

\vs
\vs

{\small $^1$D.I.S.I.M., Universit\`a di L'Aquila} \\
\small{Via Vetoio - Loc. Coppito - 67010 L'Aquila, Italy, carla.recchia@libero.it}
\\
\small{  $^2$Dipartimento di Matematica, ''Sapienza'' Universit\`a di Roma}\\
\small{P.le A. Moro 5, 00185 Roma, Italy,  teta@mat.uniroma1.it  
} 

\end{center}

\vs
\vs\vs
\vs

\begin{center}
{\bf Abstract} 
\end{center}

\n
We study the   quantum evolution  in dimension three of a system composed by a test particle interacting with an environment made of $N$ harmonic oscillators. At time zero the test particle is described by a spherical wave,  i.e. a highly correlated continuous superposition of states with well localized position and momentum,  and the oscillators are in the ground state. Under suitable assumptions on the physical parameters characterizing the model, we give an asymptotic expression of the solution of the Schr\"odinger equation of the system with an explicit control of the error. The result shows that the approximate expression of the wave function is the sum of   two  terms, orthogonal in $L^2(\erre^{3(N+1)})$ and describing  rather different situations.

\n
In the first one all the oscillators remain in their ground state and the test particle is described by the free evolution of a slightly deformed spherical wave. 

\n
The second one consists of a sum of $N$ terms where in each term  there is only one excited  oscillator and the test particle  is correspondingly described by the free evolution of a  wave packet, well concentrated in position and momentum. Moreover the wave packet emerges from the excited oscillator with an average momentum parallel to the line joining  the oscillator with the center of the initial spherical wave.  
Such wave packet represents a semiclassical state for the test particle, propagating along the corresponding  classical trajectory.

\n
The main result of our analysis  is to show how such a semiclassical state can be produced, starting from the original  spherical wave,  as a result of the interaction with the environment. 

\newpage

\section{Introduction}

\n
The analysis of the  emergence of a classical behavior in a quantum system is a subject of interest not only from the conceptual but also from the applicative point of view. Indeed,  for the correct behavior of any quantum device it is surely relevant to   understand  under which conditions genuine quantum effects, like interference or entanglement, are reduced or cancelled.

\n
It is widely accepted that a crucial classicality condition  is obtained requiring that  the Planck's constant $\hbar$ is small with respect to the typical action of the system. 
An important example where the classical behavior is recovered in the limit  $\hbar \rightarrow 0$ is the case  of  the evolution of a quantum particle initially described by a ''semiclassical'' state, e.g. a state with a suitable dependence on $\hbar$ like a WKB or a coherent state. 
A typical result in this context  is that for $\hbar \rightarrow 0$ the evolution is again given by a semiclassical state, propagating along the corresponding classical trajectories (see e.g. \cite{r}, \cite{cr}).

\n
On the other hand, it is important to remark that in many interesting physical  situations the initial state of the quantum system is a genuine quantum state, like a superposition state, and nevertheless the system exhibits  a classical behavior during the evolution. In such cases  the limit $\hbar \rightarrow 0$ for the isolated system cannot help and the role of the quantum environment  must be taken into account, according to the approach known as decoherence theory (see e.g. \cite{gjkksz}, \cite{bgjks}, \cite{h}; for some rigorous results see \cite{afft}, \cite{ccf}). Roughly speaking, for such kind of physical situations one has to consider the Schr\"odinger equation for the ''system + environment'' and one has to prove that, for appropriate values of the physical parameters characterizing system and environment, the system shows a classical behavior as a result of its interaction with the environment. 

\n
In this paper we consider the case of the evolution of a quantum particle initially described by a spherical wave, i.e. a highly correlated continuous superposition of states with well localized position and momentum. Our aim is to show that a semiclassical wave packet for the particle, and therefore the propagation along a classical path, emerges as a result of the interaction with the environment.

\n
This kind of problem was already raised in the early days of Quantum Mechanics and it was discussed in a seminal paper by Mott in 1929 (\cite{m}) in connection with a possible explanation of the linear tracks left by an $\alpha$-particle in a cloud chamber (see also \cite{figt} for a historical analysis of the problem). We recall that the $\alpha$-particle is emitted by a radioactive source in the form of a spherical wave and then it interacts with the atoms of the vapor filling the device.  If an atom is excited then the amplified effect of the excitation is observed. As a matter of fact, in the experiment one observes straight tracks, corresponding to the excitation of a sequence of many atoms whose positions are aligned with the center of the spherical wave. The observed track is regarded as the experimental manifestation of the ''classical trajectory''  of the $\alpha$-particle. From the point of view of a theoretical description, the non trivial problem arises to explain how a spherical wave can produce the observed classical trajectory.  In Mott's paper an answer is given considering  a model of environment made of only two atoms. At a physical level of rigor, it is shown that the probability that both atoms are excited is negligible unless the two atoms are aligned with the center of the spherical wave. The approach is based on  second order perturbation theory for the stationary Schr\"odinger equation and a repeated use of stationary phase arguments (we refer to \cite{dft} for an attempt to 
revisit Mott's approach in a fully time dependent setting and to \cite{fint} for a detailed analysis of a simpler one dimensional model).  

\n
Here we consider a more general quantum system  in $\erre^3$ made of a test particle, initially described by a spherical wave centered in the origin, interacting with  $N$ harmonic oscillators, initially placed in their ground state.  The aim is to study the evolution of the whole system after the interaction of the test particle with each oscillator has taken place. More precisely, we introduce a set of assumptions on the physical parameters characterizing the system, collectively described by a small parameter $\ve >0$, and we give an asymptotic expression for the solution of the Schr\"odinger equation up to order $\ve^2$ with an explicit control of the error.

\n
Roughly speaking, the result shows that the probability that more than one oscillator is excited is negligible and therefore the evolution of the system can be decomposed in only two rather different ''histories''. In the first one  the oscillators remain in their ground state and the test particle is described by the free evolution of a (slightly deformed)  spherical wave. The second history consists of a sum of $N$ terms where in each term  there is only one excited  oscillator. Here the test particle is correspondingly described by the free evolution of a wave packet, well concentrated in position and momentum, emerging from the excited oscillator with an average momentum parallel to the line joining the origin with the oscillator. 

\n
Such wave packet represents a semiclassical state for the test particle, propagating along the corresponding  classical trajectory (the straight track observed in the cloud chamber).

\n
We stress that the main result of our analysis  is to show how such a semiclassical state can be produced, starting from the original  spherical wave,  as a result of an interaction with the environment.  Moreover we emphasize that the interaction with the environment is   entirely described in terms of  the Schr\"odinger dynamics without any recourse to wave packet collapse rule.  

\n
The paper is organized as follows.

\n
In section 2 we give a detailed description of the model and we formulate our main result (theorem \ref{main}). 
In section 3 we describe the line of the proof and we formulate some intermediate results required to conclude the proof of theorem \ref{main}. 
In sections 4, 5, 6, 7, 8 we give a proof of the above mentioned intermediate results. 

\n
For the convenience of the reader, we collect here  some of the most used  notation  in the paper.

\n
- $\bs{x}=(x_1,x_2,x_3)$ is a vector in $\erre^3$, $|\bs{x}|$  the  euclidean  norm and $\hat{\bs{x}}= \f{\bs{x}}{|\bs{x}|}$  the corresponding unit vector. The scalar product in $\erre^3$ is $\bs{x}\cdot \bs{y}$.

\n
- $\un = (n_1,n_2,n_3)$ is a vector in $\enne^3$ and, with an abuse of notation, $|n| = n_1+n_2+n_3$.

\n
- $\langle \bo x \rangle = (1 + |\bo x|^2)^{1/2}$, $\bo x \in \erre^3$ and $\langle y \rangle = (1+y^2)^{1/2}$, $y \in \erre$.

\n
- $\|\cdot\|$, $(\cdot, \cdot)$ are the norm and the scalar product in $L^2(\erre^{3(N +1)})$ respectively.

\n
- $\|\cdot\|_{L^p}$, $p>0$, is the norm in $L^p (\erre^3)$ and $\langle \cdot,\cdot \rangle$ the scalar product in $L^2 (\erre^3)$.

\n
- The derivative of a function $f$ defined in $\erre^3$ is denoted by
\[ D^{\underline{\alpha}} f (\bo x)= \f{\partial^{|\alpha|} }{\partial x_1^{\alpha_1} \partial x_2^{\alpha_2} \partial x_3^{\alpha_3} } f(\bo x) 
\]
 for $\underline{\alpha} \in \enne^3$ and $|\alpha|= \alpha_1 + \alpha_2 + \alpha_3$.

\n
- $W^{n,1}_s (\erre^3)$, $n \in \enne$, $s>0$,  is the weighted Sobolev space equipped with the norm 
\[
\|f\|_{W^{n,1}_s}= \sum_{\underline{\alpha}, |\alpha| \leq n} \int \! d \bo x \, \langle \bo x \rangle^s |D^{\underline{\alpha}} f (\bo x) |
\]

\n
- $\tilde{f}$  is the Fourier transform of $f$.

\n
- We shall  use the following abbreviations for sums and products
\[   \sum_{\un} \equiv  \sum_{n_1, n_2, n_3 =1}^{\infty} \;\;\;\;\;\; \;\;\;\;\;\;\;\;\;\;\;\;\prod_{k, k\neq j}\equiv  \prod_{k=1, k\neq j}^{N}
\]
- During the proofs we shall often denote by $c$, $c_k$ a generic positive constant, possibly dependent on the integer $k$.


\section{Description of the model and main result}

\n
Let us consider a non relativistic quantum system made of $N+1$ spinless particles in dimension three, where one test particle has mass $M$ and the remaining $N$  particles with mass $m$ are bound by a harmonic potential of frequency $\omega$ around the equilibrium positions $\bs{a}_1,\ldots,\bs{a}_N$.  
In the model, the test particle plays the role of the $\alpha$-particle while the harmonic oscillators play the role of electrons in a simplified version of model-atoms with fixed nuclei. The interaction between the test particle and the $j$-th harmonic oscillator is described by a smooth two body potential $V$. We denote by $\bs{R}$ the position coordinate of the test particle and by $\bs{r}_1,\ldots,\bs{r}_N$ the position coordinates of the harmonic oscillators. Therefore the Hamiltonian of the system in $L^2(\erre^{3(N+1)})$ is given by
\begin{eqnarray}\label{hamiltoniana2osc}
H=H_0+\lambda \sum_{j=1}^NV_j
\ea
where
\ba
H_0=h_0+\sum_{j=1}^Nh_j,\;\;\;\;\;\;
h_0=-\frac{\hbar^2}{2M}\Delta_{\bs{R}},\;\;\;\; \;\;
h_j=- \frac{\hbar^2}{2m}\Delta_{\bs{r}_j}+\frac{1}{2}m\omega^2(\bs{r}_j-\bs{a}_j)^2
\end{eqnarray}
$\lambda>0$ is a coupling constant and $V_j$ is the multiplication operator by
\ba
V_j(\bs{R},\bs{r}_j) =V(\delta^{-1}(\bs{R} - \bs{r}_j)),\quad \;\;\;\;\;\; \delta >0
\ea
We are interested in the evolution of the system when the initial state is given in the product form
\begin{equation}
\Psi_0(\bs{R},\bs{r}_1,\ldots , \bs{r}_N)=\psi(\bs{R})\prod_{j=1}^N\varphi_{\uo,j}(\bs{r}_j)
\end{equation}
where $\psi(\bo R)$ is a spherical wave centered in the origin and   $\varphi_{\uo,j}(\bo r_j)$  is the ground state of the harmonic oscillator centered in $\bs{a}_j$. More precisely the spherical wave is given by
\begin{equation}\label{sphw}
\psi(\bs{R})=
\mathcal{N} \, f(\sigma^{-1}\bs{R}) \! \int_{S^2}  \!\! d\hat{\bo u}\, e^{i \frac{M v_0}{\hbar} \hat{\bs{u}}\cdot
\bs{R}}
\end{equation}
where $\mathcal{N}$ is a normalization constant, $\sigma, v_0>0$,  $f$  belongs to the Schwartz space $\mathcal{S}(\mathbb{R}^3)$ with $\|f\|_{L^2}=1$ and it is rotationally invariant. For the sake of concreteness  we choose
\be
f(\bs{x}) \, = \, \pi^{-3/4} \, e^{- \f{|\bs{x}|^2}{2}}
\ee
but it will be clear in the following that the result of our analysis is independent of the specific choice of $f$. 
Formula (\ref{sphw}) defines a spherical wave concentrated in position around the origin with an isotropic momentum $M v_0$. Moreover, for the eigenfunctions of the harmonic oscillators,  we denote $\un= (n_1,n_2,n_3) \in \enne^3$ and
\ba
&&\varphi_{\un,j} (\bs{r})=    \varphi_{\un} (\bs{r}-\bs{a}_j) = \gamma^{-3/2} \phi_{\un}(\gamma^{-1}(\bs{r}-\bs{a}_j)) ,  \;\;\;\;\;\;
\gamma=\sqrt{\frac{\hbar}{m\omega}}\\
&&\phi_{\un}(\bs{x}) \equiv \phi_{n_1} (x_1) \phi_{n_2}(x_2) \phi_{n_3}(x_3)
\ea
where $\phi_{n_k}$ is the Hermite function of order $n_k$. In particular the ground state corresponds to $\un=\uo=(0,0,0)$.

\n
In this generality, it is surely too hard to obtain a detailed description of the evolution of the system. We shall limit ourselves to consider an appropriate scaling limit and to derive an approximate evolution with an explicit control of the error. 
More precisely, we introduce a small parameter $\ve >0$ and fix
\be
\hbar =\ve^2 \;\;\;\;\;\; M=1 \;\;\;\;\;\; \sigma=\ve\;\;\;\;\;\;m=\ve \;\;\;\;\;\; \omega = \ve^{-1}\;\;\;\;\;\; \delta=\ve \;\;\;\;\;\; \lambda=\ve^2
\ee
Under this scaling the Hamiltonian becomes
\begin{eqnarray}\label{hamilt_2oscillatori}
H^{\varepsilon}=H^{\varepsilon}_0+\varepsilon^2  \, V^{\ve}
\ea
where
\ba
H^{\varepsilon}_0=h_0^{\varepsilon}+\sum_{j=1}^Nh_j^{\varepsilon},\;\;\;\;\;\;
 h_0^{\varepsilon}=-\frac{\varepsilon^4}{2}\Delta _{\bs{R}}, \;\;\;\;\;\;
 h_j^{\varepsilon}=\varepsilon^{-1} \left[-\frac{\varepsilon^4}{2}\Delta_{\bs{r}_j}+\frac{1}{2}(\bs{r}_j- \bs{a}_j)^2\right]
 \ea
 and
 \ba
V^{\ve}= \sum_{j=1}^N V^{\varepsilon}_j  , \;\;\;\;\;\;\;\;     V^{\varepsilon}_j(\bs{R},\bs{r}_j)=V\left(\ve^{-1}(\bs{R}-\bs{r}_j)  \right)
\end{eqnarray}
The rescaled initial state of the system is
\begin{eqnarray}\label{datoinizialerisc2}
\Psi_0^{\varepsilon}(\bs{R},\bs{r}_1,\ldots , \bs{r}_N)=\psi^{\varepsilon}(\bs{R})\prod_{j=1}^N\varphi_{\uo,j}^{\varepsilon}(\bs{r}_j)\\
\psi^{\varepsilon}(\bs{R})=\frac{\mathcal{N}_{\varepsilon}}{\varepsilon^{5/2}}\, f \left(\ve^{-1} \bs{R} \right)
\int_{S^2}\!\!  d\hat{\bs{u}}\,e^{\frac{i}{\varepsilon ^2}v_0 \hat{\bs{u}}\cdot
\bs{R}}\\
\varphi_{\un,j}^{\varepsilon}(\bs{r}_j)=\frac{1}{\ve^{3/2}}\phi_{\un} \left(\ve^{-1}(\bs{r}_j-\bs{a}_j)  \right)\;\;\;\;\;\;
\un \in \mathbb{N}^3
\end{eqnarray}
By a direct computation one sees that the normalization constant $\mathcal N_{\ve}$ for $\ve \rightarrow 0$ reduces to
\be
\mathcal N_0=\f{v_0}{4  \pi}
\ee
We notice that under this scaling the energy levels of each harmonic oscillator are
\be
E_{\un}^{\ve}= \ve \left( |n| + \f{3}{2} \right), \;\;\;\;\;\;\;\; |n|=n_1+n_2+n_3
\ee
Furthermore we introduce the following assumptions:

\vs
\n
(A) \hspace{0.3cm} {\em The Fourier transform $\widetilde{V}$ of the interaction potential $V$ belongs to the weighted Sobolev 

\n
$\;\;\;\;\;\; \;\;\;\,  $ space $W^{4,1}_4 (\erre^3)$.}

\n
(B)  \hspace{0.3cm} {\em The positions of the oscillators $\bs{a}_1,\ldots, \bs{a}_N$ satisfy the two conditions:

\n
$\;\;\;\;\;\; \;\;\;\; \, |\bs{a}_1|<|\bs{a}_2|<\ldots <|\bs{a}_N| \;  $ and  $\;\; \bs{a}_i\cdot \bs{a}_j \neq |\bs{a}_i | |\bs{a}_j | \,, \,\,\,\,\,\, i\neq j $.}

\vs

\n

\n
We notice that under assumption (A)  the Hamiltonian $H^{\ve}$ is surely self-adjoint and bounded from below (in fact much less  is required). In particular this implies that the unitary evolution of the system is well defined for any time $t$. Assumption (A) is also crucial  for our specific method of proof, based on repeated integrations by parts in highly oscillatory integrals. Furthermore 
assumption (B) is a technical ingredient useful for the proof. 

\n
We stress that our model is completely defined by the Hamiltonian (\ref{hamilt_2oscillatori}), the initial state (\ref{datoinizialerisc2}) and the assumptions (A), (B). Before approaching the evolution problem,  we briefly comment on the physical meaning of our scaling for $\ve  \rightarrow 0$.

\n
We  first observe that the dimensionless quantity
\be
\f{\hbar}{Mv_0 \sigma}
\ee
is of oder $\ve$, which means that the wavelength  $\f{\hbar}{Mv_0}$ associated to the $\alpha$-particle is much smaller than the spatial localization $\sigma$ (high momentum regime). 
Analogously for any $j=1,\ldots , N$ the quantities
\be
\f{\sigma}{|\bs{a}_j|}, \;\;\;\;\;\; \f{\gamma}{|\bs{a}_j|}, \;\;\;\;\;\; \f{\delta}{|\bs{a}_j|}
\ee
are of order $\ve$, i.e. the spatial localization of the test particle, the ''diameter'' of the oscillators and the range of the interaction are much smaller than the macroscopic distance $|\bs{a}_j|$. 

\n
Moreover we notice that the energy levels of the oscillators $\hbar \omega (|n|+3/2)$ are of order $\ve$ and the coupling constant $\lambda$ is of order $\ve^2$ while the kinetic energy of the test particle is of order one for $\ve \rightarrow 0$.  
This guarantees that  the energy loss for the test particle due to the interaction with an oscillator is very small (quasi-elastic regime) and the perturbative approach can be used. 

\n
Finally it is interesting to compare the characteristic times of the system.  In particular we define the classical flight times for $ j=1, \ldots, N$ as the time spent by a classical particle, starting from the origin with velocity $v_0$, to reach the oscillator in $\bs{a}_j$
\be\label{tau}
\tau_j = \f{|\bs{a}_j|}{v_0}
\ee
the ''period'' of the oscillators
\be
T_o = \f{2 \pi }{\omega}
\ee
and the transit time, i.e. the time spent by  the test particle to travel the diameter of an  oscillator
\be
T_t = \f{\gamma}{v_0}
\ee
It turns out that
\be
\f{T_t}{T_o} =O(1)
\ee
i.e. the test particle can ''see'' the internal structure of the oscillators and
\be
\f{T_t}{\tau_j} = O(\ve)
\ee
which implies that  $\tau_j$ can be reasonably identified as the collision time of the test particle with the oscillator in $\bs{a}_j$.

\n
We are now ready to study the solution of the Schr\"odinger equation   with initial datum $\Psi^{\ve}_0$
\be
\mue(t)\Psi^{\ve}_0 \,, \;\;\;\;\;\;\;\;\;  \mue (t)= e^{-i \f{t}{\ve^2} H^{\ve} }
\ee
for $t>\tau_N$, i.e. for a fixed time $t$ sufficiently large so that all collisions of the test particle with the oscillators have taken place.
In particular we shall perform a perturbative analysis computing the first correction for $\ve \rightarrow 0$ to the free evolution of the system
\be
\mul (t)\Psi^{\ve}_0 \,, \;\;\;\;\;\;\;\;\;      \mul (t)= e^{-i \f{t}{\ve^2} H_0^{\ve} }
\ee

\n
In order to formulate our main result, it is convenient to introduce the following definition.
\begin{definizione}\label{depac}
Let $P_j^{\ve} = P_j^{\ve}(\bs{R},\bs{r}_1,\ldots,\bs{r}_N)$ be the following function
\be
P_j^{\ve}(\bs{R},\bs{r}_1,\ldots,\bs{r}_N)= \sum_{\un} P_{\un ,j}^{\ve} (\bs{R}) \, \varphi^{\ve}_{\un,j} (\bs{r}_j) \prod_{k, k \neq j}  \varphi^{\ve}_{0 ,k} (\bs{r}_k)
\ee
where $P_{\un,j}^{\ve}$ is the wave packet for the test particle given by
\ba\label{Pnj}
&&P_{\un ,j}^{\ve} (\bs{R}) = \f{C_{\un,j}^{\ve}}{\ve^{3/2}}\,  \mathcal A_{\un,j}\! \left(  \f{ \! \bs{R} -\! (\hat{\bs{a}}_j  \! \cdot \!  \bs{R} ) \hat{\bs{a}}_j}{\ve} \!  \right) \, e^{- \f{\left( \hat{\bs{a}}_j \cdot \bs{R} - \mathcal Z^{\ve}_{\un ,j} \right)^2}{2 \ve^2}  
\,+  \,i\, \f{ \mathcal V_{\un}^{\ve}}{\ve^2} \;\hat{\bs{a}}_j \cdot \bs{R} } \\
&&C_{\un,j}^{\ve}=  -  \, \f{ i  8 \pi^{9/4} \mathcal N_{\ve} }{v_0^3 \tau_j^2} \, e^{\f{i}{\ve} |n| \tau_j + i \f{|n|^2 \tau_j}{2 v_0^2}}   \label{Cnj}\\
&&\mathcal A_{\un,j}(\bs{y})   =   e^{-i \f{|\bs{y}|^2}{2 \tau_j}} \left[ \widetilde{V} \cdot \widetilde{(\phi_{\un} \phi_{\uo})} \right] (-\tau_j^{-1} \bs{y} +|n| v_0 \hat{\bs{a}}_j )  \label{cala}\\
&&\mathcal Z^{\ve}_{\un, j} = \ve\, \f{|n| \tau_j}{v_0}\\
&&\mathcal V_{\un}^{\ve}=  v_0 - \ve \,\f{ |n|}{v_0}
\ea
\end{definizione}

\n
We underline that the wave packet $P^{\ve}_{\un,j}$ plays  a crucial role in our analysis. It is written as the product of two different wave packets. The first one is  a function of the two dimensional vector $ \bs{R} -\! (\hat{\bs{a}}_j  \! \cdot \!  \bs{R} ) \hat{\bs{a}}_j$, orthogonal to the direction of the $j$-th oscillator $\hat{\bs{a}}_j$, and it is well concentrated in position and  momentum around the origin for $\ve \rightarrow 0$.  The second one is a one-dimensional gaussian wave packet in the variable $\hat{\bs{a}}_j \cdot \bs{R}$, i.e. a coordinate along the direction of the $j$-th oscillator $\hat{\bs{a}}_j$. For $\ve \rightarrow 0$ such wave packet is well concentrated in position around $\mathcal Z^{\ve}_{\un,j}  $ and in momentum around $\mathcal V_{\un}^{\ve}$. 

\n
With the notation introduced in definition \ref{depac}, we state our main result.
\begin{teorema}\label{main}
Let us assume  $(A), (B)$. Then for any $t >\tau_N$ there exists $C(t)>0$ such that 
\be\label{teo}
\mue (t) \Psi_0^{\ve} \,= \,\mul (t) \Psi_0^{\ve}\, +\, \ve^2 \sum_{j=1}^N \mul (t) P_j^{\ve} \,+\, \mathcal E ^{\ve} (t)
\ee
where
\be\label{er}
\| \mathcal E^{\ve} (t) \| \leq C(t) \, \| \widetilde{V} \|_{W^{4,1}_4} \, \ve^3
\ee
\end{teorema}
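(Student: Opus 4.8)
The plan is to carry out a second order Duhamel expansion in the interaction $\ve^2V^\ve$ and then to extract the wave packets $P_j^\ve$ from the first correction term by a stationary phase analysis, all remaining contributions being absorbed into $\mathcal E^\ve(t)$. Since $H^\ve=H_0^\ve+\ve^2V^\ve$ and $\mue(t)=e^{-i\f{t}{\ve^2}H^\ve}$, the factor $\ve^{-2}$ in the exponent cancels the coupling and the interaction picture identity reads $\mue(t)=\mul(t)-i\int_0^t ds\,\mul(t-s)V^\ve\mue(s)$. Iterating it once on $\mue(s)\Psi_0^\ve$ gives
\be
\mue(t)\Psi_0^\ve=\mul(t)\Psi_0^\ve+\mathcal T_1^\ve(t)+\mathcal T_2^\ve(t),
\ee
\be
\mathcal T_1^\ve(t)=-i\int_0^t ds\,\mul(t-s)V^\ve\mul(s)\Psi_0^\ve,\qquad\mathcal T_2^\ve(t)=-\int_0^t ds\int_0^s d\sigma\,\mul(t-s)V^\ve\mul(s-\sigma)V^\ve\mue(\sigma)\Psi_0^\ve,
\ee
and it then suffices to show $\mathcal T_1^\ve(t)=\ve^2\sum_{j=1}^N\mul(t)P_j^\ve+O(\ve^3)$ and $\|\mathcal T_2^\ve(t)\|\le C(t)\|\widetilde V\|_{W^{4,1}_4}\ve^3$.

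For the first order term, write $V^\ve=\sum_jV_j^\ve$ and use $\mul(t-s)=\mul(t)\mul(-s)$, so that $\mathcal T_1^\ve(t)=\mul(t)\sum_j\bigl(-i\int_0^t ds\,\mul(-s)V_j^\ve\mul(s)\Psi_0^\ve\bigr)$ and it is enough to evaluate, for each oscillator, the first Born term in the interaction picture. The propagator factorizes over the test particle and the oscillators; the $k$-th factor with $k\ne j$ only multiplies $\varphi_{\uo,k}^\ve$ by a phase, while $V_j^\ve$ turns the $j$-th oscillator into a superposition of excited states. Expanding $V_j^\ve\mul(s)\Psi_0^\ve$ on the Hermite basis $\{\varphi_{\un,j}^\ve\}$ of the variable $\bs r_j$, the partial scalar product in $\bs r_j$ produces, after a Fourier transform, the factor $[\widetilde V\cdot\widetilde{(\phi_{\un}\phi_{\uo})}]$ of (\ref{cala}), and the $j$-th oscillator contributes the phase $e^{i\f{s}{\ve^2}(E_{\un}^\ve-E_{\uo}^\ve)}=e^{i\f{s}{\ve}|n|}$ relative to its ground state. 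One is then left, for each $j$ and $\un$, with an oscillatory integral for the test particle — over $S^2$ (from the spherical wave $\psi^\ve$), over the internal momentum, and over the collision time $s\in(0,t)$ — whose phase superposes the $\ve^{-2}$ oscillation of $\psi^\ve$, the dispersion of $h_0^\ve$, and the $\ve^{-1}$ phase of the $j$-th oscillator. For $t>\tau_N$ this phase has, under the chosen scaling, a single nondegenerate stationary point, located at $s=\tau_j$, $\hat{\bs u}=\hat{\bs a}_j$ and the test-particle momentum fixed by quasi-energy balance; the stationary phase formula then reproduces precisely the constants $C_{\un,j}^\ve$, $\mathcal Z_{\un,j}^\ve$, $\mathcal V_{\un}^\ve$, the amplitude $\mathcal A_{\un,j}$ and the Gaussian/plane-wave structure of Definition \ref{depac}, together with the prefactor $\ve^2$, so that $-i\int_0^t ds\,\mul(-s)V_j^\ve\mul(s)\Psi_0^\ve=\ve^2P_j^\ve+O(\ve^3)$. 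The $O(\ve^3)$ remainder comes from integrating by parts at most four times in the non-stationary directions; each step moves a derivative and a weight $\langle\cdot\rangle$ onto $\widetilde V$, which is exactly why the error is governed by $\|\widetilde V\|_{W^{4,1}_4}$, and the resulting sum over $\un$ converges by the decay of $\widetilde V$ together with the Schwartz bounds on $\widetilde{(\phi_{\un}\phi_{\uo})}$.

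For the second order term one expands $V^\ve V^\ve=\sum_{i,j}V_i^\ve V_j^\ve$; removing $\mul(t-s)$ and $\mue(\sigma)$ by unitarity and estimating one factor $V^\ve$ by $\|V\|_\infty\le c\|\widetilde V\|_{L^1}$, each term reduces to a double time integral of the norm of an oscillatory integral in the test-particle variables. When $i\ne j$, the state $V_j^\ve\mul(\sigma)\Psi_0^\ve$, nonzero only for $\sigma$ near $\tau_j$, is a wave packet of size $\ve$ located near $\bs a_j$ with momentum along $\hat{\bs a}_j$; after evolution by $\mul(s-\sigma)$ it travels along the ray $\{t\hat{\bs a}_j:t>0\}$ and, by assumption (B) (the non-collinearity $\bs a_i\cdot\bs a_j\ne|\bs a_i||\bs a_j|$ and the ordering $|\bs a_1|<\dots<|\bs a_N|$), it misses the oscillator in $\bs a_i$ by a distance of order one, so the ``two oscillators excited'' contributions are $O(\ve^\infty)$. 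When $i=j$, the second $V_j^\ve$ acts only while the test particle is still crossing the $j$-th oscillator, i.e. on a time interval $|s-\sigma|=O(\ve)$; together with the $O(\ve)$ duration of the first collision and the $O(\ve)$ norm of $V_j^\ve\mul(\sigma)\Psi_0^\ve$ this yields $\|\mathcal T_2^\ve(t)\|=O(\ve^3)$. Collecting the three estimates gives (\ref{teo})--(\ref{er}), the growth of $C(t)$ in $t$ coming from the time integrations and the dispersive spreading of $\mul$.

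The main obstacle is the stationary phase analysis of $\mathcal T_1^\ve$: one must organize the three oscillatory scales ($\ve^{-2}$ for the momentum of the spherical wave and for the dispersion of $h_0^\ve$, $\ve^{-1}$ for the level spacing of the oscillators) into a single phase, verify that the stationary point is nondegenerate uniformly in $j$ and $\un$, and perform enough integrations by parts to reach order $\ve^3$ while keeping every constant bounded by $\|\widetilde V\|_{W^{4,1}_4}$ and summable over $\un$. This is precisely what the intermediate results announced for sections 4--8 are meant to accomplish.
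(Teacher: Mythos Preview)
Your overall strategy (Duhamel expansion plus stationary phase for the first Born term) matches the paper, but the treatment of the second--order term has a genuine gap. In your $\mathcal T_2^\ve(t)=-\int_0^t\!ds\int_0^s\!d\sigma\,\mul(t-s)V^\ve\mul(s-\sigma)V^\ve\mue(\sigma)\Psi_0^\ve$ the inner propagator is the \emph{full} evolution $\mue(\sigma)$, yet a few lines later you analyze ``the state $V_j^\ve\mul(\sigma)\Psi_0^\ve$'' as an explicit oscillatory integral. That switch from $\mue$ to $\mul$ is unjustified: $\mue(\sigma)\Psi_0^\ve$ is not known in closed form, and ``removing $\mue(\sigma)$ by unitarity'' only gives $\|V_j^\ve\mue(\sigma)\Psi_0^\ve\|$, for which your localization and time--window heuristics are unavailable. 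The paper avoids this by iterating Duhamel once more, so that the second--order object $J^\ve(t)=-\int_0^t\!ds\int_0^s\!d\sigma\,\mul(-s)V^\ve\mul(s)\mul(-\sigma)V^\ve\mul(\sigma)$ contains only free propagators, and the residual third--order piece $-i\int_0^t\!ds\,\mue(t-s)V^\ve\mul(s)J^\ve(s)\Psi_0^\ve$ is bounded trivially by $t\|V\|_{L^\infty}\sup_{s\le t}\|J^\ve(s)\Psi_0^\ve\|$.

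Even after this repair, your estimates of the second--order term are only heuristic. For $i\neq j$ you invoke a propagation picture (``the wave packet misses $\bs a_i$''), but this presupposes exactly the wave--packet structure you are trying to derive; the paper instead shows directly that $|\nabla_{\bs\xi,\bs\eta}\Theta_{kl}|\ge\Delta>0$ uniformly on $S^2$ (this is where assumption~(B) enters) and integrates by parts. For $i=j$ the paper does not use a time--window argument either: it performs the same rescaling around the (now degenerate) stationary manifold and integrates by parts four times in $(\bs\xi,\bs\eta,\bs\xi',\bs\eta')$ to get $\|J_k^\ve(t)\Psi_{0,k}^\ve\|=O(\ve^3)$. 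Finally, you omit the cone decomposition $\Psi_0^\ve=\Psi_{0,j}^\ve+(\Psi_0^\ve-\Psi_{0,j}^\ve)$, which is what allows the paper to separate cleanly the region containing the unique critical point $(\hat{\bs u},s)=(\hat{\bs a}_j,\tau_j)$ from the non--stationary region where four integrations by parts produce the $\ve^3$ and the $\|\widetilde V\|_{W^{4,1}_4}$.
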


\vs
\n
Let us conclude this section with few comments. Theorem \ref{main} provides the required approximate dynamics of the system for $\ve$ small. Using the expressions for  the free propagator $\mathcal U_0 (t)$, the initial state $\Psi_0^{\ve}$ and the functions $P_j^{\ve}$,  formula (\ref{teo}) can be rewritten as
\ba\label{teo1}
&& \left( \mue (t) \Psi_0^{\ve} \right)(\bs{R},\bs{r}_1, \ldots , \bs{r}_N)\nonumber\\
&&= e^{- i \f{t}{\ve^2} N E^{\ve}_{\uo} } \left[ \left(  e^{- i \f{t}{\ve^2} h_0^{\ve} } \psi^{\ve} \right) (\bs{R}) + \ve^2 \sum_{j=1}^N \left( e^{ -i \f{t}{\ve^2} h_0^{\ve} } P_{\uo , j}^{\ve} \right) (\bs{R}) \right] \prod_{k=1}^N \varphi^{\ve}_{\uo ,k} (\bs{r}_k) \nonumber\\
&&+ \; \ve^2 \, e^{-i \f{t}{\ve^2} (N-1) E^{\ve}_{\uo} } \sum_{j=1}^N \sum_{\un \neq \uo} e^{ -i \f{t}{\ve^2} E^{\ve}_{\un} }  \left( e^{-i \f{t}{\ve^2} h_0^{\ve} } P_{\un ,j}^{\ve} \right)\!(\bs{R})\,  \varphi_{\un , j} (\bs{r}_j) \prod_{k, k \neq j} \varphi_{\uo , k}(\bs{r}_k) \nonumber\\
&&+\;  \mathcal E^{\ve}(t)
\ea

\n
In the above formula the approximate wave function of the system has been written as the sum of two terms (or histories). In the first one  all oscillators remain in their ground state and the test particle is described by the free evolution of the spherical wave slightly deformed  by the free evolution of the small wave packets $P_{\uo ,j}^{\ve}$, $j=1,\ldots ,N$ emerging from each oscillator. The second term  is  a sum over $j$, $j=1, \ldots ,N$, where, in each term of the sum, one has only one oscillator in an excited state (say the $j$-th oscillator) and correspondingly the test particle  described by the free evolution of the small  wave packet $P_{\un ,j}^{\ve}$, $\un \neq \uo$, emerging from the excited oscillator. We recall that the  small wave packet  emerging from the $j$-th oscillator is well concentrated in position and momentum. Therefore, under free evolution, it propagates along the direction $\hat{\bs{a}}_j$ with momentum   $\mathcal V^{\ve}_{\un}$. We also notice that, for each $j$,  the wave packet $P_{\uo ,j}^{\ve}$ is produced by an elastic collision between  the test particle (with momentum $v_0$) and the $j$-th oscillator and therefore its momentum is unaffected ($\mathcal V_{\uo}^{\ve}= v_0$). On the other hand, the wave packet $P_{\un ,j}^{\ve}$, $\un \neq \uo$,  is produced by an inelastic collision with energy loss $\Delta E= \ve |n|$. Therefore, by the conservation of energy, its momentum is correctly given by $\mathcal V^{\ve}_{\un}$ at first order in $\ve$.

\n

\n
Finally, in order to  make more transparent the structure of the wave packet emerging from an excited oscillator, we give an explicit computation in a specific case. In particular we consider the case of the oscillator in $\hat{\bo a}_1$ excited to the state labeled by $\un 
$ where, without loss of generality, we choose  $
\hat{\bo a}_1 = (0,0,1)$.    Moreover 
we denote $\bo R=(X,Y,Z)$ and use the shorthand notation $C^{\ve}=C^{\ve}_{\underline{n},1}$, $\mathcal A (X,Y) = \mathcal A_{\underline{n},1} (X,Y,0)$, $\mathcal Z^{\ve} = \mathcal Z^{\ve}_{\underline{n}, 1}$, $\mathcal V^{\ve}= \mathcal V^{\ve}_{\underline{n}}$. Then  the wave packet $P^{\ve}_{\underline{n},1} $ is factorized into a function of the variables $(X,Y)$, depending on the product  $\widetilde{V}\!\cdot \widetilde{ \phi_{\un} \phi_{\uo}}$, and a gaussian  in the variable $Z$ 
\be\label{P1}
P^{\ve}_{\underline{n},1}(X,Y,Z) = \f{C^{\ve} }{\ve^{3/2}} \; \mathcal A(\ve^{-1}X, \ve^{-1} Y) \; e^{- \f{ (Z - \mathcal Z^{\ve} )^2}{2 \ve^2}  \,+\, i \, \f{\mathcal V^{\ve}  }{\ve^2} Z }
\ee
with
\be\label{norP1}
\|P^{\ve}_{\underline{n},1} \|_{L^2}^2 = \sqrt{\pi} \; |C^{\ve}|^2 \int\!\! dX dY \, |\mathcal A (X,Y) |^2
\ee
From (\ref{P1}), (\ref{norP1}) one easily computes mean value and standard deviation for the position. In particular one finds that $\langle X \rangle$, $\langle Y \rangle$, $\langle Z \rangle$, $\Delta X$, $\Delta Y$, $\Delta Z$ are all  $O(\ve)$. In the Fourier space one has
\be\label{FP1}
\widetilde{P}^{\ve}_{\underline{n},1} (K_x, K_y, K_z) = \ve^{3/2} C^{\ve} \widetilde{\mathcal A}(\ve K_x, \ve K_y) \; e^{ - \f{\ve^2}{2} \left( K_z - \mathcal V^{\ve} / \ve^2 \right)^2  - i \, \mathcal Z^{\ve} K_z + \f{i}{\ve^2} \mathcal Z^{\ve} \mathcal V^{\ve} } 
\ee
Therefore for the mean value of the momentum one has $\langle P_x \rangle =O(\ve)$, $\langle P_y \rangle =O(\ve)$, $\langle P_z \rangle = \mathcal V^{\ve} = v_0 +O(\ve)$,  while for the standard deviation one finds that $\Delta P_x$, $\Delta P_y$, $\Delta P_z$ are all $O(\ve)$.

\n
The free evolution of the wave packet  is also factorized into a free evolution of its part dependent on the variables $(X,Y)$ and the free evolution of the gaussian part dependent on $Z$. Therefore, from the above computations, one can conclude that,  for $t$ not too large, the wave packet remains well concentrated in position and momentum and it propagates along the $Z$-axis with momentum approximately given by $v_0$. This means that for $t=\tau_1$ the wave packet is concentrated around the oscillator in $\bo a_1$ and for $t>\tau_1$ it will continues its propagation along the $Z$-axis.


\section{Line of the proof}

\n
The proof of theorem \ref{main} requires some intermediate results. In this section we describe the line of reasoning, we state without proof such intermediate results and we finally conclude with the proof of the theorem. The proof of the intermediate results will be given in sections 4, 5, 6, 7, 8. 
We start with Duhamel's formula to represent the solution of the Schr\"odinger equation
\be
\mue (t) \Psi_0^{\ve} = \mul (t) \Psi_0^{\ve} - i \int_0^t \!\! ds\; \mue (t-s) \,V^{\ve} \, \mul (s) \Psi_0^{\ve}
\ee
Iterating twice  we obtain
\ba\label{repd}
&&\mue (t) \Psi_0^{\ve} =   \mul (t)\bigg[  \Psi_0^{\ve}  +  I^{\ve}(t) \Psi_0^{\ve} \bigg] +  \mathcal R^{\ve}(t) =  \mul (t)\left[  \Psi_0^{\ve}  + \sum_{j=1}^N I_j^{\ve}(t) \Psi_0^{\ve} \right] +  \mathcal R^{\ve}(t)
\ea
where we have denoted
\ba
&&I^{\ve}(t)=  - i \int_0^t \!\! ds\; \mul (-s) \,V^{\ve} \,\mul (s) \label{Ive} \\
&&I_j^{\ve}(t)=  - i \int_0^t \!\! ds\; \mul (-s) \,V_j^{\ve}\, \mul (s) \label{Igei} \\
&&\mathcal R^{\ve} (t)= \mul (t) J^{\ve}(t) \Psi_0^{\ve} -  i \int_0^t \!\!ds\; \mue (t-s) \,V^{\ve} \,\mul (s) J^{\ve}(s) \Psi_0^{\ve}  \label{Rve} \\
&&J^{\ve} (t)=- \int_0^t \!\! ds \! \!\int_0^s \!\! d\sigma\; \mul (-s)\, V^{\ve} \,\mul (s) \mul (-\sigma)\, V^{\ve} \,\mul(\sigma) \label{Jve}
\ea
In order to isolate the dominant term in $I^{\ve}_j(t) \Psi_0^{\ve}$ for $\ve \rightarrow 0$ it is convenient to introduce some further notation.
Let us consider 
the rotation matrix $\mathcal R_j$, $j=1,\ldots ,N$, defined by the condition
\be\label{mr}
\mathcal R_j \hat{\bs{a}}_j = (0,0,1)
\ee
and the angle
\be
\theta_0= \f{1}{2} \min_{j,k, j\neq k} \big\{ \theta_{jk} , \pi \big\}\,, \;\;\;\;\;\; \;\; \theta_{jk}= \cos^{-1} \big( \hat{\bs{a}}_j \cdot \hat{\bs{a}}_k \big)
\ee
Furthermore we introduce the following two portions of the unit sphere $S^2$ around the unit vectors $(0,0,1)$ and $\hat{\bs{a}}_j$ respectively
\ba
&&\mathcal C_0 = \left\{ \hat{\bs{u}} \in S^2 \;|\; \hat{u}_1^2 + \hat{u}_2^2 < \sin^2 \theta_0 \right\}\\
&&\mathcal C_j = \mathcal R_j \mathcal C_0 = \left\{ \hat{\bs{u}} \in S^2 \;|\; \left( \mathcal R_j \hat{\bs{u}}\right)_1^2 + \left( \mathcal R_j \hat{\bs{u}}\right)_2^2 < \sin^2 \theta_0 \right\}
\ea
We notice that $\mathcal C_j \cap \mathcal C_k = \emptyset$, for $j\neq k$; moreover $\hat{\bs{a}}_j \in \mathcal C_j$ and $\hat{\bs{a}}_k \notin \mathcal C_j$ for $j \neq k$. 
Exploiting the above notation, we can define the ''portion around $\hat{\bs{a}}_j$'' of the initial spherical wave
\be\label{psij}
\psi_j^{\varepsilon}(\bs{R})=\frac{\mathcal{N}_{\varepsilon}}{\varepsilon^{5/2}}\, \eta \left(\ve^{-1} \bs{R} \right)
\int_{\mathcal C_j}\!\!  d\hat{\bs{u}}\,e^{\frac{i}{\varepsilon ^2}v_0 \hat{\bs{u}}\cdot\bs{R}}
\ee
and correspondingly
\be\label{PJ}
\Psi_{0,j}^{\varepsilon}(\bs{R},\bs{r}_1,\cdots , \bs{r}_N)=\psi_j^{\varepsilon}(\bs{R})\prod_{k=1}^N\varphi_{\uo,k}^{\varepsilon}(\bs{r}_k)
\ee
The portion of spherical wave defined in (\ref{psij}) has the following property: its classical evolution will hit the oscillator in $\bs{a}_j$ and it will not hit the remaining oscillators in $\bs{a}_k$, for $k \neq j$. 
Taking into account definition (\ref{PJ}), we rewrite (\ref{repd}) as follows
\be\label{repd2}
\mue (t) \Psi^{\ve}_{0}=  \mul (t)\left[  \Psi_0^{\ve}  + \sum_{j=1}^N I_j^{\ve}(t) \Psi_{0,j}^{\ve} \right] +   \sum_{j=1}^N   \mul(t)  I_j^{\ve}(t) \left( \Psi_0^{\ve} -  \Psi_{0,j}^{\ve} \right)   +    \mathcal R^{\ve}(t)
\ee
The problem is now reduced to the analysis of the r.h.s. of (\ref{repd2}) for $\ve \rightarrow 0$. In particular, in order to prove theorem \ref{main} we have to show that  $I_j^{\ve}(t) \Psi_{0,j}^{\ve} = \ve^2 P_j^{\ve} +O(\ve^3)$ and also that the last two terms of (\ref{repd2}) are $O(\ve^3)$.

\n
The first step is to obtain a good representation of the relevant objects  $I_j^{\ve}(t) \Psi_{0,j}^{\ve}$, $\,I_j^{\ve}(t) ( \Psi_{0}^{\ve} - \Psi_{0,j}^{\ve})\,$ and $\,J^{\ve}(t) \Psi_{0}^{\ve}$. This is done in section 4,   where for each object we perform an expansion in  series of the eigenfunctions of the harmonic oscillators and the coefficients of the expansion are written as highly oscillatory integrals. Such representation formulas allow to exploit stationary and non stationary phase methods to characterize the asymptotic behavior of each object  for $\ve \rightarrow 0$ (\cite{f}, \cite{bh}). In the case of $I_j^{\ve}(t) \Psi_{0,j}^{\ve}$, it turns out that the phase in the oscillatory integral has exactly one, non degenerate, critical point in the integration region if $t>\tau_j$. This is the crucial ingredient to prove
\begin{prop}\label{prop2}
Let us assume $(A), (B)$. Then for any $t>\tau_j$  there exists $C^1(t)>0$ such that 
\be\label{deco}
I^{\ve}_j (t) \Psi^{\ve}_{0,j} \,=\,  \ve^2 P_j^{\ve} \,+\,  Q_j^{\ve} (t)
\ee
where $P_j^{\ve}$ is given in definition \ref{depac} and
\be\label{stiQ}
\|Q_j^{\ve} (t)\| \leq C^{1} (t) \, \|\widetilde{V} \|_{W^{4,1}_4} \, \ve^3
\ee
\end{prop}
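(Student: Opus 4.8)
The plan is to start from the definition \eqref{Igei} of $I_j^{\ve}(t)$ and insert the explicit form of $\Psi_{0,j}^{\ve}$, writing $\mul(s)$ as the product of the free propagators of $h_0^\ve$ and of the $h_k^\ve$. Since the oscillators other than the $j$-th stay in their ground state (the potential $V_j^\ve$ only touches the variables $\bo R$ and $\bo r_j$), the action of $\mul(-s)\,V_j^\ve\,\mul(s)$ on $\Psi_{0,j}^\ve$ factorizes, up to the trivial phase $e^{i s \ve^{-2}(N-1)E_{\uo}^\ve}\prod_{k\neq j}\varphi_{\uo,k}^\ve$, into an integral operator in the pair $(\bo R,\bo r_j)$. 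First I would expand the result in the Hermite basis $\{\varphi_{\un,j}^\ve\}_{\un}$ of the $j$-th oscillator; this is exactly the series representation announced for section 4, and I would invoke it here. Each coefficient is then an oscillatory integral over $s\in(0,t)$, over the sphere $\mathcal C_j$ (coming from $\psi_j^\ve$), and over the momentum variables dual to $\bo R$ and $\bo r_j$, with a phase of size $\ve^{-2}$ whose leading part is $v_0\hat{\bo u}\cdot\bo R' - s(\tfrac12|\bo k|^2 + |n|) + (\text{interaction terms})$, together with a slowly varying amplitude built out of $\widetilde V$ and $\widetilde{(\phi_{\un}\phi_{\uo})}$.

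Next I would carry out the stationary phase analysis in the variables $(s,\hat{\bo u},\text{momenta})$. The key geometric fact — to be taken from section 4 — is that for $t>\tau_j$ the phase has exactly one critical point in the integration region, and it is non-degenerate; the critical point corresponds to the classical collision at time $\tau_j$ with the oscillator sitting at $\bo a_j$, which is why the Gaussian in $\hat{\bo a}_j\cdot\bo R$ centered at $\mathcal Z_{\un,j}^\ve$ and the transverse profile $\mathcal A_{\un,j}$ appear. Applying the stationary phase formula with remainder (as in \cite{f}, \cite{bh}) to each Hermite coefficient produces the leading term $\ve^2 P_{\un,j}^\ve(\bo R)$ with the constants $C_{\un,j}^\ve$, $\mathcal Z_{\un,j}^\ve$, $\mathcal V_{\un}^\ve$ exactly as in Definition \ref{depac}, plus an error. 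For the error one integrates by parts repeatedly in the non-stationary directions; each integration by parts gains a factor $\ve^2$ and costs derivatives of the amplitude, i.e. derivatives and weights of $\widetilde V$ — this is precisely where assumption (A), membership of $\widetilde V$ in $W^{4,1}_4(\erre^3)$, is used, and where the bound is forced to be proportional to $\|\widetilde V\|_{W^{4,1}_4}$. Assumption (B), namely $|\bo a_1|<\dots<|\bo a_N|$ and $\bo a_i\cdot\bo a_j\neq|\bo a_i||\bo a_j|$, guarantees that on $\mathcal C_j$ the phase is genuinely non-stationary away from the single critical point and that the critical point is isolated and non-degenerate, so the stationary-phase estimates are uniform.

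Finally I would collect the Hermite series back together. The delicate point — and the one I expect to be the main obstacle — is summability in $\un$: the error estimate for each coefficient must decay fast enough in $|n|$ that $\sum_{\un}$ converges and still produces an overall $O(\ve^3)$ bound, uniformly after reassembling the oscillator wave functions $\varphi_{\un,j}^\ve$. This requires controlling how the amplitude $\widetilde{(\phi_{\un}\phi_{\uo})}$ and its derivatives grow with $|n|$, and how the shifted argument $|n|v_0\hat{\bo a}_j$ in \eqref{cala} interacts with the decay of $\widetilde V$ and its weights $\langle\cdot\rangle^4$; the polynomial growth of Hermite-function overlaps must be beaten by the weight $\langle\cdot\rangle^4$ in the $W^{4,1}_4$ norm and by the oscillatory gain. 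Once this summation is under control, one sets $Q_j^\ve(t) = I_j^\ve(t)\Psi_{0,j}^\ve - \ve^2 P_j^\ve$, and the accumulated estimates give $\|Q_j^\ve(t)\|\le C^1(t)\,\|\widetilde V\|_{W^{4,1}_4}\,\ve^3$, with $C^1(t)$ depending on $t$ through the length of the $s$-integration and through inverse powers of $t-\tau_j$ coming from the stationary-phase constants. $\square$
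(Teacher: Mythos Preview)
Your outline matches the paper's strategy at the coarse level --- expand in the Hermite basis of the $j$-th oscillator, reduce to oscillatory integrals with a single non-degenerate critical point when $t>\tau_j$, extract $\ve^2 P_{\un,j}^\ve$ as the leading term, and estimate the remainder. Two points, however, deserve comment.

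First, the paper does \emph{not} invoke a black-box stationary-phase theorem. Instead it observes that the phase $\Phi^1_{\un,j}$ is \emph{linear} in $\bo\xi$, writes $\Phi^1_{\un,j}=\bo A_j(\mu,\nu,s)\cdot\bo\xi+B_{\un,j}(\mu,\nu,s;\bo x)$, and then performs an affine change of variables $(\mu,\nu,s)=L_\ve\bo z$ that blows up around the critical point $\bo z^c=(0,0,\tau_j)$. After Taylor-expanding $\bo A_j(L_\ve\bo z)$ and $B_{\un,j}(L_\ve\bo z;\bo x)$ in $\ve$, the $\bo\xi$-integral at leading order is just a Fourier inversion, which produces $P_{\un,j}^\ve$ directly. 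The remainder $Q_{\un,j}^\ve$ is then written explicitly as an integral over $\bo z\in\Omega_\ve$ (or its complement) of a Fourier-type integral in $\bo\xi$. This is more elementary than a general stationary-phase expansion and gives the error in a form tailored to the next step. (Incidentally, each integration by parts here gains one power of $\ve$, not $\ve^2$: the phase in the rescaled variables is $e^{i\Phi/\ve}$.)

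Second --- and this is where your proposal has a genuine gap --- the summation over $\un$ is handled in the paper in a way quite different from what you sketch. You propose to control each $Q_{\un,j}^\ve$ individually and then argue that the polynomial growth of $\widetilde{\phi_{\un}\phi_{\uo}}$ in $|n|$ is beaten by the weights $\langle\cdot\rangle^4$ in the $W^{4,1}_4$ norm. This is delicate and, as stated, not obviously sufficient: the remainder $Q_{\un,j}^\ve$ does not carry the clean shift $|n|v_0\hat{\bo a}_j$ that would let the weight do the work. The paper avoids this entirely by summing over $\un$ \emph{before} doing any estimation. Writing $\|Q_j^\ve\|^2=\ve^3\sum_{\un}\int|Q_{\un,j}^\ve|^2$, one expands the square and is left with $\sum_{\un}g_{\un,\uo}(\bo\xi)\,\overline{g_{\un,\uo}}(\bo\xi')$, which is computed in closed form via the Mehler kernel of the harmonic oscillator:
\[
\sum_{\un}g_{\un,\uo}(\bo\xi)\,\overline{g_{\un,\uo}}(\bo\xi')=\frac{1}{(2\pi)^3}\,\widetilde V(\bo\xi)\,\overline{\widetilde V}(\bo\xi')\,e^{-|\bo\xi-\bo\xi'|^2/4}.
\]
After this resummation the Hermite index has disappeared, and one is left with a double $\bo\xi,\bo\xi'$ integral to which four integrations by parts (using $e^{i\bo z\cdot\bo\xi}=|\bo z|^{-4}(\Delta_{\bo\xi})^2 e^{i\bo z\cdot\bo\xi}$) apply cleanly, producing the $\|\widetilde V\|_{W^{4,1}_4}$ factor and the required $\ve^3$. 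This ``sum first, estimate later'' device is the key technical idea that your plan is missing.
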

\n
We shall prove  decomposition (\ref{deco}), with an explicit expression for $Q_j^{\ve} (t)$, in section 5, while the estimate (\ref{stiQ}) is proved in section 6.

\n
For the estimate of $\,I_j^{\ve}(t) ( \Psi_{0}^{\ve} - \Psi_{0,j}^{\ve})\,$ one exploits the fact that the corresponding phase in the oscillatory integral  does not have critical points in the integration region. Therefore, by non stationary phase methods, in section 7 we prove the following proposition.

\begin{prop}\label{prop3} Let us assume $\widetilde{V} \!  \in  W^{k,1}_k (\erre^3)$, $k \in \enne$, and $(B)$. Then for any $t>0$ there exists $C^2_k (t)>0$  such that
\be\label{Ijns}
 \|  I_j^{\ve}(t) \left( \Psi_0^{\ve} -  \Psi_{0,j}^{\ve} \right) \| \leq C^{2}_k (t) \,\|\widetilde{V} \|_{W^{k,1}_k} \,  \ve^{k-1}
 \ee
 \end{prop}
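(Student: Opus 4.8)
The plan is to write $I_j^{\ve}(t)(\Psi_0^{\ve}-\Psi_{0,j}^{\ve})$ as an explicit superposition of highly oscillatory integrals and show that, since the excluded portion of the sphere $S^2\setminus\mathcal C_j$ never produces a classical trajectory hitting the oscillator in $\bs a_j$, the phase has no stationary point in the integration region; repeated integration by parts then gains one power of $\ve$ per step. First I would recall from section 4 the series representation of $I_j^{\ve}(t)\Psi_{0,j}^{\ve}$ (and analogously of $I_j^{\ve}(t)\Psi_0^{\ve}$) as an expansion over $\un\in\enne^3$ of products $g_{\un,j}^{\ve}(\bs R)\,\varphi_{\un,j}^{\ve}(\bs r_j)\prod_{k\neq j}\varphi_{\uo,k}^{\ve}(\bs r_k)$, where each coefficient $g_{\un,j}^{\ve}$ is an integral over $S^2\setminus\mathcal C_j$ (or over $S^2$) and over the time variable $s\in(0,t)$ and an auxiliary momentum variable, with a phase of the form $\ve^{-2}\Phi(\ldots)$. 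Because $\Psi_0^{\ve}-\Psi_{0,j}^{\ve}$ restricts the angular integration precisely to $S^2\setminus\mathcal C_j$, I only need to treat that region.

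The key step is the stationary phase analysis: the gradient of $\Phi$ with respect to $(s,\hat{\bs u},\text{momentum})$ vanishes only when the classical straight-line trajectory leaving the origin with velocity $v_0\hat{\bs u}$ passes through $\bs a_j$ at time $s$, i.e. only when $\hat{\bs u}=\hat{\bs a}_j$. Since $\hat{\bs a}_j\notin S^2\setminus\mathcal C_j$ and $S^2\setminus\mathcal C_j$ is a fixed region (independent of $\ve$) at positive angular distance from $\hat{\bs a}_j$, the phase gradient is bounded below uniformly on the integration domain. I would then introduce the standard non-stationary-phase differential operator $L=\frac{\ve^2}{i}\,\frac{\overline{\nabla\Phi}\cdot\nabla}{|\nabla\Phi|^2}$, which satisfies $L e^{i\Phi/\ve^2}=e^{i\Phi/\ve^2}$, and integrate by parts $k$ times. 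Each application of $L^{t}$ (the transpose) produces a factor $\ve^2$ and moves derivatives onto the amplitude; the amplitude involves $\widetilde V$ and the Hermite-function factors $\widetilde{(\phi_{\un}\phi_{\uo})}$, so after $k$ integrations by parts one is left with derivatives of $\widetilde V$ up to order $k$ together with polynomial weights up to order $k$ — exactly the $W^{k,1}_k$ norm — and with at most $k$ extra factors of $|n|$ or powers of $|\bs a_j|$ from differentiating the phase. Assumption (B), in particular $|\bs a_1|<\cdots<|\bs a_N|$ and the non-collinearity $\bs a_i\cdot\bs a_j\neq|\bs a_i||\bs a_j|$, is what guarantees the lower bound on $|\nabla\Phi|$ is uniform in $j$ and does not degenerate, and that the denominators introduced by $L$ stay under control. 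Collecting the bound, summing the resulting geometric-type series over $\un$ (the Hermite factors and the $1/|n|$-type gains from the oscillator normalization ensure convergence), and integrating the harmless $s$-integral over $(0,t)$, one obtains $\|I_j^{\ve}(t)(\Psi_0^{\ve}-\Psi_{0,j}^{\ve})\|\le C^2_k(t)\,\|\widetilde V\|_{W^{k,1}_k}\,\ve^{k-1}$; the loss of one power relative to $\ve^{k}$ comes from the $\ve^{-5/2}$ (resp.\ $\ve^{-3/2}$) prefactors in $\psi^{\ve}$ and $\varphi^{\ve}_{\un,j}$ together with the Jacobians of the rescalings, as in the bookkeeping already used for Proposition \ref{prop2}.

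The main obstacle I expect is not the integration-by-parts scheme itself, which is routine once the representation is in hand, but rather (i) verifying the uniform-in-$(s,\hat{\bs u},\un)$ lower bound on $|\nabla\Phi|$ on $S^2\setminus\mathcal C_j$ — one must check that the excited states, which shift the effective momentum to $v_0\hat{\bs a}_j - \ve|n| v_0^{-1}\hat{\bs a}_j$ or introduce the $|n| v_0\hat{\bs a}_j$ translation seen in $\mathcal A_{\un,j}$, do not move the critical point into the integration region, and this is where the $O(\ve)$ corrections and the geometry of $\mathcal C_j$ must be reconciled — and (ii) controlling the $\un$-sum after $k$ integrations by parts, since each differentiation of the amplitude can bring down factors growing in $|n|$, and one needs the rapid decay of $\widetilde{(\phi_{\un}\phi_{\uo})}$ (uniform Hermite-function estimates in weighted $L^1$) to beat that growth. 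Both points are technical but standard given assumption (A) on $\widetilde V$ and the Schwartz-class smoothing; I would handle (i) first, as it is the conceptual heart of why the off-axis contribution is negligible, and then (ii) as a convergence bookkeeping at the end.
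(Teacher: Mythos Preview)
Your broad strategy --- non-stationary phase, because no classical trajectory from the origin along $\hat{\bs u}\in S^2\setminus\mathcal C_j$ can reach $\bs a_j$ --- is correct, but your execution plan differs from the paper's in two ways, and the second is where you have a genuine gap.

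First, the paper does not integrate by parts in all of $(s,\hat{\bs u},\bs\xi)$; it uses only the $\bs\xi$-gradient. From Proposition~4.1, $\Phi_{\un,j}=-\bs\xi\cdot\bs a_j+v_0\hat{\bs u}\cdot(\bs x+s\bs\xi)+|n|s$ is \emph{linear} in $\bs\xi$, so $\nabla_{\bs\xi}\Phi_{\un,j}=v_0(s\hat{\bs u}-\tau_j\hat{\bs a}_j)$ is independent of both $\bs\xi$ and $\un$. The bound $|\nabla_{\bs\xi}\Phi_{\un,j}|\ge v_0\tau_1\sin\theta_0$ for $\hat{\bs u}\in S^2\setminus\mathcal C_j$ is then elementary trigonometry using~(B). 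Linearity means the non-stationary-phase operator has constant coefficients (no derivatives of the vector field, no boundary terms in $s$), which is much simpler than your $L$ in $(s,\hat{\bs u},\bs\xi)$. In particular your concern (i) disappears: the $|n|s$ term has zero $\bs\xi$-gradient, so the excited states do not move the relevant critical point at all.

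Second --- and this is the real issue --- the paper does \emph{not} estimate term by term in $\un$ and then sum. It expands $\|I_j^{\ve}(t)(\Psi_0^{\ve}-\Psi_{0,j}^{\ve})\|^2$ as a double oscillatory integral in $(\bs\xi,\bs\xi',s,s',\hat{\bs u},\hat{\bs u}')$ and \emph{resums} the $\un$-series in closed form via the oscillator propagator:
\[
\sum_{\un}e^{-i|n|(s-s')/\ve}\,g_{\un,\uo}(\bs\xi)\,\overline{g_{\un,\uo}}(\bs\xi')
=\frac{e^{\frac{3i}{2\ve}(s-s')}}{(2\pi)^3}\,\widetilde V(\bs\xi)\overline{\widetilde V}(\bs\xi')\,\zeta_{(s-s')/\ve}(\bs\xi,\bs\xi'),
\]
with $\zeta_t(\bs\xi,\bs\xi')=\langle\phi_{\uo},e^{i\bs\xi'\cdot(\cdot)}U(t)e^{-i\bs\xi\cdot(\cdot)}\phi_{\uo}\rangle$. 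Lemma~7.1 then controls all $\bs\xi,\bs\xi'$-derivatives of $\zeta$ by $(\langle\bs\xi\rangle+\langle\bs\xi'\rangle)^{|\alpha|+|\beta|}$ uniformly in $t$, so $k$ integrations by parts in $\bs\xi$ and $k$ in $\bs\xi'$ land directly on $\|\widetilde V\|_{W^{k,1}_k}^2$. Your proposed mechanism --- ``$1/|n|$-type gains from the oscillator normalization'' giving a ``geometric-type series'' --- is simply wrong: the $\phi_{\un}$ are $L^2$-normalized and carry no such decay. The convergence of the $\un$-sum after differentiation is exactly what the propagator resummation delivers; if you insist on working termwise you would need a Parseval argument on $\sum_{\un}|\langle\phi_{\un},x^{\underline\alpha}e^{-i\bs\xi\cdot x}\phi_{\uo}\rangle|^2=\|x^{\underline\alpha}\phi_{\uo}\|^2$, which is essentially re-deriving the lemma by hand. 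Replace your step (ii) by this resummation and the proof goes through.
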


\n
The estimate of the rest $\mathcal R^{\ve} (t)$ is reduced to the estimate of $J^{\ve} (t) \Psi^{\ve}_0$.  Exploiting the representation given in section 4, in section 8 we prove

\begin{prop}\label{prop4} Let us assume $(A), (B)$. Then for  any $t>0$  there exists $C^3(t)>0$ such that 
\be\label{stR}
\|\mathcal R^{\ve}(t) \| \leq \left(1 + t \|V\|_{L^{\infty}} \right) \sup_{s \leq t} \|J^{\ve}(s) \Psi_0^{\ve} \|  \leq  C^3 (t)\, \left(1 + t \|V\|_{L^{\infty}} \right)    \|\widetilde{V} \|_{W^{4,1}_4} \, \ve^3
\ee
\end{prop}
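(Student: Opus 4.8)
The plan is to split the argument into two independent steps, matching the two inequalities in the statement, and to reduce everything to a bound on $\sup_{s\le t}\|J^\ve(s)\Psi_0^\ve\|$.

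\textbf{Step 1: the Duhamel estimate.} Starting from the definition of $\mathcal R^\ve(t)$ in (\ref{Rve}), I would apply the triangle inequality to its two summands. Since $\mul(t)$ and $\mue(t-s)$ are unitary on $L^2(\erre^{3(N+1)})$, the first summand has norm $\|J^\ve(t)\Psi_0^\ve\|$ and the integrand of the second has norm $\|V^\ve\mul(s)J^\ve(s)\Psi_0^\ve\|$. Since $V^\ve$ acts as multiplication by the bounded function $\sum_{j=1}^N V(\ve^{-1}(\bs R-\bs r_j))$, one has $\|V^\ve\varphi\|\le\|V\|_{L^\infty}\|\varphi\|$ up to a harmless factor $N$, and $\mul(s)$ is again unitary. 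Bounding $\|J^\ve(s)\Psi_0^\ve\|$ by its supremum over $s\in[0,t]$ and integrating in $s$ then gives $\|\mathcal R^\ve(t)\|\le(1+t\|V\|_{L^\infty})\sup_{s\le t}\|J^\ve(s)\Psi_0^\ve\|$. The proposition thus reduces to
\[
\sup_{s\le t}\|J^\ve(s)\Psi_0^\ve\|\;\le\;C^3(t)\,\|\widetilde V\|_{W^{4,1}_4}\,\ve^3 .
\]

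\textbf{Step 2: the bound on $J^\ve(s)\Psi_0^\ve$.} Here I would invoke the representation derived in Section 4. Writing out $\mul(\sigma)\Psi_0^\ve$ explicitly, applying $V^\ve$, decomposing each harmonic oscillator on the Hermite basis $\{\varphi^\ve_{\un,j}\}$ to absorb the propagators $\mul(\pm\cdot)$, and applying $V^\ve$ once more, one obtains for $J^\ve(s)\Psi_0^\ve$ (cf. (\ref{Jve})) a double series over oscillator labels $j,k$ and Hermite multi-indices $\un,\um$, whose coefficients are multiple oscillatory integrals: over $S^2$ (from the isotropic momentum of the spherical wave), over two intermediate momentum variables (one per factor of $V^\ve$), and over the times $\sigma\in[0,s]$, $s\in[0,t]$, with a phase of scale $\ve^{-2}$, an amplitude built from $\widetilde V$ and from Fourier transforms of products of Hermite functions, and an explicit $\ve$-dependent prefactor. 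By orthonormality of the $\varphi^\ve$'s, $\|J^\ve(s)\Psi_0^\ve\|^2$ equals the sum over $j,k,\un,\um$ of the squared $L^2(\erre^3)$-norms of these coefficients (the few genuine cross terms, produced by coincidences among the oscillator labels, are of the same size and handled the same way, using the strict ordering $|\bs a_1|<\dots<|\bs a_N|$ from (B)). The key point, and what distinguishes $J^\ve$ from the first-order term, is that the phase of the angular/momentum integral has \emph{no} critical point in the relevant region: a ``double collision'' of the isotropic source successively with oscillators $j$ and $k$ would force the two collision points, hence $\bs a_j$ and $\bs a_k$, to be collinear with the origin, which is exactly ruled out by assumption (B) (and for $j=k$ by the ordering of the $|\bs a_j|$). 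Repeated integration by parts in the angular and momentum variables is therefore legitimate and each step improves the power of $\ve$; carrying it out to the order permitted by $\widetilde V\in W^{4,1}_4$, four derivatives, and using the weight $\langle\cdot\rangle^4$ to make the momentum integrals absolutely convergent, one gets for each fixed $(j,k,\un,\um)$ a bound of the form $C(t)\,\ve^3\,\rho_{\un}\rho_{\um}\,\|\widetilde V\|_{W^{4,1}_4}$ with $\{\rho_{\un}\}$ a summable sequence (the polynomial decay in $|n|,|m|$ of the Hermite--Fourier overlaps being tamed by the four derivatives). Summing the squares over $j,k,\un,\um$, taking the square root, and combining with Step 1 completes the proof.

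\textbf{Main obstacle.} Everything delicate sits in Step 2. One must produce, \emph{uniformly} in all the summation indices $(j,k,\un,\um)$ and in $s\le t$, a quantitative lower bound for the gradient of the phase of the double oscillatory integral, so that the integrations by parts are genuinely controlled, and one must verify that four derivatives suffice simultaneously for the $\ve^3$ gain and for the convergence of the multi-index series. Extra care is needed near the diagonal $\sigma=s$ and near the endpoints of the time integrations, where the stationary-phase structure can degenerate and where the boundary terms produced by the integrations by parts must be estimated on their own; this is precisely the analysis carried out in Section 8.
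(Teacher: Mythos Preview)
Your Step 1 is fine and matches the paper's (implicit) reduction. The problem is in Step 2, and it is a genuine gap.

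You assert that the second-order phase has \emph{no} critical point, and that for the diagonal piece $j=k$ this follows from ``the ordering of the $|\bs a_j|$''. This is false. For the phase $\Theta^{k}_{\un,\um}$ in (\ref{thknm}) (the case where both potential insertions involve the \emph{same} oscillator), one has
\[
\nabla_{\bo\xi}\Theta^{k}_{\un,\um}=v_0\sigma\hat{\bo u}-\bo a_k,\qquad \nabla_{\bo\eta}\Theta^{k}_{\un,\um}=v_0 s\hat{\bo u}-\bo a_k,
\]
and both vanish at $\hat{\bo u}=\hat{\bo a}_k$, $s=\sigma=\tau_k$. The ordering $|\bo a_1|<\cdots<|\bo a_N|$ says nothing about this: it relates different oscillators, not a double interaction with a single one. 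So on $\mathcal C_k$ the gradient has no uniform lower bound, your repeated integrations by parts in the momentum variables are not justified, and the argument collapses precisely on the diagonal term $J^\ve_k(t)\Psi^\ve_{0,k}$.

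The paper's Section~8 treats the three cases separately. For $J^\ve_{k,l}$ with $k\neq l$, assumption (B) indeed prevents $\hat{\bo a}_k=\hat{\bo a}_l$, and the $(\bo\xi,\bo\eta)$-gradient of $\Theta_{kl}$ is bounded below on all of $S^2$ (see (\ref{stigrad1})); non-stationary phase in $(\bo\xi,\bo\eta,\bo\xi',\bo\eta')$ gives $\|J^\ve_{k,l}\Psi^\ve_0\|^2\le c\,\ve^{2d-2}$. For $J^\ve_k(t)(\Psi^\ve_0-\Psi^\ve_{0,k})$ the angular integral avoids $\mathcal C_k$ and the same non-stationary argument works. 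But for the remaining piece $J^\ve_k(t)\Psi^\ve_{0,k}$ the paper performs a \emph{stationary-phase} change of variables centered at $s=\sigma=\tau_k$, $\hat{\bo u}=\hat{\bo a}_k$ (see (\ref{chco1})), exactly as in Section~5; the $\ve^4$ Jacobian per copy yields the prefactor $\ve^6$ in (\ref{j11}), and the transformed integrals are shown to be bounded by integrating by parts in the new variables. Your outline is missing this entire mechanism.

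A secondary issue: you propose to control the $(\un,\um)$-series by bounding each term separately and summing $\rho_{\un}\rho_{\um}$. The paper does not do this; it resums the Hermite series \emph{before} estimating, via the propagator identity and the function $\zeta$ of Lemma~\ref{lemmazeta} (see (\ref{sumn1t0}), (\ref{summn2})). This is what makes the bounds uniform in $\un,\um$ without having to track the $|n|$-dependence of the individual phases $\Theta^{k}_{\un,\um}$.
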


\n
Collecting together the results stated in the above propositions 
we are in position to prove our main result.

\vs

\n
{\em Proof of theorem \ref{main}}. 
The proof follows from  formula (\ref{repd2}), proposition \ref{prop2},    estimate (\ref{Ijns}) for $k=4$ and estimate (\ref{stR}).
\hfill $\Box$


\section{Representation formulas}

In this section we derive useful representation formulas for the relevant objects  $I_j^{\ve}(t) \Psi_{0,j}^{\ve}$, $\,I_j^{\ve}(t) ( \Psi_{0}^{\ve} - \Psi_{0,j}^{\ve})\,$ and $\,J^{\ve}(t) \Psi_{0}^{\ve}$.
In particular we perform an expansion in eigenfunctions of the harmonic oscillators and we represent the corresponding Fourier coefficients as highly oscillatory integrals. Indeed, in terms of the rescaled variable
\be
\bo x=\frac{\bo R}{\ve}
\ee
we have
\begin{prop}\label{prop1}
For any $\ve> 0$, $\un\in\mathbb{N}^3$  and $j=1, \ldots , N$ the following representation formulas hold
\ba\label{Ij}
&&\left( I^{\ve}_j (t) \Psi^{\ve}_{0,j} \right)(\ve \bo x,  \bo r_1,\ldots, \bo r_N)
=   \sum_{\un}  \mathcal I^{\ve}_{j, \un}(t,\bo x)  \varphi^{\ve}_{\un,j}(\bo r_j)  \prod_{k, k\neq j}\varphi^{\ve}_{\uo,k} (\bo r_k),
\ea
where
\ba\label{cIj}
&&\mathcal I^{\ve}_{j, \un}(t,\bo x) =  - \f{i\, \mathcal N_{\ve}}{\ve^{5/2}}  \int_0^t \!\!\!ds \!\! \int\!\!d\bo \xi \!\! \int_{\mathcal C_j} \!\!\! d\hat{\bo u} \; F_{\un} (\bo \xi,s;\bo x) \; e^{ \f{i}{\ve} \Phi_{\un,j} (\bo \xi, s, \hat{\bo u};\bo x)}\\
&& F_{\un} (\bo \xi,s; \bo x) = e^{i \bo \xi \cdot \bo x +i \f{s}{2} \bo \xi^2} g_{\un,\uo}(\bo \xi)\, f(\bo x+s \,\bo \xi) \label{Fn}   \\
&&g_{\un, \um}(\bs{\xi}) = \widetilde{\phi_{\un} \phi_{\um}} (\bs{\xi})  \tilde{V} (\bs{\xi}) \label{gnm}\\
&&\Phi_{\un,j} (\bo \xi, s, \hat{\bo u};x)= - \bo \xi \cdot \bo a_j + v_0\, \hat{\bo u} \cdot (\bo x+ s\, \bo \xi) +|n| s  \label{phinj}
\ea
and
\ba\label{Ijnon-staz}
&&\left( I^{\ve}_j (t) (\Psi^{\ve}_0-\Psi^{\ve}_{0,j})  \right)(\ve \bo x,  \bo r_1,\ldots, \bo r_N)
=   \sum_{\un}  \mathcal T^{\ve}_{j, \un}(t,\bo x)  \varphi^{\ve}_{\un,j}(\bo r_j)  \prod_{i, i\neq j}\varphi^{\ve}_{\uo,i} (\bo r_i),
\ea
where $\mathcal T^{\ve}_{j, \un}$  differs from $\mathcal I^{\ve}_{j, \un}$ only for the integration region, i.e. $\mathcal C_j$ is replaced by $S^2\backslash \mathcal C_j$.
\end{prop}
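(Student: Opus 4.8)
The plan is to resolve the free propagator into its one-particle factors, expand the test-particle/$j$-th-oscillator part in the Hermite basis $\{\varphi^{\ve}_{\un,j}\}$, and then evaluate every remaining free evolution explicitly in Fourier variables. \textbf{Step 1.} Since $H^{\ve}_0=h_0^{\ve}+\sum_k h_k^{\ve}$ acts on distinct variables, $\mul(s)$ factorizes; applied to $\Psi^{\ve}_{0,j}=\psi_j^{\ve}\otimes\bigotimes_k\varphi^{\ve}_{\uo,k}$ each oscillator stays in its ground state and only picks up the phase $e^{-i\frac{s}{\ve^2}E^{\ve}_{\uo}}$, while $\psi_j^{\ve}$ evolves under $h_0^{\ve}$. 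Applying the multiplication operator $V_j^{\ve}$, which couples only $\bo R$ and $\bo r_j$, and expanding the resulting function of $\bo r_j$ in $\{\varphi^{\ve}_{\un,j}\}$ produces the coefficient $c^{\ve}_{\un}(\bo R)=\langle\varphi^{\ve}_{\un,j},\,V(\ve^{-1}(\bo R-\cdot))\,\varphi^{\ve}_{\uo,j}\rangle$. A further application of $\mul(-s)$ restores $\varphi^{\ve}_{\un,j}$ in the $j$-th slot and $\varphi^{\ve}_{\uo,k}$ in the others, all oscillator phases combining --- because $E^{\ve}_{\un}-E^{\ve}_{\uo}=\ve|n|$ --- into the single factor $e^{\frac{i}{\ve}|n|s}$. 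After integrating in $s$ this already gives the expansion over $\un$ with
\[
\mathcal I^{\ve}_{j,\un}(t,\bo x)= - i\int_0^t\!\!ds\; e^{\frac{i}{\ve}|n|s}\,\Big[e^{i\frac{s}{\ve^2}h_0^{\ve}}\big(c^{\ve}_{\un}\cdot e^{-i\frac{s}{\ve^2}h_0^{\ve}}\psi_j^{\ve}\big)\Big](\ve\bo x).
\]

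\textbf{Step 2.} Rescaling $\bo r_j=\bo a_j+\ve\bo y$ in the matrix element and writing $V$ through its Fourier transform gives $c^{\ve}_{\un}(\bo R)=\int d\bo\xi\, e^{\frac{i}{\ve}\bo\xi\cdot(\bo R-\bo a_j)}g_{\un,\uo}(\bo\xi)$ with $g_{\un,\uo}$ as in (\ref{gnm}); the $\ve$-independence of the Hermite functions after rescaling is precisely what makes $g_{\un,\uo}$ $\ve$-independent. \textbf{Step 3.} Because $h_0^{\ve}=-\frac{\ve^4}{2}\Delta_{\bo R}$ and $\bo R=\ve\bo x$, one has $\frac{s}{\ve^2}h_0^{\ve}=-\frac{s}{2}\Delta_{\bo x}$, so in the rescaled variable the test-particle free dynamics becomes the $\ve$-independent group $e^{\pm i\frac{s}{2}\Delta_{\bo x}}$ acting on the semiclassical datum $\psi_j^{\ve}(\ve\,\cdot)$. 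I then use twice the Galilean-covariance identity for the free Schr\"odinger group, $e^{i\frac{s}{2}\Delta}\big[e^{i\bo q\cdot\bo y}h(\bo y)\big]=e^{i\bo q\cdot\bo y}e^{-i\frac{s}{2}|\bo q|^2}\big(e^{i\frac{s}{2}\Delta}h\big)(\bo y-s\bo q)$ and its adjoint: first with $\bo q=\ve^{-1}v_0\hat{\bo u}$, $h=f$ to evolve $\psi_j^{\ve}(\ve\,\cdot)$, then with $\bo q=\bo\xi+\ve^{-1}v_0\hat{\bo u}$, after multiplying by $c^{\ve}_{\un}(\ve\,\cdot)$, to apply the outer propagator. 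The two fast phases $e^{\pm i\frac{sv_0^2}{2\ve^2}}$ cancel, the translations telescope so that $f$ is evaluated exactly at $\bo x+s\bo\xi$, and collecting the surviving phases yields $F_{\un}$ of (\ref{Fn}) and the phase $\Phi_{\un,j}$ of (\ref{phinj}); inserting this into the formula of Step 1 gives (\ref{Ij})--(\ref{phinj}).

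\textbf{Step 4.} For the second representation, note that $\Psi^{\ve}_0-\Psi^{\ve}_{0,j}$ has exactly the form (\ref{PJ}) with $\psi_j^{\ve}$ replaced by $\psi^{\ve}-\psi_j^{\ve}$, i.e.\ with the $\hat{\bo u}$-integral extended over $S^2\backslash\mathcal C_j$ instead of $\mathcal C_j$; hence Steps 1--3 apply verbatim and give (\ref{Ijnon-staz}) with $\mathcal T^{\ve}_{j,\un}$ differing from $\mathcal I^{\ve}_{j,\un}$ only in the domain of the $\hat{\bo u}$-integration. All the manipulations above --- interchanging $\int_0^t ds$, the Hermite expansion, the operators $e^{\pm i\frac{s}{\ve^2}h_0^{\ve}}$, and the $\bo\xi$- and $\hat{\bo u}$-integrations --- are legitimate because, under assumption (A), $g_{\un,\uo}=\widetilde{\phi_{\un}\phi_{\uo}}\,\widetilde V\in L^1(\erre^3)$ and $f\in\mathcal S(\erre^3)$, so that $|F_{\un}(\bo\xi,s;\bo x)|\le\|f\|_{L^\infty}|g_{\un,\uo}(\bo\xi)|$ makes the final triple integral absolutely convergent; together with $I^{\ve}_j(t)\Psi^{\ve}_{0,j}\in L^2(\erre^{3(N+1)})$ this also gives convergence of the series in $\un$ and identifies $\mathcal I^{\ve}_{j,\un}$ with the corresponding Fourier coefficient.

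I do not expect a genuine conceptual obstacle: the computation is entirely explicit. The delicate points are organisational --- keeping track of the many phases so that the oscillator energies collapse to $e^{i|n|s/\ve}$ and the $v_0^2/\ve^2$ contributions cancel, and performing the double Galilean shift correctly so that the high-momentum translations cancel and leave the clean argument $\bo x+s\bo\xi$ in $f$ --- together with checking that the intermediate functions (in particular $c^{\ve}_{\un}$ and $e^{-i\frac{s}{\ve^2}h_0^{\ve}}\psi_j^{\ve}$) lie in the regularity class needed to apply the free propagator and interchange all integrals, which is where assumption (A) enters.
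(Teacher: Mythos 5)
Your proof is correct and follows essentially the same route as the paper: you expand in the oscillator Hermite basis, collect the phases to $e^{i|n|s/\ve}$, write the two-body matrix element in Fourier form, and then compute the free-propagator conjugation of the multiplication operator by a Fourier/Galilean-shift argument. The only cosmetic difference is that you apply the Galilean-covariance identity twice (once for each propagator, after pulling out the $\hat{\bo u}$-superposition) rather than proving the single conjugation identity $e^{\frac{is}{\ve^2}h_0^\ve}e^{\frac{i\bo\xi}{\ve}\cdot(\cdot)}e^{-\frac{is}{\ve^2}h_0^\ve}g = e^{\frac{is}{2}\bo\xi^2}e^{\frac{i\bo\xi}{\ve}\cdot(\cdot)}g(\cdot+\ve s\bo\xi)$ and then substituting $\psi_j^\ve$ at the end, as the paper does; the cancellation of the $v_0^2/\ve^2$ phases and the telescoping of translations you observe are precisely what that identity absorbs.
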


\begin{proof}     We shall first give the representation formula  for 	$ I^{\ve}_j (t) \Psi^{\ve}_0 $.
From the definition (\ref{Igei}) of $I_j^{\ve}(t)$ we have that
\ba
&& \int \!\! d \bs{r}_1 \ldots d \bs{r}_N \, \varphi^{\ve}_{\un_{1},1} (\bs{r}_1)    \cdots  \varphi^{\ve}_{\un_{N},N} (\bs{r}_N) \left( I_j^{\ve} (t) \Psi_0^{\ve} \right) ( \bo R,  \bo r_1,\ldots, \bo r_N)    \nonumber\\
&&= -i \int_0^t \!\!\! ds\, e^{i \f{s}{\ve^2} E_{\un_1}^{\ve} +\cdots + i \f{s}{\ve^2} E_{\un_N}^{\ve} - i \f{s}{\ve^2} E_{\uo}^{\ve} - \cdots  -  i \f{s}{\ve^2} E_{\uo}^{\ve} }
\int \!\! d \bs{r}_1 \ldots d \bs{r}_N \, \varphi^{\ve}_{\un_{1},1} (\bs{r}_1)    \cdots  \varphi^{\ve}_{\un_{N},N} (\bs{r}_N) \nonumber\\
&&\cdot \left( e^{i \f{s}{\ve^2} h_0^{\ve} } V_j^{\ve} e^{- i \f{s}{\ve^2} h_0^{\ve} }  \psi^{\ve} \varphi^{\ve}_{\uo, j} \right) \!(\bs{R},\bs{r}_j) \prod_{k, k\neq j} \varphi_{\uo, k}(\bs{r}_k) \nonumber\\
&&=-i \prod_{k, k\neq j} \delta_{\un_k , \uo} \int_0^t \!\!\! ds\, e^{i \f{s}{\ve} |n_j|} \int \!\! d \bs{r}_j \, \varphi^{\ve}_{\un_j, j} (\bs{r}_j) \left(  e^{i \f{s}{\ve^2} h_0^{\ve} } V_j^{\ve} e^{- i \f{s}{\ve^2} h_0^{\ve} }  \psi^{\ve} \varphi^{\ve}_{\uo, j} \right) \!(\bs{R},\bs{r}_j)
\ea
Therefore
\ba
&&\left( I^{\ve}_j (t) \Psi^{\ve}_0 \right)( \bo R,  \bo r_1,\ldots, \bo r_N)
=   \sum_{\un}  \mathcal I^{\ve}_{j, \un}(t,\bo R)  \varphi^{\ve}_{\un,j}(\bo r_j)  \prod_{k, k\neq j}\varphi^{\ve}_{\uo,k} (\bo r_k),\ea
where
\ba\label{1}
&&\mathcal I^{\ve}_{j, \un}(t,\bo R)= - i \!\!\int_0^t \!\! ds\,  e^{i \f{s}{\ve} |n|}   \left( e^{i \f{s}{\ve^2} h_0^{\ve} } \langle \varphi^{\ve}_{\un,j},
 V^{\ve}_j \varphi^{\ve}_{\uo,j} \rangle \,  e^{- i \f{s}{\ve^2} h_0^{\ve} } \psi^{\ve} \right) \!(\bs{R})
\ea
For the  scalar product in $L^2(\erre^3)$  appearing in the integrand in (\ref{1}) we have
\ba\label{sp2}
&&\langle\varphi^{\ve}_{\un,j},
 V^{\ve}_j \varphi^{\ve}_{\uo,j} \rangle(\bo R)= \frac{1}{\sqrt{\ve^3}}\!\!\int \!\!d\bo r_j\,\phi_{\un}(\ve^{-1}(\bo r_j\!-\bo a_j)) V(\ve^{-1}(\bo R-\bo r_j))\phi_{\uo}(\ve^{-1}(\bo r_j\!-\bo a_j))\nonumber\\
 &&= \int d\bo x\,
\phi_{\un}(\bo x)\phi_{\uo}(\bo x)V\!\left(\frac{\bo R-\bo a_j}{\ve}-\bo x\right)\nonumber   =\int d\bo \xi \,\widetilde{\phi_{\un}\phi_{\uo}}(\bo \xi)\tilde{V}(\bo \xi)e^{\frac{i}{\varepsilon}\bo \xi\cdot
(\bo R-\bo a_j)}\nonumber\\
&&\equiv \int d\bo \xi \,g_{\un,\uo}(\bo \xi)e^{\frac{i}{\varepsilon}\bo \xi\cdot
(\bo R-\bo a_j)}
 \ea
hence
\ba\label{Ijn2}
&&\mathcal I^{\ve}_{j, \un}(t,\bo R)= - i \!\!\int_0^t \!\! ds\,  e^{i \f{s}{\ve} |n|}\! \!  \int\!\! d\bs{\xi} \, g_{\un, \uo}(\bs{\xi}) e^{- \f{i}{\ve} \xi \cdot \bs{a}_j} \left( e^{i \f{s}{\ve^2} h_0^{\ve} } e^{\f{i}{\ve} \bs{\xi} \cdot (\cdot)} e^{- i \f{s}{\ve^2} h_0^{\ve} } \psi^{\ve} \right)\!(\bs{R})
\ea
where $e^{\f{i}{\ve} \bs{\xi} \cdot (\cdot)}$ denotes the multiplication operator $(e^{\f{i}{\ve} \bs{\xi} \cdot (\cdot)} g)(\bs{R})= e^{\f{i}{\ve} \bs{\xi} \cdot \bs{R}} g(\bs{R})$.
Furthermore for any $g \in L^2(\mathbb{R}^3)$ we have
 \begin{eqnarray}\label{azione}
&& \left(e^{\frac{i}{\ve^2}sh_0^{\ve}}
e^{\frac{i\bo\xi}{\ve}\cdot (\cdot)}
e^{-\frac{i}{\ve^2}sh_0^{\ve}}g \right)(\bo R) =
\frac{1}{\sqrt{(2\pi)^3}}\int d\bo k\, e^{i\bo k\cdot
\bo R}\left(e^{\frac{i}{\ve^2}sh_0^{\ve}}e^{\frac{i\bo \xi}{\ve}\cdot
(\cdot)
}e^{-\frac{1}{\ve^2}sh_0^{\ve}}g  \right)^\sim\!(\bo k)\nonumber\\
&&= \frac{1}{\sqrt{(2\pi)^3}}\int d\bo k\, e^{i\bo k\cdot
\bo R}e^{i\frac{s}{2}\bo k^2\ve^2}\left(e^{\frac{i\bo \xi}{\ve}\cdot
(\cdot)
}e^{-\frac{i}{\ve^2}sh_0^{\ve}}g  \right)^\sim\!(\bo k)\nonumber\\
&&= \frac{1}{\sqrt{(2\pi)^3}}\int d\bo k\, e^{i\bo k\cdot
\bo R}e^{i\frac{s}{2}\bo k^2\ve^2}
\left(e^{-\frac{i}{\ve^2}sh_0^{\ve}}g  \right)^\sim\!\left(\bo k-\frac{\bo \xi}{\ve}\right)\nonumber\\
&&=  \frac{1}{\sqrt{(2\pi)^3}}\int d\bo k\, e^{i\bo k\cdot
\bo R}e^{i\frac{s}{2}\bo k^2\varepsilon^2}e^{-i\frac{s\varepsilon^2}{2}(\bo k-\frac{\bo \xi}{\varepsilon})^2}
\tilde{g}\!\left(\bo k-\frac{\bo \xi}{\varepsilon}\right)\nonumber\\
&&= \frac{e^{-i\frac{s\bo \xi^2}{2}}}{\sqrt{(2\pi)^3}}\int
d\bo k\,e^{i\bo k\cdot(\bo R+s\varepsilon
\bo \xi)}\tilde{g}\!\left(\bo k-\frac{\bo \xi}{\varepsilon}\right)\nonumber\\
&&= e^{is\frac{\bo \xi^2}{2}}e^{i\frac{\bo \xi}{\varepsilon}\cdot
\bo R} g(\bo R+\varepsilon s \bo \xi) \label{poc}
\end{eqnarray}
Using (\ref{poc}) in (\ref{Ijn2}) we have
\ba\label{Ijn3}
&&\mathcal I^{\ve}_{j, \un}(t,\bo R)= - i \!\!\int_0^t \!\! ds \!  \!  \int\!\! d\bs{\xi} \, g_{\un, \uo}(\bs{\xi}) \,  e^{i \left( \f{s}{\ve} |n|  - \f{ \xi}{\ve} \cdot \bs{a}_j  +s \f{\bs{\xi}^2}{2} + \f{\bs{\xi}}{\ve} \cdot \bs{R} \right) } \psi^{\ve}(\bs{R} + \ve s \bs{\xi} )
\ea
Substituting the explicit expression of $\psi^{\ve}$ in (\ref{Ijn3}) one obtains the required representation of $I_j^{\ve}(t) \Psi^{\ve}_0$. The representation for $I_j^{\ve}(t) \Psi^{\ve}_{0,j}$ and  $I_j^{\ve}(t) (\Psi^{\ve}_0 - \Psi_{0,j}^{\ve}) $ are obtained replacing in (\ref{Ijn3}) $\psi^{\ve}$ with $\psi^{\ve}_j$ and $\psi^{\ve}- \psi^{\ve}_j$ respectively.
\end{proof}

\begin{prop}
For any $\ve> 0$, $\un,\um\in\mathbb{N}^3$  and $k,l=1,\ldots , N$ we have the following representation formula
\ba\label{J}
&&J^{\ve} (t) = \sum_{\substack{k,l=1\\k\neq l}}^NJ^{\ve}_{k,l} (t)    +        \sum_{k=1}^NJ^{\ve}_{k} (t)    \\
&&J^{\ve}_{k,l}(t)=- \int_0^t \!\! ds \! \!\int_0^s \!\! d\sigma\; \mul (-s) V^{\ve}_k \mul (s) \mul (-\sigma) V^{\ve}_l \mul(\sigma)\,, \;\;\;\;\;\;  J^{\ve}_{k}(t) \equiv J^{\ve}_{k,k}(t) \\
&&\left( J^{\ve}_{k} (t) \Psi^{\ve}_0  \right)(\ve \bo x,  \bo r_1,\ldots, \bo r_N)
=   \sum_{\un,\um}  \mathcal J^{\ve,k}_{ \un,\um}(t,\bo x)  \varphi^{\ve}_{\un,k}(\bo r_k) \prod_{i, i\neq k}\varphi^{\ve}_{\uo,i} (\bo r_i) 
\label{Jk} \\
&&\left( J^{\ve}_{k,l} (t) \Psi^{\ve}_0  \right)(\ve \bo x,  \bo r_1,\ldots, \bo r_N)
=   \sum_{\un,\um}  \mathcal J^{\ve,k,l}_{ \un,\um}(t,\bo x)  \varphi^{\ve}_{\un,k}(\bo r_k)\varphi^{\ve}_{\um,l}(\bo r_l)  \prod_{i,  i\neq k,l}\varphi^{\ve}_{\uo,i} (\bo r_i)
\label{Jkl}
\ea
where
\ba
&&\mathcal J^{\ve,k}_{\un,\um}(t,\bo x) =  - \f{ \mathcal N_{\ve}}{\ve^{5/2}}  \int_0^t \!\! ds \! \!\int_0^s \!\! d\sigma\;\!\! \int\!\!d\bo \xi \!\!\int\!\!d\bo \eta \int_{S^2} \!\!\!\!\! d\hat{\bo u} \; L_{\un,\um} (\bo \xi,  \bo \eta, s,\sigma;\bo x) \; e^{ \f{i}{\ve} \Theta_{\un,\um}^{k} (\bo \xi, \bo \eta,  s,\sigma, \hat{\bo u};\bo x)}
\label{Jknm}\\
&&\mathcal J^{\ve,k,l}_{\un,\um}(t,\bo x) =  - \f{ \mathcal N_{\ve}}{\ve^{5/2}}  \int_0^t \!\! ds \! \!\int_0^s \!\! d\sigma\;\!\! \int\!\!d\bo \xi \!\!\int\!\!d\bo \eta \int_{S^2} \!\!\!\!\! d\hat{\bo u} \; G_{\un,\um} (\bo \xi,\bo \eta, s,\sigma;\bo x) \; e^{ \f{i}{\ve}\Theta_{\un,\um}^{k,l} (\bo \xi, \bo \eta,  s,\sigma, \hat{\bo u};\bo x)}
\label{Jklnm}\\
&&L_{\un,\um} (\bo \xi,\bo \eta, s,\sigma;\bo x) = e^{i (\bo \eta+\bo \xi) \cdot \bo x + i(\f{s}{2} \bo \eta^2+\f{\sigma}{2}\bo \xi^2)+is\bo\eta\cdot\bo\xi}g_{\un,\um}(\bo \eta)g_{\um,\uo}(\bo \xi)\, f(\bo x+\sigma\,\bo \xi +s\,\bo\eta)
\label{Lnm}\\
&&G_{\un,\um} (\bo \xi,\bo \eta, s,\sigma;\bo x) = e^{i (\bo \eta+\bo \xi) \cdot \bo x + i(\f{s}{2} \bo \eta^2+\f{\sigma}{2}\bo \xi^2)+is\bo\eta\cdot\bo\xi}g_{\un,\uo}(\bo \eta)g_{\um,\uo}(\bo \xi)\, f(\bo x+\sigma\,\bo \xi +s\,\bo\eta)
\label{Gnm}\\
&&\Theta_{\un,\um}^{k} (\bo \xi,\bo \eta,  s,\sigma, \hat{\bo u};\bo x)\!\!= - (\bo \xi \!+\!\bo \eta)\! \cdot \bo a_k + v_0\, \hat{\bo u} \!\cdot (\bo x \!+\! \sigma\,\bo \xi \!+\! s\,\bo\eta)\! +(|n| \!-\! |m|)s+|m|\sigma
\label{thknm}\\
&&\Theta_{\un,\um}^{k,l} (\bo \xi, \bo \eta, s,\sigma, \hat{\bo u};\bo x)= - \bo \xi \cdot \bo a_l - \bo \eta \cdot \bo a_k + v_0\, \hat{\bo u} \cdot (\bo x \!+\! \sigma\,\bo \xi \!+\! s\,\bo\eta) +|n| s+|m|\sigma
\label{thklnm}
\ea
\end{prop}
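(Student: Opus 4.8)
The plan is to run, at second order, the same argument already used for Proposition~\ref{prop1}. First, since $V^{\ve}=\sum_{k=1}^N V^{\ve}_k$ and $\mul(s)\mul(-\sigma)=\mul(s-\sigma)$, expanding the double product in the definition~(\ref{Jve}) of $J^{\ve}(t)$ gives at once the splitting~(\ref{J}) into the diagonal pieces $J^{\ve}_k=J^{\ve}_{k,k}$ and the off-diagonal pieces $J^{\ve}_{k,l}$, $k\neq l$; hence it suffices to prove (\ref{Jk}) together with (\ref{Jknm}), and (\ref{Jkl}) together with (\ref{Jklnm}).

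For each family I would first project $J^{\ve}_{k,l}(t)\Psi^{\ve}_0$ (resp.\ $J^{\ve}_{k}(t)\Psi^{\ve}_0$) onto $\varphi^{\ve}_{\un_1,1}\cdots\varphi^{\ve}_{\un_N,N}$ in the oscillator variables. Because $H^{\ve}_0$ preserves the product structure and $V^{\ve}_k$ couples only $\bo R$ to $\bo r_k$, every oscillator with index outside $\{k,l\}$ remains in the ground state, its energy phases cancel and a Kronecker delta appears. For $J^{\ve}_{k,l}$, $k\neq l$, the oscillator $l$ absorbs the interaction acting at time $\sigma$ and ends in the state $\um$, the oscillator $k$ absorbs the one acting at time $s$ and ends in $\un$; collecting the phases produced by the free propagators, with $E^{\ve}_{\un}=\ve(|n|+3/2)$, leaves exactly the factor $e^{i(|n|s+|m|\sigma)/\ve}$ appearing inside $\Theta^{k,l}_{\un,\um}$. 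For $J^{\ve}_{k}$ one must instead insert the completeness relation $\sum_{\um}|\varphi^{\ve}_{\um,k}\rangle\langle\varphi^{\ve}_{\um,k}|$ for the $k$-th oscillator between the two copies of $V^{\ve}_k$ — this is precisely where the extra summation over the intermediate label $\um$ in (\ref{Jk}) comes from — and the same bookkeeping of the phases, in which the three $3/2$-ground-state shifts cancel, produces the factor $e^{i((|n|-|m|)s+|m|\sigma)/\ve}$ of $\Theta^{k}_{\un,\um}$. What is left in both cases is, up to the constant $-\mathcal N_{\ve}\,\ve^{-5/2}$, an $\bo R$-multiplication operator sandwiched between three free test-particle propagators and integrated over $0<\sigma<s<t$.

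The heart of the computation is then to evaluate the two matrix elements and to commute the plane waves they generate through these propagators. As in (\ref{sp2}) one has $\langle\varphi^{\ve}_{\un,k},V^{\ve}_k\varphi^{\ve}_{\um,k}\rangle(\bo R)=\int d\bo \xi\,g_{\un,\um}(\bo \xi)\,e^{i\bo \xi\cdot(\bo R-\bo a_k)/\ve}$ with $g_{\un,\um}$ as in (\ref{gnm}); I would call $\bo \xi$ the variable coming from the interaction at time $\sigma$ and $\bo \eta$ the one from the interaction at time $s$. The test-particle factor then reads $e^{i s h_0^{\ve}/\ve^2}M_{\bo \eta}\,e^{-i(s-\sigma)h_0^{\ve}/\ve^2}M_{\bo \xi}\,e^{-i\sigma h_0^{\ve}/\ve^2}\psi^{\ve}$, where $M_{\bo \xi}$ is multiplication by $e^{i\bo \xi\cdot\bo R/\ve}$; rewriting it, using only the group law, as $\big(e^{i s h_0^{\ve}/\ve^2}M_{\bo \eta}e^{-i s h_0^{\ve}/\ve^2}\big)\big(e^{i\sigma h_0^{\ve}/\ve^2}M_{\bo \xi}e^{-i\sigma h_0^{\ve}/\ve^2}\big)\psi^{\ve}$ and applying (\ref{poc}) twice yields $e^{i(\frac{s}{2}\bo \eta^2+\frac{\sigma}{2}\bo \xi^2)+is\,\bo \eta\cdot\bo \xi}\,e^{i(\bo \eta+\bo \xi)\cdot\bo R/\ve}\,\psi^{\ve}(\bo R+\ve s\,\bo \eta+\ve\sigma\,\bo \xi)$, the cross term $is\,\bo \eta\cdot\bo \xi$ arising because the outer shift $\bo R\mapsto\bo R+\ve s\,\bo \eta$ also hits the inner plane wave. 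Substituting the explicit $\psi^{\ve}$, passing to $\bo x=\bo R/\ve$ and sorting the factors by their order in $\ve$, the $O(1)$ amplitude ($f(\bo x+\sigma\bo \xi+s\bo \eta)$, the two $g$-factors, $e^{i(\bo \eta+\bo \xi)\cdot\bo x}$ and $e^{i(\frac{s}{2}\bo \eta^2+\frac{\sigma}{2}\bo \xi^2)+is\,\bo \eta\cdot\bo \xi}$) reassembles into $L_{\un,\um}$ for $J^{\ve}_k$ and into $G_{\un,\um}$ for $J^{\ve}_{k,l}$ — the difference being only which of the two $g$-factors carries a non-ground index, since for $k\neq l$ both interactions start from $\uo$ while for $k=l$ the one at time $\sigma$ starts from $\uo$ and the one at time $s$ from $\um$ — whereas the $O(\ve^{-1})$ phases ($-(\bo \xi+\bo \eta)\cdot\bo a_k$ from the matrix elements, $v_0\,\hat{\bo u}\cdot(\bo x+\sigma\bo \xi+s\bo \eta)$ from $\psi^{\ve}$, and the oscillator phases found above) reassemble into $\Theta^{k}_{\un,\um}/\ve$, resp.\ $\Theta^{k,l}_{\un,\um}/\ve$. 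This is exactly the content of (\ref{Jknm})--(\ref{thklnm}).

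The steps I expect to be the trickiest — more by proneness to slips than by depth — are the two points where ``new information'' is generated: inserting the intermediate-state completeness relation for $J^{\ve}_k$ and verifying that the accompanying energy phases collapse precisely to $\Theta^{k}_{\un,\um}$ (in particular the cancellation of the $3/2$-shifts), and keeping the pairing ``time $s$ with momentum $\bo \eta$, time $\sigma$ with momentum $\bo \xi$'' straight throughout the double conjugation, so that the cross phase $is\,\bo \eta\cdot\bo \xi$ emerges with the correct time factor and sign. Finally, the convergence of the Hermite expansions and the admissibility of interchanging them with the $s,\sigma,\bo \xi,\bo \eta,\hat{\bo u}$ integrations are justified exactly as in the proof of Proposition~\ref{prop1}, via the rapid decay of the Hermite functions together with assumption~(A).
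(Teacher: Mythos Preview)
Your proposal is correct and follows exactly the approach the paper intends: the paper in fact omits this proof entirely, stating that it ``proceeds along the same line of the previous one,'' namely Proposition~\ref{prop1}, and your projection onto oscillator eigenstates, insertion of the intermediate completeness relation for $J^{\ve}_k$, reduction to the matrix elements~(\ref{sp2}), and double application of~(\ref{poc}) to produce the cross phase $is\,\bo\eta\cdot\bo\xi$ are precisely that argument carried out at second order. The bookkeeping you flag as delicate (the cancellation of the $3/2$-shifts in the energy phases and the pairing $s\leftrightarrow\bo\eta$, $\sigma\leftrightarrow\bo\xi$) is handled correctly.
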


\n
The proof proceeds along the same line of the previous one and it is omitted for the sake of brevity.


\section{Derivation of the leading term}

\n
In this section we analyze the term  $I^{\ve}_j (t) \Psi^{\ve}_{0,j}$. In particular,  we shall prove  decomposition (\ref{deco}), with an explicit expression for $Q_j^{\ve} (t)$.
We start from formulas (\ref{Ij}), (\ref{cIj}) proved in  proposition \ref{prop1}. In particular, by the change of variables $\bs{\xi} \rightarrow \mathcal R_j \bs{\xi}$, $\hat{\bs{u}} \rightarrow \mathcal R_j \hat{\bs{u}}$, with $\mathcal R_j$ defined in (\ref{mr}), we obtain
\ba\label{cIj0}
&&\mathcal I^{\ve}_{j, \un}(t,\bs{x}) =  - \f{i\, \mathcal N_{\ve}}{\ve^{5/2}}  \int_0^t \!\!\!ds \!\! \int\!\!d \bs{\xi} \!\! \int_{\mathcal C_0} \!\!\! d\hat{\bs{u}} \; F_{\un,j}^0 ( \bs{\xi},s;\bs{x}) \; e^{ \f{i}{\ve} \Phi_{\un,j}^0 (\bs{\xi}, s,  \hat{\bs{u}};\bs{x})}
\ea
where
\ba
&&F_{\un,j}^0 (\bs{\xi},s;\bs{x}) \equiv F_{\un} (\mathcal R_j^{-1}\bs{\xi},s;\bs{x}) = e^{i \bs{\xi} \cdot \bs{x}^j +i \f{s}{2} \bs{\xi}^2} g_{\un,0}(\mathcal R_j^{-1}\bs{\xi})\, f(\bs{x}^j+s \,\bs{\xi})\\
&&\Phi_{\un,j}^0 (\bs{\xi}, s, \hat{\bs{u}};\bs{x}) \equiv  \Phi_{\un,j} (\mathcal R_j^{-1}\bs{\xi}, s, \mathcal R_j^{-1} \hat{\bs{u}};\bs{x})= - |\bs{a}_j | \, \xi_3  + v_0\, \hat{\bs{u}} \cdot (\bs{x}^j+ s\, \bs{\xi}) +|n| s\\
&&\bs{x}^j \equiv \mathcal R_j \bs{x}
\ea
Notice that
\ba\label{xj3}
&&x^j_3= (0,0,1)\cdot \mathcal R_j \bs{x} = \hat{\bs{a}}_j \cdot \bs{x} \\
&&(x^j_1, x^j_2, 0) = \bs{x}^j - (\hat{\bs{a}}_j \cdot \bs{x} ) (0,0,1) = \mathcal R_j \left( \bs{x} - (\hat{\bs{a}}_j \cdot \bs{x} ) \hat{\bs{a}}_j \right)
\label{xj12}
\ea
Moreover we parametrize the unit vector $\hat{\bs{u}} \in \mathcal C_0$ as follows
\be
\hat{\bs{u}} = \left( \mu, \nu, \sqrt{1-\mu^2  -\nu^2} \right), \;\;\;\; (\mu, \nu) \in D_0 \equiv \left\{ (a,b) \in \erre^2, \; a^2 + b^2 < \sin^2 \theta_0 \right\}
\ee
Therefore the integral (\ref{cIj0}) is rewritten as
\ba\label{cIj2}
&&\mathcal I^{\ve}_{j, \un}(t,\bs{x}) =  - \f{i\, \mathcal N_{\ve}}{\ve^{5/2}}  \int\!\!d \bs{\xi} \!\!   \int_0^t \!\!\!ds  \!\! \int_{D_0} \!\!\! d\mu d\nu \; F_{\un,j}^1 ( \bs{\xi}, \mu,\nu,s ;\bs{x}) \; e^{ \f{i}{\ve} \Phi_{\un,j}^1 (\bs{\xi},  \mu, \nu,s ;\bs{x})}
\ea
where
\ba\label{F^1}
&&F_{\un,j}^1 ( \bs{\xi}, \mu,\nu,s;\bs{x}) \equiv \f{1}{\sqrt{1\!-\!\mu^2\!-\!\nu^2}} F_{\un,j}^0 ( \bs{\xi},s;\bs{x}) = \f{1}{\sqrt{1\!-\!\mu^2\!-\!\nu^2}}   e^{i \bs{\xi} \cdot \bs{x}^j +i \f{s}{2} \bs{\xi}^2} g_{\un,0}(\mathcal R_j^{-1}\bs{\xi})\, f(\bs{x}^j+s \,\bs{\xi})\nonumber\\
&&\\
&& \Phi_{\un,j}^1 (\bs{\xi},  \mu, \nu,s ;\bs{x})\!= -|\bs{a}_j |\, \xi_3 +v_0 \! \left[
\mu (\bs{x}^j_1 + s \xi_1) \!+ \nu (\bs{x}^j_2 + s \xi_2) \!+\! \sqrt{1\!-\! \mu^2 \!-\! \nu^2} (\bs{x}^j_3 +s \xi_3) \right] \!+\! |n| s\nonumber\\
&&
\ea
By a straightforward computation one can verify that the phase $ \Phi_{\un,j}^1 $ has exactly one (non degenerate) critical point in the integration region given by $\bs{\xi}= \bs{\xi}^c$, $(\mu,\nu,s)=\bs{z}^c$, where
\be\label{crit}
\bs{\xi}^c = \left( -\tau_j^{-1} x^j_1, -\tau_j^{-1} x^j_2, - v_0^{-1} |n|) \,, \;\;\;\;\;\;\; \bs{z}^c= (0, 0, \tau_j\right)
\ee
Therefore by stationary phase methods we can compute the asymptotic expansion of the oscillatory integral for $\ve \rightarrow 0$. In the specific case, we can exploit the  linearity of the phase in the variable $\bs{\xi}$ to obtain the expansion  in a relatively direct way. In particular we write
\ba
&& \Phi_{\un,j}^1 (\bs{\xi},  \mu, \nu,s ;\bs{x})= \bs{A}_j(\mu,\nu,s)\cdot \bs{\xi} +B_{\un,j}(\mu,\nu,s;\bs{x})\\
&&\bs{A}_j(\mu,\nu,s)= v_0 \left(\mu s, \nu s,  \sqrt{1\!-\! \mu^2 \!-\! \nu^2} \, s -\tau_j \right)\\
&&B_{\un,j}(\mu,\nu,s;\bs{x})= v_0 x^j_1 \mu + v_0 x^j_2 \nu + v_0 x^j_3   \sqrt{1\!-\! \mu^2 \!-\! \nu^2}  +|n|s
\ea
and
\ba\label{cIj4}
&&\mathcal I^{\ve}_{j, \un}(t,\bs{x}) =  - \f{i\, \mathcal N_{\ve}}{\ve^{5/2}}    \int_0^t \!\!\!ds  \!\! \int_{D_0} \!\!\! d\mu d\nu \; e^{\f{i}{\ve} B_{\un,j}(\mu,\nu,s;\bs{x})} \int\!\!d\bs{\xi}\;
 F_{\un,j}^1 ( \bs{\xi}, \mu,\nu,s ;\bs{x}) \; e^{ \f{i}{\ve} \bs{A}_j ( \mu, \nu,s) \cdot \bs{\xi}}
\ea
Let us introduce the following linear change of coordinates
\ba\label{chco}
&&(\mu,\nu,s) = L_{\ve} (z_1,z_2,z_3) \equiv L_{\ve} \bs{z}\\
&&\mu= \f{\ve}{v_0 \tau_j} z_1, \;\;\;\; \nu= \f{\ve}{v_0 \tau_j} z_2, \;\;\;\; s= \tau_j +\f{\ve}{v_0} z_3
\ea
The domain of integration in the variable $\bs{z}$ is
\ba
&&\Omega_{\ve} = \left\{ \bs{z} \in \erre^3\,|\, z_1^2+z_2^2 < \ve^{-2}v_0^2\tau_j^2 \sin^2 \theta_0,\; -\ve^{-1} v_0 \tau_j < z_3 < \ve^{-1}v_0 (t-\tau_j)  \right\}
\ea
We notice that for $\ve \rightarrow 0$ one has $L_{\ve} \bs{z} \rightarrow \bs{z}^c$, and $\Omega_{\ve} \rightarrow \erre^3$.
In the new integration variables $\bs{z}$ the integral (\ref{cIj4}) reads
\ba\label{cIj5}
&&\mathcal I^{\ve}_{j, \un}(t,\bs{x}) = -\f{i \mathcal N_{\ve} \sqrt{\ve}}{v_0^3 \tau_j^2} \int_{\Omega_{\ve}} \!\!\!\! d\bs{z}\; e^{\f{i}{\ve} B_{\un,j}( L_{\ve}\bs{z};\bs{x})}\!\!  \int \!\! d\bs{\xi} \; F^1_{\un,j}(\bs{\xi}, L_{\ve} \bs{z};\bs{x}) \; e^{\f{i}{\ve} \bs{A}_j ( L_{\ve}\bs{z})\cdot \bs{\xi}}
\ea
Let us expand $\bs{A}_j ( L_{\ve}\bs{z})$ and $B_{\un,j}( L_{\ve}\bs{z};\bs{x})$ around $\ve=0$
\ba
&&\bs{A}_j( L_{\ve}\bs{z}) =\! \left(\! \ve z_1 + \f{\ve^2}{v_0 \tau_j} z_1 z_3,  \ve z_2 + \f{\ve^2}{v_0 \tau_j} z_2 z_3,  \ve z_3 +\! \left(\sqrt{ 1- \f{\ve^2}{v_0^2 \tau_j^2}(z_1^2 + z_2^2)}  -1 \! \right) \!\left(  v_0\tau_j +\ve z_3 \right) \!\right) \nonumber\\
&&=\ve \bs{z} + \ve^2 \left( \f{z_1 z_3}{v_0 \tau_j}, \f{z_2 z_3}{v_0 \tau_j}, \ve^{-2} \! \left(  \sqrt{ 1- \f{\ve^2}{v_0^2 \tau_j^2}(z_1^2 + z_2^2)}  -1 \! \right) \!\left(  v_0\tau_j +\ve z_3 \right) \!\right) \\
&&\equiv \ve \bs{z} + \ve^2 \bs{A}_{j,\ve}^2 (\bs{z})
\ea
\ba
&&B_{\un,j}(L_{\ve}\bs{z};\bs{x})= v_0 x^j_3 +|n|\tau_j + \ve \f{x^j_1}{\tau_j} z_1 +\ve \f{x^j_2}{\tau_j} z_2 +\ve \f{|n|}{v_0} z_3 + v_0 x^j_3  \! \left(  \sqrt{ 1- \f{\ve^2}{v_0^2 \tau_j^2}(z_1^2 + z_2^2)}  -1 \! \right) \nonumber\\
&&\\
&&\equiv v_0 \, \hat{\bs{a}}_j \cdot \bs{x}  +|n|\tau_j  - \ve \, \bs{\xi}^c\! \cdot \bs{z} + \ve^2 B^2_{j,\ve}(\bs{z};\bs{x})
\ea
where we have used $x^j_3= (0,0,1)\cdot \mathcal R_j \bs{x}=\hat{\bs{a}}_j \cdot \bs{x}$ and (\ref{crit}).  Correspondingly, the integral (\ref{cIj5}) can be written as
\ba\label{cIj6}
&&\mathcal I^{\ve}_{j, \un}(t,\bs{x}) = -\f{i \mathcal N_{\ve} \sqrt{\ve}}{v_0^3 \tau_j^2} \,
e^{\f{i}{\ve} ( v_0 \, \hat{\bs{a}}_j \cdot \bs{x}  +|n|\tau_j ) }  \!\!\!
\int_{\Omega_{\ve}}\! \!\!\!\! d\bs{z}\, e^{- i \bs{\xi}^c \cdot \bs{z} +i \ve B^2_{j,\ve} (\bs{z};\bs{x} )   }\!\!  \int\! \!\! d\bs{\xi} \, F^1_{\un,j}(\bs{\xi}, L_{\ve} \bs{z};\bs{x}) \; e^{i \bs{z} \cdot \bs{\xi} +i \ve  \bs{A}^2_{j,\ve} ( \bs{z})}\nonumber\\
&&\equiv -\f{i \mathcal N_{\ve} \sqrt{\ve}}{v_0^3 \tau_j^2} \, e^{\f{i}{\ve} \left(  v_0 \, \hat{\bs{a}}_j \cdot \bs{x}  +|n|\tau_j  \right)}
 (2\pi)^3 F^1_{\un,j} (\bs{\xi}^c, \bs{z}^c;\bs{x}) +Q^{\ve}_{j,\un} (t,\bs{x})
 \ea
 where
 \ba\label{Q}
 &&Q^{\ve}_{j,\un} (t,\bs{x}) = -\f{i \mathcal N_{\ve} \sqrt{\ve}}{v_0^3 \tau_j^2} \, e^{\f{i}{\ve} \left(  v_0 \, \hat{\bs{a}}_j \cdot \bs{x}  +|n|\tau_j  \right)} \!
 \left[   \int_{\erre^3 \setminus \Omega_{\ve}}\!\! \!\!\!\!d \bs{z} \, e^{-i \bs{\xi}^c \cdot \bs{z}} \!\!\int\!\! d\bs{\xi} \, e^{i \bs{z} \cdot \bs{\xi}}    F^1_{\un,j}(\bs{\xi}, \bs{z}^c ;\bs{x}) \right. \nonumber\\
&& \left. + \! \int_{\Omega_{\ve}} \!\!\!\!d \bs{z} \, e^{-i \bs{\xi}^c \cdot \bs{z}} \!\!\!\int\!\! d\xi \, e^{i \bs{z} \cdot \bs{\xi}} \! \left(\!   F^1_{\un,j}(\bs{\xi}, L_{\ve} \bs{z};\bs{x}) e^{i\ve \left( B^2_{j,\ve}(\bs{z};\bs{x}) + \bs{A}^2_{j,\ve} (\bs{z} )  \right)}    -        F^1_{\un,j}(\bs{\xi}, \bs{z}^c;\bs{x})  \!      \right)\right]
\ea
Let us compute $F^1_{\un,j}   (\bs{\xi}^c, \bs{z}^c;\bs{x})   $ (see (\ref{F^1})). Taking into account that $\bs{z}^c =(0,0,\tau_j)$ and
\ba\label{xic}
&&\!\!\!\bs{\xi}^c \!= -\tau_j^{-1} (x^j_1,x^j_2,0) - |n| v_0^{-1} (0,0,1) \!= -\tau_j^{-1} \mathcal R_j \big( \bs{x}- (\hat{\bs{a}}_j \cdot \bs{x} ) \hat{\bs{a}}_j \big) - |n| v_0^{-1} \mathcal R_j \hat{\bs{a}}_j
\ea
we have
\ba\label{F^11}
&&F^1_{\un,j} (\bs{\xi}^c, \bs{z}^c;\bs{x}) = e^{i \bs{\xi}^c \cdot \bs{x}^j + \f{i}{2} \tau_j (\bs{\xi}^c)^2 } g_{\un,j}(\mathcal R_j^{-1} \bs{\xi}^c ) \, f( \bs{x}^j +\tau_j \bs{\xi}^c)\nonumber\\
&&= e^{i \left[ -\f{1}{\tau_j} \big( \bs{x} - ( \hat{\bs{a}}_j \cdot \bs{x} ) \hat{\bs{a}}_j \big) \right] \cdot \bs{x} - i \f{|n|}{v_0} \hat{\bs{a}}_j \cdot \bs{x}  + i \f{\tau_j}{2} \left( \f{1}{\tau_j^2} |\bs{x} - ( \hat{\bs{a}}_j \cdot \bs{x} ) \hat{\bs{a}}_j |^2 +\f{|n|^2}{v_0^2} \right)   }
g_{\un,j} \left(  -\tau_j^{-1}  \big( \bs{x}- (\hat{\bs{a}}_j \!\cdot \!  \bs{x} ) \hat{\bs{a}}_j \big) - |n| v_0^{-1} \hat{\bs{a}}_j
\right)\nonumber\\
&&\cdot \,  f\left( \hat{\bs{a}}_j \!  \cdot \! \bs{x} - |n| v_0^{-1} \tau_j \right)\nonumber\\
&&= e^{i \f{|n|^2 \tau_j}{2 v_0^2}   -i \f{|n|}{v_0} \hat{\bs{a}}_j  \cdot \bs{x} - i \f{| \bs{x}- (\hat{\bs{a}}_j \!\cdot \!  \bs{x} ) \hat{\bs{a}}_j   |^2}{2\tau_j} }
g_{\un,j} \! \left(  -\tau_j^{-1}  \big( \bs{x}- (\hat{\bs{a}}_j \!\cdot \!  \bs{x} ) \hat{\bs{a}}_j \big) \! - |n| v_0^{-1} \hat{\bs{a}}_j
\right)  f\left( \hat{\bs{a}}_j \!  \cdot \! \bs{x} - |n| v_0^{-1} \tau_j \right)
\ea
From (\ref{cIj6}), (\ref{Q}), (\ref{F^11}) and definition \ref{depac} we obtain
\ba\label{cIj7}
&&\mathcal I_{\un,j}^{\ve} (t,\bs{x}) = \ve^2 P_{\un,j}^{\ve} (\ve \bs{x}) + Q_{\un,j}^{\ve} (t, \bs{x})
\ea
Taking into account of (\ref{cIj7}) and (\ref{Ij}), we have proved the decomposition (\ref{deco}), with $Q_j^{\ve}(t)$ explicitly given by
\be
Q_j^{\ve}(t,\bs{x}, \bs{r}_1,\ldots,\bs{r}_N)= \sum_{\un} Q_{\un,j}^{\ve}(t,\bs{x})  \;   \varphi^{\ve}_{n,j} (\bs{r}_j) \prod_{k, k \neq j}  \varphi^{\ve}_{0 ,k} (\bs{r}_k)
\ee


\section{Estimate of $Q_j^{\ve}(t) $}

Let us first recall some elementary facts that will be used in this and in the following sections.  The unitary propagator $U(t)$ of the harmonic oscillator centered in the origin and
corresponding to $\hbar=m=\omega=1$ has an  integral kernel  given by
\begin{equation}\label{nucleoint}
U(t,\bo x,\bo y)=\sum_{\un} \phi_{\un}(\bo x)\phi_{\un}(\bo y)e^{-it\left(|n|+\frac{3}{2}\right)}
\end{equation}
Using such representation we can compute the following sum
\begin{eqnarray}\label{sumn1t0}
\sum_{\un}e^{- i
\frac{|n|}{\ve} t } g_{\un,\uo}(\bo \xi)\bar{g}_{\un,\uo}(\bo \xi')=\nonumber\\
\widetilde{V}(\bo \xi)\overline{\widetilde{V}}(\bo \xi')
\frac{1}{(2\pi)^3}\sum_{\un}\int d\bo y \phi_{\un}(\bo y)\phi_{\uo}(\bo y)e^{-i\bo y\cdot \bo \xi}\int d\bo y' \phi_{\un}(\bo y')\phi_{\uo}(\bo y')e^{i\bo y'\cdot \bo \xi'}e^{- i
\frac{|n|}{\ve} t }=\nonumber\\
\widetilde{V}(\bo \xi)\overline{\widetilde{V}}(\bo \xi')
\frac{1}{(2\pi)^3}\int d\bo y'\,\overline{\phi_{\uo}(\bo y')e^{-i\bo y'\cdot \bo \xi'}}\int d\bo y \sum_{\un}\phi_{\un}(\bo y)\phi_{\un}(\bo y')e^{-i
\frac{|n|}{\ve} t }\phi_{\uo}(\bo y)e^{-i\bo y\cdot\bo \xi}=\nonumber\\
\widetilde{V}(\bo\xi)\overline{\widetilde{V}}(\bo \xi')
\frac{e^{i\frac{3}{2\varepsilon} t }}{(2\pi)^3}
\langle\phi_{\uo},  e^{i \bo \xi' \cdot (\cdot) } U(t/\ve)  e^{- i \bo \xi \cdot (\cdot) } \phi_{\uo} \rangle
\end{eqnarray}
In particular for $t=0$ one has
\begin{eqnarray}\label{sm}
\sum_{\un} g_{\un,\uo}(\bo \xi)\bar{g}_{\un,\uo}(\bo \xi')=\frac{\widetilde{V}(\bo\xi)\overline{\widetilde{V}}(\bo \xi')}{(2\pi)^3}\, e^{-\frac{(\bo \xi-\bo\xi')^2}{4}}
\end{eqnarray}

\n
Let us now  prove the estimate (\ref{stiQ}).
Taking into account  of $(5.29)$ and (5.25) we write
\ba\label{q}
&& Q^{\ve}_j (t)(t, \bo x,  \bo r_1,\ldots, \bo r_N)
=   \sum_{\un}  \left( Q^{\ve,1}_{j, \un}(t,\bo x)+ Q^{\ve,2}_{j, \un}(t,\bo x)\right)  \varphi^{\ve}_{\un,j}(\bo r_j)  \prod_{k,\,k\neq j}\varphi^{\ve}_{\uo,k} (\bo r_k),
\ea
where
\ba
\label{q1} Q^{\ve,1}_{j, \un}(t,\bo x) =&&\!\!\!\!\!\!
 -\f{i \mathcal N_{\ve} \sqrt{\ve}}{v_0^3 \tau_j^2} \, e^{\f{i}{\ve} \left(  v_0 \, \hat{\bs{a}}_j \cdot \bs{x}  +|n|\tau_j  \right)}  \int_{\erre^3 \setminus \Omega_{\ve}}\!\! \!\!\!\!d \bs{z} \, e^{-i \bs{\xi}^c \cdot \bs{z}} \!\!\int\!\! d\bs{\xi} \, e^{i \bs{z} \cdot \bs{\xi}}    F^1_{\un,j}(\bs{\xi}, \bs{z}^c ;\bs{x})  \\
Q^{\ve,2}_{j, \un}(t,\bo x)=&&\!\!\!\!\!\!
-\f{i \mathcal N_{\ve} \sqrt{\ve}}{v_0^3 \tau_j^2} \, e^{\f{i}{\ve} \left(  v_0 \, \hat{\bs{a}}_j \cdot \bs{x}  +|n|\tau_j  \right)} \! \! \int_{\Omega_{\ve}} \!\!\!\!d \bs{z} \, e^{-i \bs{\xi}^c \cdot \bs{z}} \!\!\!\int\!\! d\bo \xi \, e^{i \bs{z} \cdot \bs{\xi}} \nonumber\\\label{q2}
&&\!\!\!\!\!\cdot \left(\!   F^1_{\un,j}(\bs{\xi}, L_{\ve} \bs{z};\bs{x}) e^{i\ve \left( B^2_{j,\ve}(\bs{z};\bs{x}) + \bs{A}^2_{j,\ve} (\bs{z} )  \right)}    -        F^1_{\un,j}(\bs{\xi}, \bs{z}^c;\bs{x})  \!      \right) \ea
and, from $(5.9)$,$(5.11)$,$(5.16)$,$(5.17)$, the explicit expressions for $F^1_{\un,j}(\bs{\xi}, \bs{z}^c ;\bs{x}) $ and  $F^1_{\un,j}(\bs{\xi}, L_{\ve} \bs{z};\bs{x}) $ are given by
\ba \label{f1} F^1_{\un,j}(\bs{\xi}, L_{\ve} \bs{z};\bs{x})=\f{1}{\sqrt{1\!-\!\ \frac{\ve^2}{v_0^2\tau_j^2}(z_1^2\!+\!z_2^2)}}   e^{i \bs{\xi} \cdot \bs{x}^j +i \f{\tau_j}{2} \bs{\xi}^2+
i \ve\f{z_3}{2v_0} \bs{\xi}^2} g_{\un,0}(\mathcal R_j^{-1}\bs{\xi})\, f\left(\bs{x}^j+\tau_j \,\bs{\xi} + \ve\f{z_3}{v_0} \bs{\xi}\right)\ea
\ba \label{f1c} F^1_{\un,j}(\bs{\xi}, \bs{z}^c;\bs{x})=  e^{i \bs{\xi} \cdot \bs{x}^j +i \f{\tau_j}{2} \bs{\xi}^2} g_{\un,0}(\mathcal R_j^{-1}\bs{\xi})\, f\left(\bs{x}^j+\tau_j \,\bs{\xi} \right)\ea
 From (\ref{q}) we have
\be\label{stiq1}\|Q_j^{\ve} (t)\|^2\leq 2\ve^3\sum_{\un}\int d\bo x\,\left| Q^{\ve,1}_{j, \un}(t,\bo x)\right|^2+2\ve^3\sum_{\un}\int d\bo x\,\left| Q^{\ve,2}_{j, \un}(t,\bo x)\right|^2
\ee
Let us  estimate the first term in the r.h.s of (\ref{stiq1}). Using  (\ref{q1}) and (\ref{f1c}) we have
\ba\label{stiq2}
&&\sum_{\un}\int \!\! d\bo x\,\left| Q^{\ve,1}_{j, \un}(t,\bo x)\right|^2 \nonumber\\
&&=
\f{\mathcal N_{\ve}^2 \ve}{v_0^6 \tau_j^4}\sum_{\un}
\int \!\! d\bo x
\int_{\erre^3 \setminus \Omega_{\ve}}\!\! \!\!\!\!d \bs{z} \,
e^{-i \bs{\xi}^c \cdot \bs{z}} \!\!
\int\!\! d\bs{\xi} \, e^{i \bs{z} \cdot \bs{\xi}}    e^{i \bs{\xi} \cdot \bs{x}^j +i \f{\tau_j}{2} \bs{\xi}^2} g_{\un,0}(\mathcal R_j^{-1}\bs{\xi})\, f\left(\bs{x}^j+\tau_j \,\bs{\xi} \right) \nonumber\\
&&\cdot \int_{\erre^3 \setminus \Omega_{\ve}}\!\! \!\!\!\!d \bs{z}' \,
e^{i \bs{\xi}^c \cdot \bs{z}'}
\!\!\int\!\! d\bs{\xi}' \, e^{-i \bs{z}' \cdot \bs{\xi}'}
e^{-i \bs{\xi}' \cdot \bs{x}^j -i \f{\tau_j}{2} \bs{\xi}^{'2}} \bar{g}_{\un,0}(\mathcal R_j^{-1}\bs{\xi}')\, f\left(\bs{x}^j+\tau_j \,\bs{\xi}' \right)
\nonumber\\
&&= \f{\mathcal N_{\ve}^2 \ve}{v_0^6 \tau_j^4 (2\pi)^3} \!
\int_{\erre^3 \setminus \Omega_{\ve}}\!\! \!\!\!\!d \bs{z}
 \int_{\erre^3 \setminus \Omega_{\ve}}\!\! \!\!\!\!d \bs{z}'\!
\int \!\! d\bo x \, e^{-i \bs{\xi}^c \cdot (\bs{z} - \bs{z}')} \!\!
\int\!\! d\bs{\xi}   \! \int\!\! d\bs{\xi}'
\, e^{i \bs{z} \cdot \bs{\xi} - i \bs{z}' \cdot \bs{\xi}' }
e^{i (\bs{\xi} - \bo \xi')  \cdot \bs{x}^j +i \f{\tau_j}{2} (\bs{\xi}^2 - \bo \xi'^2)  } \nonumber\\
&&\cdot  \, \widetilde{V}(\mathcal R_j^{-1}\bs{\xi})\,\overline{\widetilde{V}}(\mathcal R_j^{-1}\bs{\xi}')\,
e^{- \f{ (\bo \xi - \bo \xi')^2}{4}}
f\left(\bs{x}^j+\tau_j \,\bs{\xi} \right)  f\left(\bs{x}^j+\tau_j \,\bs{\xi}' \right)
\ea
where the sum over $\un$ in (\ref{stiq2}) has been explicitly computed exploiting (\ref{sm}).  Using the identity
\be
e^{i \bo z \cdot \bo \xi} = -\f{1}{|\bo z|^2} \Delta_{\bo \xi} \, e^{i \bo z \cdot \bo \xi} = \f{1}{|\bo z|^4} (\Delta_{\bo \xi})^2 e^{i \bo z \cdot \bo \xi}
\ee
and integrating by parts we find
\ba
&&\sum_{\un}  \!  \int \!\!  d\bo x \left| Q^{\ve,1}_{j, \un}(t,\bo x)\right|^2
\nonumber\\
&&= \f{\mathcal N_{\ve}^2 \ve}{v_0^6 \tau_j^4 (2\pi)^3} \!
\int_{\erre^3 \setminus \Omega_{\ve}}\!\! \!\!\!\!d \bs{z} \, \f{1}{|\bo z|^4}
 \int_{\erre^3 \setminus \Omega_{\ve}}\!\! \!\!\!\!d \bs{z}'    \, \f{1}{|\bo z'|^4}
 \!  \int \!\! d\bo x \, e^{-i \bs{\xi}^c \cdot (\bs{z} - \bs{z}')} \!\!
  \int\!\! d\bs{\xi}   \! \int\!\! d\bs{\xi}'
\, e^{i \bs{z} \cdot \bs{\xi} - i \bs{z}' \cdot \bs{\xi}' } (\Delta_{\bo \xi})^2  (\Delta_{\bo \xi'})^2  \nonumber\\
&&
\left( \!
\widetilde{V}(\mathcal R_j^{-1}\bs{\xi})\,\overline{\widetilde{V}}(\mathcal R_j^{-1}\bs{\xi}')\,
e^{- \f{ (\bo \xi - \bo \xi')^2}{4}}
e^{i (\bs{\xi}   \cdot \bs{x}^j + \f{\tau_j}{2} \bs{\xi}^2 )  }
f\left(\bs{x}^j+\tau_j \,\bs{\xi} \right)
e^{- i (  \bo \xi'   \cdot \bs{x}^j +  \f{\tau_j}{2}  \bo \xi'^2)  }
 f\left(\bs{x}^j+\tau_j \,\bs{\xi}' \right)
\! \right)  \nonumber\\
&&\leq c \, \ve \! \int_{\erre^3 \setminus \Omega_{\ve}}\!\! \!\!\!\!d \bs{z} \, \f{1}{|\bo z|^4} \!
 \int_{\erre^3 \setminus \Omega_{\ve}}\!\! \!\!\!\!d \bs{z}'    \, \f{1}{|\bo z'|^4}  \!
 \int \!\! d\bo \xi \, \langle \bo \xi \rangle^4 \! \!\! \sum_{\underline{m}, |m|\leq4} |D^{\underline{m}}_{\bo \xi} \widetilde{V}(\mathcal R_j^{-1}\bs{\xi}) |
  \int \!\! d\bo \xi' \, \langle \bo \xi' \rangle^4 \!\! \! \sum_{\underline{n}, |n|\leq4} |D^{\underline{n}}_{\bo \xi} \widetilde{V}(\mathcal R_j^{-1}\bs{\xi}') | \nonumber\\
&&\cdot \sum_{\underline{k}, |k| \leq4}  \sum_{\underline{l}, |l| \leq4} \int\!\! d \bo x\,  \langle \bo x^j +\tau_j \bo \xi \rangle^4
|D^{\underline{k}}_{\bo \xi} f(\bo x^j + \tau_j \bo \xi) |
 \langle \bo x^j +\tau_j \bo \xi' \rangle^4
|D^{\underline{l}}_{\bo \xi'} f(\bo x^j + \tau_j \bo \xi') |
\ea
Furthermore, in the last integral we apply  the Schwartz inequality and we obtain
\ba\label{q1fin}
&&\sum_{\un}  \!  \int \!\!  d\bo x \left| Q^{\ve,1}_{j, \un}(t,\bo x)\right|^2 \leq c \,  \| \widetilde{V}\|^2_{W^{4,1}_4} \left(\!  \int_{\erre^3 \setminus \Omega_{\ve}}\!\! \!\!\!\!d \bs{z} \, \f{1}{|\bo z|^4} \!\right)^{\!2} \ve\,  \leq c \,   \| \widetilde{V}\|^2_{W^{4,1}_4}\,   \ve^3
\ea
where the last integral in (\ref{q1fin}) has been explicitly computed.

\n
The next step is to estimate the second term in the r.h.s of (\ref{stiq1}).
We consider the Taylor expansion of the $\ve$-dependent part of the integrand in (\ref{q2}) up to order one. We have
\ba\label{st}
F^1_{\un,j}(\bs{\xi}, L_{\ve} \bs{z};\bs{x}) e^{i\ve \left( B^2_{j,\ve}(\bs{z};\bs{x}) + \bs{A}^2_{j,\ve} (\bs{z} )  \right)}    -        F^1_{\un,j}(\bs{\xi}, \bs{z}^c;\bs{x})= g_{\un,\uo} (\mathcal R^{-1}_j \bo \xi) \,
\ve \!   \! \int_0^1 \!\!\! d\vartheta \,  \frac{\partial T_j}{\partial \ve}(\vartheta\ve,\bs{\xi}, \bs{z};\bs{x})\nonumber\\
\ea
where
\ba T_j(\ve,\bs{\xi}, \bs{z};\bs{x})\equiv
\f{ e^{ i \ve \left( B^2_{j,\ve}(\bs{z};\bs{x}) + \bs{A}^2_{j,\ve} (\bs{z} )  \right)} }{ \sqrt{ 1 - \f{ \ve^2}{v_0^2 \tau_j^2} ( z_1^2 + z_2^2)} } \,
e^{i \left( \bo \xi \cdot \bo x^j + \f{\tau_j}{2} \bo \xi^2 + \ve \f{z_3}{2 v_0} \bo \xi^2 \right) } f(\bo x^j +\tau_j \bo \xi + \ve \f{z_3}{v_0} \bo \xi )
\ea
Substituting  (\ref{st}) into  the second term in the r.h.s. of   (\ref{stiq1}) and computing the sum over $\un$ we obtain
\ba\label{stiq4}
&&\sum_{\un}\int \!\! d\bo x \left| Q^{\ve,2}_{j, \un}(t,\bo x)\right|^2 \nonumber\\
&&= \f{\mathcal N_{\ve}^2 \,\ve^3}{v_0^6 \tau_j^4 (2\pi)^3} \!
\int_{ \Omega_{\ve}}\!\! \!\!\!d \bs{z}
 \int_{ \Omega_{\ve}}\!\! \!\!\!d \bs{z}'\!
\int \!\! d\bo x \, e^{-i \bs{\xi}^c \cdot (\bs{z} - \bs{z}')} \!\!
\int\!\! d\bs{\xi}   \! \int\!\! d\bs{\xi}'
\, e^{i \bs{z} \cdot \bs{\xi} - i \bs{z}' \cdot \bs{\xi}' }
  \, \widetilde{V}(\mathcal R_j^{-1}\bs{\xi})\,\overline{\widetilde{V}}(\mathcal R_j^{-1}\bs{\xi}')\,
e^{- \f{ (\bo \xi - \bo \xi')^2}{4}} \nonumber\\
&&\cdot \int_0^1 \!\!\! d\vartheta \,  \frac{\partial T_j}{\partial \ve}(\vartheta\ve,\bs{\xi}, \bs{z};\bs{x})
\int_0^1 \!\!\! d\vartheta' \, \overline{  \frac{\partial T_j}{\partial \ve} }(\vartheta'\ve,\bs{\xi}', \bs{z}';\bs{x})
\ea
We can now proceed with integration by parts along the same line of the previous case and we find the same kind of result
\ba\label{stiq2fin}
&&\sum_{\un}\int \!\! d\bo x \left| Q^{\ve,2}_{j, \un}(t,\bo x)\right|^2 \leq c\,  \|\widetilde{V}\|^2_{W^{4,1}_4} \, \ve^3
\ea
Using the estimates (\ref{q1fin}), (\ref{stiq2fin}) in (\ref{stiq1}) we obtain the estimate (\ref{stiQ}) and therefore we also conclude the proof of proposition \ref{prop2}.

\hfill $\Box$


\section{Non stationary terms}

For the estimate of the non stationary terms  it will be useful the following technical lemma. 
\begin{lemma}\label{lemmazeta}
For any $s_1, \ldots ,s_{l-1} >0$, $l>1$, let $\;\zeta_{s_1,\ldots ,s_{l-1}}: \mathbb{R}^{3l}   \rightarrow \mathbb{C} $  be defined by
\begin{equation}\label{zeta}
\zeta_{s_1,\ldots ,s_{l-1}}(\bo{\xi_1},\cdots \bo \xi_l)
\equiv \langle \phi_{\uo}, e^{i\bo \xi_l\cdot (\cdot)} U\left(s_{l-1}\right) e^{-i \bo \xi_{l-1} \cdot (\cdot)} U(s_{l-2})
\cdots e^{-i\bo\xi_{2}\cdot (\cdot)} U\left(s_1\right)\, e^{-i\bo\xi_{1}\cdot (\cdot)}\phi_{\uo} \rangle
\end{equation}
Then $\zeta_{s_1,\ldots ,s_{l-1}} \in C^{\infty}(\mathbb{R}^{3l})
$ and  for every
$\underline{\alpha}_1 , \ldots, \underline{\alpha}_l \in \enne^3 $  there exists  $c_{\underline{\alpha}_1, \ldots , \underline{\alpha}_l }$,
independent of $s_1, \ldots , s_{l-1}$, such that the following estimate holds
\begin{equation}
\| D^{\underline{\alpha}_1  }_{\bo{\xi_1}}  \cdots D^{\underline{\alpha}_l}_{\bo \xi_l}  \zeta_{s_1,\cdots ,s_{l-1}}(\bo{\xi_1},\ldots , \bo \xi_l)\|\leq
c_{\underline{\alpha}_1 \ldots \underline{\alpha}_l  }\left(\sum_{i=1}^l\langle \bo\xi_i \rangle\right)^{|\alpha_1|+ \cdots +|\alpha_l| }
\end{equation}
\end{lemma}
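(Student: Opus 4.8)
The plan is to use that $U(s)$ acts metaplectically: conjugation by $U(s)$ carries a Weyl (displacement) operator into another Weyl operator with arguments rotated by the classical harmonic flow. This lets one collapse the whole product of operators defining $\zeta_{s_1,\ldots,s_{l-1}}$ into a single displacement--operator matrix element in the Gaussian ground state $\phi_{\uo}$, which is an explicit Gaussian in $(\bo\xi_1,\ldots,\bo\xi_l)$; smoothness and the derivative bound are then read off directly.

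In detail, write $W(\bo a,\bo b)=e^{i(\bo a\cdot\bo X+\bo b\cdot\bo P)}$ for the Weyl operators on $L^2(\erre^3)$, so that $e^{\pm i\bo\xi_k\cdot(\cdot)}=W(\pm\bo\xi_k,\bo 0)$. Two facts enter: the Weyl relation $W(\bo v_1)W(\bo v_2)=e^{\frac{i}{2}\omega(\bo v_1,\bo v_2)}W(\bo v_1+\bo v_2)$, with $\omega$ the canonical symplectic form; and, since $U(s)$ implements the classical harmonic flow, $U(s)^{-1}W(\bo v)U(s)=W(S_s\bo v)$, where $S_s$ is the phase--space rotation through the angle $s$, whose entries are $\cos s,\sin s$, so that $S_s$ and every finite product of such matrices is bounded uniformly in $s$. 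Starting from $e^{-i\bo\xi_1\cdot(\cdot)}\phi_{\uo}$ and pushing each factor $U(s_k)$ to the right through the remaining Weyl operators by means of these two identities, all the propagators coalesce into $U(s_1+\cdots+s_{l-1})\phi_{\uo}=e^{-\frac{3i}{2}(s_1+\cdots+s_{l-1})}\phi_{\uo}$, while the $l$ Weyl operators merge into a single $W(\bo A,\bo B)$ together with an overall scalar phase. One obtains
\[
\zeta_{s_1,\ldots,s_{l-1}}(\bo\xi_1,\ldots,\bo\xi_l)=e^{i\Theta(\bo\xi_1,\ldots,\bo\xi_l)}\,\big\langle\phi_{\uo},\,W\big(\bo A(\bo\xi_1,\ldots,\bo\xi_l),\,\bo B(\bo\xi_1,\ldots,\bo\xi_l)\big)\phi_{\uo}\big\rangle,
\]
where $\bo A,\bo B$ are $\erre^3$--valued linear functions of $(\bo\xi_1,\ldots,\bo\xi_l)$ and $\Theta$ is a real polynomial of degree $\le 2$ in $(\bo\xi_1,\ldots,\bo\xi_l)$; the coefficients of $\bo A$, $\bo B$ and of the $\bo\xi$--dependent part of $\Theta$ are polynomials in the $\cos s_k,\sin s_k$, hence bounded by constants independent of $s_1,\ldots,s_{l-1}$ (the $\bo\xi$--independent part of $\Theta$ is $-\tfrac32(s_1+\cdots+s_{l-1})$ and contributes only an overall phase).

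A direct Gaussian integration gives $\langle\phi_{\uo},W(\bo a,\bo b)\phi_{\uo}\rangle=e^{-\frac14(|\bo a|^2+|\bo b|^2)}$, so that
\[
\zeta_{s_1,\ldots,s_{l-1}}(\bo\xi_1,\ldots,\bo\xi_l)=e^{Q(\bo\xi_1,\ldots,\bo\xi_l)},\qquad Q=i\Theta-\tfrac14\big(|\bo A|^2+|\bo B|^2\big),
\]
with $Q$ a polynomial of degree $\le 2$ in $(\bo\xi_1,\ldots,\bo\xi_l)$ whose $\bo\xi$--derivatives have coefficients bounded uniformly in $s_1,\ldots,s_{l-1}$. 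In particular $\zeta_{s_1,\ldots,s_{l-1}}\in C^\infty(\erre^{3l})$, and differentiating the exponential (only first and second $\bo\xi$--derivatives of $Q$ survive, and in each resulting term their orders add up to $|\alpha_1|+\cdots+|\alpha_l|$) one gets
\[
D^{\underline{\alpha}_1}_{\bo\xi_1}\cdots D^{\underline{\alpha}_l}_{\bo\xi_l}\,\zeta_{s_1,\ldots,s_{l-1}}=P_{\underline{\alpha}_1,\ldots,\underline{\alpha}_l}(\bo\xi_1,\ldots,\bo\xi_l)\,\zeta_{s_1,\ldots,s_{l-1}},
\]
where $P_{\underline{\alpha}_1,\ldots,\underline{\alpha}_l}$ is a polynomial of degree at most $|\alpha_1|+\cdots+|\alpha_l|$ in $(\bo\xi_1,\ldots,\bo\xi_l)$, built from the $\bo\xi$--derivatives of $Q$ and hence with coefficients bounded uniformly in $s_1,\ldots,s_{l-1}$. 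Since $\zeta_{s_1,\ldots,s_{l-1}}$ is a matrix element of a product of unitary operators between unit vectors, $|\zeta_{s_1,\ldots,s_{l-1}}|\le 1$, and estimating $|P_{\underline{\alpha}_1,\ldots,\underline{\alpha}_l}|\le c_{\underline{\alpha}_1,\ldots,\underline{\alpha}_l}\big(1+|\bo\xi_1|+\cdots+|\bo\xi_l|\big)^{|\alpha_1|+\cdots+|\alpha_l|}\le c_{\underline{\alpha}_1,\ldots,\underline{\alpha}_l}\big(\sum_{i=1}^l\langle\bo\xi_i\rangle\big)^{|\alpha_1|+\cdots+|\alpha_l|}$ yields the assertion.

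The step requiring care is the bookkeeping of the second paragraph: one must verify that the accumulated symplectic rotations and Weyl phases really leave $\bo A,\bo B$ linear and $\Theta$ quadratic, and, crucially, that their coefficients stay bounded \emph{uniformly} in $s_1,\ldots,s_{l-1}$ --- this is where one uses that $\|S_s\|$ is bounded independently of $s$ and that the angles occurring are the partial sums $s_j+\cdots+s_k$. It is worth stressing that a brute--force computation of $\zeta_{s_1,\ldots,s_{l-1}}$ as a $3l$--dimensional Gaussian integral assembled from the Mehler kernels of the $U(s_k)$ would run into the $(\sin s_k)^{-1}$ singularities of those kernels; the operator computation above --- equivalently, the observation that $U(s_k)$ sends coherent states to coherent states of the same width, only moving their centres along the classical ellipses --- bypasses this, which is precisely what makes the final constant independent of the $s_k$.
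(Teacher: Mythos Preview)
Your argument is correct. The paper itself does not give a proof of this lemma: it simply states that ``the proof is a direct generalization of the proof of the one dimensional case given in lemma~3.1 in \cite{fint} and it is omitted.'' So there is no in-paper proof to compare against.

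That said, your route via the metaplectic action of $U(s)$ on Weyl operators is exactly the kind of argument one expects here, and it is clean: once you know $U(s)^{-1}W(\bo v)U(s)=W(S_s\bo v)$ with $S_s$ a phase--space rotation, the whole string collapses to a single displacement operator acting on $\phi_{\uo}$, and the Gaussian vacuum expectation $\langle\phi_{\uo},W(\bo a,\bo b)\phi_{\uo}\rangle=e^{-(|\bo a|^2+|\bo b|^2)/4}$ gives the explicit form $\zeta=e^{Q}$ with $Q$ quadratic. Your observation that the coefficients of $\bo A,\bo B$ and of the $\bo\xi$--dependent part of $\Theta$ are trigonometric polynomials in the partial sums $s_j+\cdots+s_k$, hence uniformly bounded, is the crux for the $s$--independence of the constant; and the Fa\`a di Bruno bookkeeping ($D^{\underline{\alpha}}e^{Q}=P_{\underline{\alpha}}\,e^{Q}$ with $\deg P_{\underline{\alpha}}\le|\alpha|$) together with $|\zeta|\le 1$ gives the stated bound. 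Your final remark about the Mehler--kernel computation hitting $(\sin s_k)^{-1}$ singularities is well taken: that is precisely why the operator/coherent--state viewpoint is the right one for obtaining constants uniform in the $s_k$.
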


\n
The proof is a direct generalization of the proof of the one dimensional case given in lemma $3.1$ in \cite{fint} and it is omitted. Moreover by formula (\ref{sumn1t0}) we have
\begin{eqnarray}\label{sumn1t}
\sum_{\un}e^{- i
\frac{|n|}{\ve} t } g_{\un,\uo}(\bo \xi)\bar{g}_{\un,\uo}(\bo \xi)=
\frac{e^{i\frac{3}{2\varepsilon} t }}{(2\pi)^3}\,
\widetilde{V}(\bo\xi)\overline{\widetilde{V}}(\bo \xi')\,
\zeta_{ t /\varepsilon}(\bo \xi,\bo\xi')
\end{eqnarray}

\n
Using these facts, in the following we prove proposition \ref{prop3}. 

\n
From the explicit expression of $\mathcal T^{\ve}_{j, \un}(t,\bo x)$ given in proposition \ref{prop1} we have
\ba \label{nonstaz1}
&&\|  I_j^{\ve}(t) \left( \Psi_0^{\ve} -  \Psi_{0,j}^{\ve} \right) \|^2 \nonumber\\
&&= \ve^3
\sum_{\un} \int d\bo x\,  \left| \mathcal T^{\ve}_{j, \un}(t,\bo x) \right|^2 = \f{ \mathcal N_{\ve}^2}{\ve^{2}} \sum_{\un} \!\!  \int \!\! d\bo x \! \int_0^t \!\!\!ds \!\! \int\!\!d\bo \xi \!\! \int_{S^2\backslash \mathcal C_j} \!\!\!\!\! \!\! d\hat{\bo u} \; F_{\un} (\bo \xi,s;\bo x) \; e^{ \f{i}{\ve} \Phi_{\un,j} (\bo \xi, s, \hat{\bo u};\bo x)} \! \nonumber\\
 &&\cdot \int_0^t \!\!\!ds' \!\! \int\!\!d\bo \xi' \!\! \int_{S^2\backslash \mathcal C_j}\!\!  \!\!\!\!\! d\hat{\bo u}' \; \overline{F_{\un}} (\bo \xi',s';\bo x) \; e^{ - \f{ i}{\ve} \Phi_{\un,j} (\bo \xi', s', \hat{\bo u}';\bo x)}
\ea
Taking into account the expression of $F_{\un}$ and $\Phi_{\un,j}$ (see  (\ref{Fn}), (\ref{gnm}), (\ref{phinj})) and formula (\ref{sumn1t0}), we compute the sum over $\un$  in (\ref{nonstaz1})
\begin{eqnarray}\label{nonstaz2}
&&\|  I_j^{\ve}(t) \left( \Psi_0^{\ve} -  \Psi_{0,j}^{\ve} \right) \|^2 =
\frac{\mathcal{N}_{\ve}^2}{\ve^2(2\pi)^3}\int\!\! d\bo x \int_0^t \!\!ds\int_0^t ds'\,e^{i\frac{3}{2\ve}(s'-s)}
\int_{S^2\backslash \mathcal C_j} \!\!\!\!\! d\hat{\bo u} \int_{S^2\backslash \mathcal C_j} \!\!\!\!\! d\hat{\bo u}' \; e^{\frac{i}{\ve}v_0(\hat{\bo u}-\hat{ \bo u}')\cdot \bo x}\nonumber\\
&& \cdot \! \int  \!\!  d\bo\xi' \int \!\!  d\bo\xi\,e^{-\frac{i}{\ve}  \Phi_j (\bo \xi', s', \hat{\bo u}' ) }
e^{\frac{i}{\ve}\Phi_j(\bo\xi,s,\hat{\bo u})}G_{\varepsilon}(\bo\xi,\bo \xi',s,s';\bo x)
\end{eqnarray}
where
\begin{eqnarray}
\label{G}
&& G_{\varepsilon}(\bo \xi,\bo \xi',s,s';\bo x) \equiv
 \tilde{V}(\bo\xi) \overline{\tilde{V}}(\bo \xi') \,
 e^{i \left( \bo x\cdot
\bo \xi  +  \frac{s}{2} \bo \xi^2 \right) } f(\bo x+\!
s\bo \xi)\, 
 e^{-i \left( \bo x\cdot
\bo \xi' +  \frac{s'}{2} \bo \xi^{'2} \right) }    f(\bo x+\! s'\bo \xi')
\zeta_{(s-s')/\ve} (\bo \xi, \bo \xi')\nonumber\\
&&\\
&& \Phi_j(\bo \xi,s,\hat{\bo u}) \equiv  v_0 \,
\bo \xi\cdot\left(  s\hat{\bo u}-\tau_j\hat{\bo a}_j \right)
\end{eqnarray}
Let us consider the last two oscillatory integrals in the variables $\bo \xi, \bo \xi'$ in formula (\ref{nonstaz2}). We notice that 
\be
|\nabla_{\bo \xi} \Phi_j |^2 = v_0^2 (s^2 + \tau_j^2 - 2 s \tau_j\, \hat{\bo u} \cdot \hat{\bo a}_j ) = 
v_0^2 \left(s^2 + \tau_j^2 - 2 s \tau_j \, (\mathcal R_j \hat{\bo u})_3 \right)
\ee
For $\hat{\bo u} \in S^2 \setminus \mathcal C_j$ we have $ (\mathcal R_j \hat{\bo u} )_3 = \sqrt{ 1 -  (\mathcal R_j \hat{\bo u} )_1^2 -  (\mathcal R_j \hat{\bo u} )_2^2 } \leq \cos \theta_0$ and therefore 
\ba\label{cac}
&&|\nabla_{\bo \xi}\Phi_j |^2
\geq
v_0^2\left(s^2+\tau_j^2-2s\tau_j \cos \theta_0 \right)   \geq v_0^2  \min_s \left(s^2+\tau_j^2-2s\tau_j \cos \theta_0 \right) \nonumber\\
&&=
v_0^2 \tau_j^2 \sin^2 \theta_0 \geq v_0^2 \tau_1^2 \sin^2 \theta_0  \equiv \Delta^2
\ea
Using the above inequality, we can estimate the two integrals in the variables $\bo \xi , \bo \xi'$ in (\ref{nonstaz2})  exploiting a non-stationary phase argument. With a repeated application $k$ times  of   the identity
\begin{equation}\label{per_parti}
a\, e^{ib}=-i \, div \left(e^{ib}\frac{\nabla b}{|\nabla b|^2}a\right)+i\, e^{ib}div\left(\frac{\nabla b}{|\nabla b|^2}a\right)
\end{equation}
we have
\begin{eqnarray}\label{intparti}
&&\int \!\!d\bo \xi\int \!\!d\bo \xi'\,
e^{\frac{i}{\varepsilon}\Phi_j (\bo \xi,s,\hat{\bo u};\bo x)}
e^{-\frac{i}{\varepsilon}\Phi_j (\bo \xi',s',\hat{\bo u}';\bo x)}
G_{\varepsilon}(\bo \xi,\bo \xi',s,s';\bo x)
=\nonumber\\&&
\varepsilon^{2k}  \! \int \!\!
d\bo \xi\,\int \!\!d\bo \xi'\,
e^{\frac{i}{\ve}\Phi_j (\bo \xi,s,\hat{\bo u};\bo x)}
e^{-\frac{i}{\ve}\Phi_j (\bo \xi',s',\hat{\bo u}';\bo x)}
L_{\bo \xi}^{k}L_{\bo \xi'}^{k}
G_{\varepsilon}(\bo \xi,\bo \xi',s,s';\bo x)
\end{eqnarray}
where
\begin{equation}\label{L}
L_{\bo \xi}G_{\ve} =\sum_{j=1}^3 u_j \f{\partial G_{\ve}}{\partial \xi_j} , \;\;\;\;
\quad u_{j}
=\frac{1}{|\nabla_{\bo {\xi}}\Theta|^2}
\frac{\partial \Theta}{\partial \xi_{j}}
\end{equation}
and moreover
\begin{equation}\label{Lk}L_{\bo \xi}^kG_{\ve}=
\sum_{\underline{m}, |m|=k} u_1^{m_1}u_2^{m_2} u_3^{m_3}D^{\underline{m}}_{\bo \xi}G_{\ve}
\end{equation}
We notice that
\be
|u_{j} | \leq \f{1}{|\nabla_{\bo \xi} \Theta|} < \f{1}{\Delta}
\ee
Therefore
\ba \label{G1}
&& \left| \int \!\!  d\bo\xi\,\int \!\!  d\bo\xi'\,
G_{\varepsilon}(\bo\xi,\bo \xi',s,s';\bo x)
e^{\frac{i}{\ve}\Phi_j (\bo\xi,s,\hat{\bo u})} e^{-\frac{i}{\ve}\Phi_j (\bo\xi,s,\hat{\bo u})}   \right|   \nonumber\\
&&  \leq
\f{\ve^{2k}}{\Delta^{2k}}
\sum_{\underline{m},
|m|=k}
\; \sum_{ \underline{m}', 
|m'|=k}
  \int \!\! d\bo \xi \,  \int \!\! d\bo \xi' \, \left|  D^{\underline{m}'}_{\bo \xi'}D^{\underline{m}}_{\bo \xi} G_{\ve}(\bo\xi,\bo \xi',s,s';\bo x) \right|
\ea
The next step is to estimate the derivatives of  $G_{\varepsilon}$. From (\ref{G}) we have
\begin{eqnarray}\label{gtilde}
&&\sum_{\underline{m}, 
|m|=k} \; \sum_{\underline{m'}, 
|m'|=k} \left|  D^{\underline{m}'}_{\bo \xi'}D^{\underline{m}}_{\bo \xi} G_{\ve}(\bo\xi,\bo \xi',s,s';\bo x) \right| \nonumber\\
&& \leq  c_k \!\! \!
\sum_{\underline{n}, |n|\leq k }  \!\!
| D^{\underline{n}}_{\bo \xi} \zeta_{(s-s')/\varepsilon}(\bo \xi,\bo \xi')| \!\!
  \sum_{\underline{l}, |l|\leq k} \!\!  | D^{\underline{l}}_{\bo \xi} \widetilde{V}(\bo \xi)| 
    \sum_{\underline{l}', |l'|\leq k} \!\!  | D^{\underline{l}'}_{\bo \xi'} \widetilde{V}(\bo \xi')|\langle \bo x + s \bo \xi \rangle^k\langle \bo x + s '\bo \xi'\rangle^k  \nonumber\\
&&\langle s \rangle^k\langle s' \rangle^k \sum_{\underline{p}, |p| \leq k} \! \! |D^{\underline{p}}_{\bo x} f(\bo x + s \bo \xi)|
\sum_{\underline{p}', |p'| \leq k} \! \! |D^{\underline{p}'}_{\bo x} f(\bo x + s' \bo \xi')|\nonumber\\
&&\leq  c_k \, \langle \bo \xi \rangle^k\langle \bo \xi' \rangle^k \!\sum_{\underline{l}, |l|\leq k} \!\!  | D^{\underline{l}}_{\bo \xi} \widetilde{V}(\bo \xi)|
    \sum_{\underline{l}', |l'|\leq k} \!\!  | D^{\underline{l}'}_{\bo \xi'} \widetilde{V}(\bo \xi')|\langle \bo x + s \bo \xi \rangle^k\langle \bo x + s '\bo \xi'\rangle^k  \nonumber\\
&&\langle s \rangle^k\langle s' \rangle^k \sum_{\underline{p}, |p| \leq k} \! \! |D^{\underline{p}}_{\bo x} f(\bo x + s \bo \xi)|
\sum_{\underline{p}', |p'| \leq k} \! \! |D^{\underline{p}'}_{\bo x} f(\bo x + s' \bo \xi')|
\end{eqnarray}
where, in the last step,  we have used lemma \ref{lemmazeta}. Therefore
\ba\label{G3}
&&\left| \int \!\!  d\bo\xi \int \!\!  d\bo\xi'\,
G_{\varepsilon}(\bo\xi,\bo \xi',s,s';\bo x)
e^{\frac{i}{\ve}\Phi_j (\bo\xi,s,\hat{\bo u})} 
 e^{-\frac{i}{\ve}\Phi_j (\bo\xi',s',\hat{\bo u}')}\right| \nonumber\\
 &&\leq c_k \f{\ve^{2k}}{\Delta^{2k}}\langle s \rangle^k\langle s' \rangle^k 
 \int \!\! d\bo \xi \,  \int \!\! d\bo \xi' \,
  \langle \bo \xi \rangle^k\langle \bo \xi' \rangle^k \!\sum_{\underline{l}, |l|\leq k} \!\!  | D^{\underline{l}}_{\bo \xi} \widetilde{V}(\bo \xi)|
    \sum_{\underline{l}', |l'|\leq k} \!\!  | D^{\underline{l}'}_{\bo \xi'} \widetilde{V}(\bo \xi')| \nonumber\\
&&\langle \bo x + s \bo \xi \rangle^k\langle \bo x + s '\bo \xi'\rangle^k \sum_{\underline{p}, |p| \leq k} \! \! |D^{\underline{p}}_{\bo x} f(\bo x + s \bo \xi)|
\sum_{\underline{p}', |p'| \leq k} \! \! |D^{\underline{p}'}_{\bo x} f(\bo x + s' \bo \xi')|
\ea
Using (\ref{G3}) in (\ref{nonstaz2}) we have
\ba\label{nonstaz3}
&&\| I_j^{\ve} (t) (\Psi_0^{\ve} - \Psi_{0,j}^{\ve} ) \|^2 \nonumber\\
&&\leq c_k\,  \ve^{2k-2} \, t^{2k+2} 
 \int \!\! d\bo \xi \!\!  \int \!\! d\bo \xi' \,
  \langle \bo \xi \rangle^k\langle \bo \xi' \rangle^k \!\sum_{\underline{l}, |l|\leq k} \!\!  | D^{\underline{l}}_{\bo \xi} \widetilde{V}(\bo \xi)|
    \sum_{\underline{l}', |l'|\leq k} \!\!  | D^{\underline{l}'}_{\bo \xi'} \widetilde{V}(\bo \xi')| \nonumber\\
&&\int\!\! d\bo x \, \langle \bo x + s \bo \xi \rangle^k\langle \bo x + s '\bo \xi'\rangle^k \sum_{\underline{p}, |p| \leq k} \! \! |D^{\underline{p}}_{\bo x} f(\bo x + s \bo \xi)|
\sum_{\underline{p}', |p'| \leq k} \! \! |D^{\underline{p}'}_{\bo x} f(\bo x + s' \bo \xi')|
\ea
Finally we use  the Schwartz inequality in the last integral and we obtain
\be
\| I_j^{\ve} (t) (\Psi_0^{\ve} - \Psi_{0,j}^{\ve} ) \|^2 \leq c_k\,t^{2k+2}\|\widetilde{V}\|_{W_k^{k,1}}^2 \, \ve^{2k-2}
\ee
concluding the proof of the proposition \ref{prop3}.


\section{Estimate of the rest}
In this section we prove the proposition \ref{prop4}.
From (\ref{J}) we have
\begin{equation}\label{Resto1}
\|J^{\ve}(t)\Psi^{\ve}_0\|\leq    \sum_{\substack{k,l=1\\k\neq l}}^N\|J^{\ve}_{k,l}(t)\Psi^{\ve}_0\|+\sum_{k=1}^N \|J^{\ve}_k(t)\Psi^{\ve}_0\|
\end{equation}

\n
Let us estimate the first  term in the r.h.s of (\ref{Resto1}). Using (\ref{Jkl}), (\ref{Jklnm}), (\ref{Gnm}) and (\ref{thklnm}) we obtain
\begin{eqnarray}\label{J1}
&&\|J^{\ve}_{k,l}(t)\Psi^{\ve}_0\|^2
\nonumber\\
&&=\ve^3\sum_{\un,\um}\int d\bo x\left|\mathcal{J}^{\ve,k,l}_{\un,\um} (t,  \bo x) \right|^2\nonumber\\
&&=\f{ \mathcal N_{\ve}^2}{\ve^{2}}\sum_{\un,\um}\int d\bo x\left|
  \int_0^t \!\! ds \! \!\int_0^s \!\! d\sigma\;\!\! \int\!\!d\bo \xi \!\!
  \int\!\!d\bo \eta \int_{S^2} \!\!\!\!\! d\hat{\bo u} \; G_{\un,\um} (\bo \xi,\bo \eta, s,\sigma;\bo x)
   \; e^{ \f{i}{\ve}\Theta_{\un,\um}^{k,l} (\bo \xi, \bo \eta,  s,\sigma ,\hat{\bo u};\bo x)}\right|^2\nonumber\\
&&=\f{ \mathcal N_{\ve}^2}{\ve^{2}} \sum_{\un,\um}\int d\bo x\int_0^t \!\! ds \!
 \!\int_0^s \!\! d\sigma\;\!\! \int\!\!d\bo \xi \!\!\int\!\!d\bo \eta \int_{S^2} \!\!\!\!\!
 d\hat{\bo u} \;\int_0^t \!\! ds' \! \!\int_0^{s'} \!\! d\sigma'\;\!\! \int\!\!d\bo \xi' \!\!
 \int\!\!d\bo \eta'\int_{S^2} \!\!\!\!\! d\hat{\bo u}' \;\nonumber\\
&& e^{i (\bo \eta+\bo \xi) \cdot \bo x + i(\f{s}{2} \bo \eta^2+\f{\sigma}{2}\bo \xi^2)+
is\bo\eta\cdot\bo\xi} f(\bo x+\sigma\,\bo \xi +s\,\bo\eta)
e^{-\f{i}{\ve}\left[\bo \xi \cdot \bo a_l + \bo \eta \cdot \bo a_k - v_0\, \hat{\bo u}
\cdot (\bo x + \sigma\,\bo \xi + s\,\bo\eta)\right]}\nonumber\\
&& e^{-i (\bo \eta'+\bo \xi') \cdot \bo x - i(\f{s'}{2} \bo \eta^{'2}+\f{\sigma'}{2}\bo \xi^{'2})
-is'\bo\eta'\cdot\bo\xi'}\, f(\bo x+\sigma'\,\bo \xi' +s'\,\bo\eta')
e^{\f{i}{\ve}\left[\bo \xi' \cdot \bo a_l + \bo \eta' \cdot \bo a_k - v_0\, \hat{\bo u}'
\cdot (\bo x + \sigma'\,\bo \xi' +s'\,\bo\eta')\right]}\nonumber\\
&& \sum_{\un}e^{i|n| \f{s-s'}{\ve}}g_{\un,\uo}(\bo \eta)\bar{g}_{\un,\uo}(\bo \eta')\,
 \sum_{\um}e^{i|m| \f{\sigma-\sigma'}{\ve}}g_{\um,\uo}(\bo \xi)\bar{g}_{\um,\uo}(\bo \xi')
\end{eqnarray}
Taking into account formula (\ref{sumn1t0}) we compute the sum over $\un,\um$
\ba\label{J1}
&&\|J^{\ve}_{k,l}(t)\Psi^{\ve}_0\|^2
\nonumber\\
&&=  \f{ \mathcal N_{\ve}^2}{(2\pi)^6\ve^{2}}\int d\bo{x}\int_0^t \!\! ds \! \!
\int_0^t \!\! ds' \! \!\int_0^s \!\! d\sigma\!\! \int_0^{s'} \!\! d\sigma'\,e^{i\f{3}{2\ve}(s'-s)}
e^{i\f{3}{2\ve}(\sigma'-\sigma)} \int_{S^2} \!\!\!\!\! d\hat{\bo u} \; \int_{S^2} \!\!\!\!\! d\hat{\bo u}'
e^{-\f{i}{\ve}( \hat{\bo u}- \hat{\bo u}' )\cdot \bo x} \nonumber\\
&&\int\!\!d\bo \xi \!\! \int\!\!d\bo \xi' \!\!\int\!\!d\bo \eta \int\!\!d\bo \eta'\,A^{\ve}(\bo\xi,\bo\xi',
\bo\eta,\bo\eta', s, s', \sigma ,\sigma';\bo x)e^{\f{i}{\ve}\Theta_{kl}(\bo\xi,
\bo\eta, s, \sigma, \hat{\bo u} )}e^{-\f{i}{\ve}\Theta_{kl}(\bo\xi',
\bo\eta', s', \sigma', \hat{\bo u} )}
\ea
where
\ba
 &&A^{\ve}(\bo\xi,\bo\xi',\bo\eta,\bo\eta', s, s', \sigma ,\sigma'; \bo x)=
e^{i (\bo \eta+\bo \xi) \cdot \bo x + i(\f{s}{2} \bo \eta^2+\f{\sigma}{2}\bo \xi^2)+is\bo\eta\cdot\bo\xi}
 e^{-i (\bo \eta'+\bo \xi') \cdot \bo x - i(\f{s'}{2} \bo \eta^{'2}+\f{\sigma'}{2}\bo \xi^{'2})
 -is'\bo\eta'\cdot\bo\xi'}\nonumber\\\label{A}
 &&\widetilde{V}(\bo\eta)\overline{\widetilde{V}}(\bo\eta')
  \widetilde{V}(\bo\xi)\overline{\widetilde{V}}(\bo\xi')\zeta_{(s'-s)/\ve}(\bo\eta,\bo\eta')
\zeta_{(\sigma'-\sigma)/\ve}(\bo\xi,\bo\xi')
f(\bo x+\sigma\,\bo \xi +s\,\bo\eta)f(\bo x+\sigma'\,\bo \xi' +s'\,\bo\eta')\nonumber\\
\\
&&\Theta_{kl}(\bo\xi,
\bo\eta, s, \sigma, \hat{\bo u} )=v_0\bo \xi\cdot ( \sigma\,\hat{\bo u} -\tau_l \hat{\bo a}_l)+
v_0\bo \eta\cdot ( s\,\hat{\bo u} -\tau_k \hat{\bo a}_k)\ea
Let us consider the oscillatory integrals in the variables $\bo \xi, \bo\xi', \bo\eta, \bo \eta'$ in formula (\ref{J1}).
We observe  that the $|\nabla_{\bo\xi,\bo\eta} \Theta_{kl} |$ doesn't vanish for $\hat{\bo u}\in S^2$. In fact one has
\ba\label{stigrad1}
&&|\nabla_{\bo \xi,\bo \eta} \Theta_{kl} |^2 = \left(   v_0^2 |\sigma \hat{\bo u} - \tau_l 
\hat{\bo a}_l |^2 + v_0^2 |s \hat{\bo u} - \tau_k \hat{\bo a}_k |^2       \right) \geq 
\min_{\hat{ \bo u} \in S^2} \left(   v_0^2 |\sigma \hat{\bo u} - \tau_l 
\hat{\bo a}_l |^2 + v_0^2 |s \hat{\bo u} - \tau_k \hat{\bo a}_k |^2       \right) \nonumber\\
&&= \min \left\{   \min_{\hat{ \bo u} \in S^2 \setminus \mathcal C_k} \left(   v_0^2 |\sigma \hat{\bo u} - \tau_l 
\hat{\bo a}_l |^2 + v_0^2 |s \hat{\bo u} - \tau_k \hat{\bo a}_k |^2       \right)\! , \,  
 \min_{\hat{ \bo u} \in \mathcal C_k} \left(   v_0^2 |\sigma \hat{\bo u} - \tau_l 
\hat{\bo a}_l |^2 + v_0^2 |s \hat{\bo u} - \tau_k \hat{\bo a}_k |^2       \right) \right\} \nonumber\\
&&\geq \min \left\{   \min_{\hat{ \bo u} \in S^2 \setminus \mathcal C_k} \left(    v_0^2 |s \hat{\bo u} - \tau_k \hat{\bo a}_k |^2       \right)\! , \,  
 \min_{\hat{ \bo u} \in \mathcal C_k} \left(   v_0^2 |\sigma \hat{\bo u} - \tau_l 
\hat{\bo a}_l |^2        \right) \right\} \nonumber\\
&& \geq  \min \Big\{    v_0^2 \tau_k^2 \sin^2 \theta_0 , \,    v_0^2 \tau_l^2 \sin^2 \theta_0    \Big\} \geq  v_0^2 \tau_1^2 \sin^2 \theta_0 = \Delta^2
\ea
where in the last step we have used (\ref{cac}).  Hence
we can estimate the  integrals in the variables $\bo\xi,\bo \xi', \bo\eta,\bo\eta'$
exploiting a non-stationary phase argument.
With a repeated application of the identity (\ref{per_parti}) and using (\ref{stigrad1})
we have
\ba\label{j2}
&&\left|\int\!\!d\bo \xi \!\! \int\!\!d\bo \xi' \!\!\int\!\!d\bo \eta
\int\!\!d\bo \eta'\,A^{\ve}(\bo\xi,\bo\xi',
\bo\eta,\bo\eta', s, s', \sigma ,\sigma'; \bo x)e^{\f{i}{\ve}\Theta_{kl}(\bo\xi,
\bo\eta, s, \sigma, \hat{\bo u} )}e^{-\f{i}{\ve}\Theta_{kl}(\bo\xi',
\bo\eta', s', \sigma', \hat{\bo u} )}\right|\nonumber\\
&&\leq\frac{\ve^{2d}}{\Delta^{2d}}\!\!\!
\sum_{\substack{\um_1,\,\um_2,\,\um_1',\,\um_2'\\ |m_1|+|m_2|=d\\|m_1'|+|m_2'|=d}}\!
\int\!\!d\bo \xi \!\! \int\!\!d\bo \xi' \!\!\int\!\!d\bo \eta \int\!\!d\bo \eta'\,
\left|D^{\um_1}_{\bo\xi}D^{\um_2}_{\bo\eta}D^{\um'_1}_{\bo\xi'}D^{\um'_2}_{\bo\eta'}A^{\ve}(\bo\xi,\bo\xi',
\bo\eta,\bo\eta', s, s', \sigma ,\sigma'; \bo x)\right|\nonumber\\\ea
for any integer $d >0$. 
The next step is  to estimate the derivatives of $A^{\ve}$. Proceeding as in the previous section  we obtain
\ba\label{j4}
&&\left|\int\!\!d\bo \xi \!\! \int\!\!d\bo \xi' \!\!\int\!\!d\bo \eta
\int\!\!d\bo \eta'\,A^{\ve}(\bo\xi,\bo\xi',
\bo\eta,\bo\eta', s, s', \sigma ,\sigma'; \bo x)e^{\f{i}{\ve}\Theta_{kl}(\bo\xi,
\bo\eta, s, \sigma, \hat{\bo u} )}e^{-\f{i}{\ve}\Theta_{kl}(\bo\xi',
\bo\eta', s', \sigma', \hat{\bo u} )}\right|\nonumber\\
&&\leq c_d  \, \f{\ve^{2d}}{\Delta^{2d}}\langle s \rangle^d\langle s' \rangle^d\langle \sigma\rangle^d
\langle \sigma' \rangle^d
 \int \!\! d\bo \xi \,  \int \!\! d\bo \xi' \,
  \langle \bo \xi \rangle^d\langle \bo \xi' \rangle^d \!\sum_{\underline{l}, |l|\leq d} \!\!
   | D^{\underline{l}}_{\bo \xi} \widetilde{V}(\bo \xi)|
    \sum_{\underline{l}', |l'|\leq d} \!\!  | D^{\underline{l}'}_{\bo \xi'}\overline{ \widetilde{V}}(\bo \xi')| \nonumber\\
&& \int \!\! d\bo \eta \,  \int \!\! d\bo \eta' \,
  \langle \bo \eta \rangle^d\langle \bo \eta' \rangle^d \!\sum_{\underline{p}, |p|\leq d} \!\!
   | D^{\underline{p}}_{\bo \eta} \widetilde{V}(\bo \eta)|
    \sum_{\underline{p}', |p'|\leq d} \!\!  | D^{\underline{p}'}_{\bo \eta'}\overline{ \widetilde{V}}(\bo \eta')| \nonumber\\
&&\langle \bo x + \sigma \bo \xi+ s \bo\eta \rangle^d\langle \bo x + \sigma' \bo \xi'+ s '\bo\eta\rangle^d
 \sum_{\underline{q}, |q| \leq d} \! \! |D^{\underline{q}}_{\bo x} f(\bo x + \sigma \bo \xi+ s \bo\eta )|
\sum_{\underline{q}', |q'| \leq d} \! \! |D^{\underline{q}'}_{\bo x} f(\bo x + \sigma' \bo \xi'+ s '\bo\eta)|
\nonumber\\\ea
Then, substituting  (\ref{j4}) in (\ref{J1}) and  using the Schwartz inequality in the integral
 with respect to the variable $\bo x$, we have the following estimate of the first term in the r.h.s. of (\ref{Resto1})
\ba\|J^{\ve}_{k,l}(t)\Psi^{\ve}_0\|^2  \leq c_d\, t^{4d+4}\, \|\widetilde{V}\|_{W_d^{d,1}}^4\,  \ve^{2d-2}
\ea

\n
Let us analyze the second term in the r.h.s of (\ref{Resto1}). We write 
\be\label{j5}
\|J^{\ve}_{k}(t)\Psi^{\ve}_0\|\leq\|J^{\ve}_{k}(t)(\Psi^{\ve}_0-\Psi^{\ve}_{0,k})\|+
\|J^{\ve}_{k}(t)\Psi^{\ve}_{0,k}\|
\ee
From definition (\ref{PJ}) and from (\ref{Jk}) we have
\ba\label{j6}
&&\|J^{\ve}_{k}(t)(\Psi^{\ve}_0-\Psi^{\ve}_{0,k})\|^2 = \nonumber\\
&& \f{ \mathcal N_{\ve}^2}{\ve^{2}}
 \sum_{\un} \left|\sum_{\um}\int_0^t \!\! ds \! \!\int_0^s \!\! d\sigma\!\! \int\!\!d\bo \xi \!\!\int\!\!d\bo \eta \int_{S^2\backslash \mathcal{C}_k}
   \!\!\!\!\! d\hat{\bo u} \; L_{\un,\um} (\bo \xi,  \bo \eta, s,\sigma;\bo x)
   \; e^{ \f{i}{\ve} \Theta_{\un,\um}^{k} (\bo \xi, \bo \eta,  s,\sigma \hat{\bo u};\bo x)}\right|^2
\ea
and $L_{\un,\um},\Theta_{\un,\um}^{k}$ are defined by (\ref{Lnm}), (\ref{thknm}).  
The sum over $\um$ in (\ref{j6}) can be computed as follows
\ba\label{summn1}
&&\sum_{\um}e^{\f{i}{\ve}|m|( \sigma-s)}g_{\un,\um}(\bo \eta)
g_{\um,\uo}(\bo \xi)\nonumber\\
&&=
\frac{\widetilde{V}(\bo \xi)\overline{\widetilde{V}}(\bo \eta)}{(2\pi)^3}
\sum_{\um}\!\!\int d\bo y \phi_{\un}(\bo y)\phi_{\um}(\bo y)e^{-i\bo y\cdot \bo \eta}
\!\!\int d\bo y' (\bo y')\phi_{\uo}(\bo y')e^{i\bo y'\cdot \bo \xi}
e^{\f{i}{\ve}|m|( \sigma-s)}\nonumber\\
&&=
\frac{\widetilde{V}(\bo \xi)\overline{\widetilde{V}}(\bo \eta)}{(2\pi)^3}
\int d\bo y\,\overline{\phi_{\un}(\bo y)e^{-i\bo y\cdot \bo \eta}}\int d\bo y'
\sum_{\um}\phi_{\um}(\bo y)\phi_{\um}(\bo y')e^{i
\frac{|m|}{\ve} (\sigma-s) }\phi_{\uo}(\bo y')e^{-i\bo y'\cdot\bo \xi}\nonumber\\
&&
=\frac{\widetilde{V}(\bo\xi)\overline{\widetilde{V}}(\bo \eta)}{(2\pi)^3}
e^{i\frac{3}{2\varepsilon} (\sigma-s) }
\langle\phi_{\un},  e^{i \bo \eta \cdot (\cdot) } U\!\!\left(\frac{s-\sigma}{\ve}\right)
 e^{- i \bo \xi \cdot (\cdot) } \phi_{\uo} \rangle
\ea
therefore
\ba\label{summn2}
&&\sum_{\un}e^{\f{i}{\ve}|n|(s-s')}\sum_{\um}e^{\f{i}{\ve}|m|( \sigma-s)}g_{\un,\um}(\bo \eta)
g_{\um,\uo}(\bo \xi)\sum_{\um'}e^{-\f{i}{\ve}|m'|( \sigma'-s')}\bar{g}_{\un,\um'}(\bo \eta')
\bar{g}_{\um',\uo}(\bo \xi')\nonumber\\
 &&=\frac{\widetilde{V}(\bo\xi)\overline{\widetilde{V}}(\bo \eta)\widetilde{V}(\bo\xi')\overline{\widetilde{V}}(\bo \eta')e^{i\frac{3}{2\varepsilon} (\sigma-\sigma') }}{(2\pi)^6}\nonumber\\
&& \cdot
\langle  e^{i \bo \eta' \cdot (\cdot) } U\!\!\left(\frac{s'-\sigma'}{\ve}\right)
 e^{- i \bo \xi' \cdot (\cdot) } \phi_{\uo},U\!\!\left(\frac{s'-s}{\ve}\right)
  e^{i \bo \eta\cdot (\cdot) } U\!\!\left(\frac{s-\sigma}{\ve}\right)
 e^{- i \bo \xi \cdot (\cdot) } \phi_{\uo} \rangle\nonumber\\
 &&=\frac{\widetilde{V}(\bo\xi)\overline{\widetilde{V}}(\bo \eta)\widetilde{V}(\bo\xi')\overline{\widetilde{V}}(\bo \eta')e^{i\frac{3}{2\varepsilon} (\sigma-\sigma') }}{(2\pi)^6}
\zeta_{s_1, s_2, s_3}(\bo \xi',\bo \eta',\bo \eta,\bo \xi) \ea
where $s_1= (\sigma' - s')/ \ve, \; s_2= (s'-s)/ \ve,\; s_3=(s-\sigma)/ \ve$. 
Using (\ref{summn2}) we have
\ba
&&\|J^{\ve}_{k}(t)(\Psi^{\ve}_0-\Psi^{\ve}_{0,k})\|^2=\nonumber\\
&&\f{ \mathcal N_{\ve}^2}{(2\pi)^6\ve^{2}}\!\!\int d\bo x \!\!\int_0^t \!\! ds \! \!
\int_0^s \!\! d\sigma\!\!\int_0^t \!\! ds' \! \!
\int_0^s \!\! d\sigma'\,e^{\f{3i}{2\ve}(\sigma-\sigma')}\!\!\int_{S^2\backslash \mathcal{C}_k} \!\!\!\!\!
d\hat{\bo u} \int_{S^2\backslash \mathcal{C}_k} \!\!\!\!\! d\hat{\bo u'}\;e^{\f{i}{\ve}v_0(\hat{\bo u}-\hat{\bo u}')\cdot \bo x}
\nonumber\\&& \cdot\int\!\!d\bo \xi \!\!\int\!\!d\bo \eta \!\!\int\!\!d\bo \xi' \!\!\int\!\!d\bo \eta'\; L^{\ve} (\bo \xi,\bo \xi',  \bo \eta,\bo \eta', s, s',\sigma,\sigma';\bo x) \;e^{\f{i}{\ve}
\Theta_{k} (\bo \xi,\bo \eta,  s,\sigma, \hat{\bo u})}e^{-\f{i}{\ve}
\Theta_{k} (\bo \xi',\bo \eta',  s',\sigma', \hat{\bo u}')}
\ea
where \ba
&&L^{\ve}(\bo \xi,\bo \xi',  \bo \eta,\bo \eta', s, s',\sigma,\sigma';\bo x) 
 = e^{i (\bo \eta+\bo \xi) \cdot \bo x +
i(\f{s}{2} \bo \eta^2+\f{\sigma}{2}\bo \xi^2)+is\bo\eta\cdot\bo\xi}
 f(\bo x+\sigma\,\bo \xi +s\,\bo\eta)
\nonumber\\
&& \cdot e^{-i (\bo \eta'+\bo \xi') \cdot \bo x -
i(\f{s}{2} \bo \eta^{'2}+\f{\sigma'}{2}\bo \xi^{'2})-is'\bo\eta'\cdot\bo\xi'}f(\bo x+\sigma'\,\bo \xi' +s'\,\bo\eta')\widetilde{V}(\bo\xi)
\overline{\widetilde{V}}(\bo \xi')\widetilde{V}(\bo\eta)
\overline{\widetilde{V}}(\bo \eta')\zeta_{s_1, s_2, s_3}(\bo \xi',\bo \eta',\bo \eta,\bo \xi)   \nonumber\\
&&\Theta_{k} (\bo \xi,\bo \eta,  s,\sigma, \hat{\bo u};\bo x)\!\!=
  v_0\, (\sigma \hat{\bo u}-\tau_k\hat{\bo a_k}) \cdot\,\bo \xi
  +   v_0\, (s \hat{\bo u}-\tau_k\hat{\bo a_k}) \cdot\,\bo \eta
\ea
We observe that the $|\nabla_{\bo\xi,\bo\eta}\Theta_{k}|$   doesn't vanish in $S^2\backslash \mathcal{C}_k$. In fact we have
\be|\nabla_{\bo\xi,\bo\eta}\Theta_{k}|\geq v_0\tau_k\sin \theta_0\geq  v_0\tau_1\sin \theta_0\equiv \Delta
\ee 
Hence we can estimate the  integrals in the variables $\bo\xi,\bo \xi', \bo\eta, \bo\eta'$
exploiting a non-stationary phase argument along the same line of the previous case
(see (\ref{j2}),(\ref{j4})) and we
find 
\be\|J^{\ve}_{k}(t)(\Psi^{\ve}_0-\Psi^{\ve}_{0,k})\|^2\leq c_d \;t^{4d+4}\,\|\tilde{V}\|_{W_d^{d,1}}^4\, \ve^{2d-2}\ee
for any integer $d>0$.

\n
Finally we consider the last term in (\ref{j5}).  We have 
\ba\label{j7}
&&\|J^{\ve}_{k}(t)\Psi^{\ve}_{0,k}\|^2 =  \f{ \mathcal N_{\ve}^2}{\ve^{2}}\sum_{\un}
  \left|\sum_{\um} \!\!\int_0^t \!\! ds \! \!\int_0^s \!\! d\sigma\!\! \int\!\!d\bo \xi \!\!\int\!\!d\bo \eta \int_{ \mathcal{C}_k}
   \!\!\!\!\! d\hat{\bo u} \; L_{\un,\um} (\bo \xi,  \bo \eta, s,\sigma;\bo x)
   \; e^{ \f{i}{\ve} \Theta_{\un,\um}^{k} (\bo \xi, \bo \eta,  s,\sigma \hat{\bo u};\bo x)}\right|^2\nonumber\\
   &&
\ea
We rewrite  the integral
in the variables $s, \sigma, \bo\xi, \bo\eta$ in a more convenient form.
In particular by the change of variables $\bo\xi\rightarrow\mathcal{R}_k\bo\xi$, $\bo\eta\rightarrow\mathcal{R}_k\bo\eta$,
$\hat{\bo u}\rightarrow \mathcal{R}_k\hat{\bo u} $ with  $\mathcal{R}_k$  defined in (\ref{mr}), we
obtain
\ba \label{j8}&&\int_0^t \!\! ds \! \!\int_0^s \!\! d\sigma\!\! \int\!\!d\bo \xi \!\!\int\!\!d\bo \eta \int_{ \mathcal{C}_k}
   \!\!\!\!\! d\hat{\bo u} \; L_{\un,\um} (\bo \xi,  \bo \eta, s,\sigma;\bo x)
   \; e^{ \f{i}{\ve} \Theta_{\un,\um}^{k} (\bo \xi, \bo \eta,  s,\sigma \hat{\bo u};\bo x)}\nonumber\\
   &&=\int_0^t \!\! ds \! \!\int_0^s \!\! d\sigma\!\! \int\!\!d\bo \xi \!\!\int\!\!d\bo \eta \int_{\mathcal{ C}_0}
   \!\!\!\!\! d\hat{\bo u} \; L_{\un,\um} (\mathcal{R}_k^{-1}\bo \xi,  \mathcal{R}_k^{-1}\bo \eta, s,\sigma;\bo x)
   \; e^{ \f{i}{\ve} \Theta_{\un,\um}^{k} (\mathcal{R}_k^{-1}\bo \xi, \mathcal{R}_k^{-1}\bo \eta,  s,\sigma , \hat{\bo u};\bo x)}\nonumber\\
   \ea
   where
   \ba
&&L_{\un,\um} (\mathcal{R}_k^{-1}\bo \xi,  \mathcal{R}_k^{-1}\bo \eta, s,\sigma;\bo x)\nonumber\\
&&=
 e^{i (\bo \eta+\bo \xi) \cdot \bo x^k + i(\f{s}{2} \bo \eta^2+\f{\sigma}{2}\bo \xi^2)+is\bo\eta\cdot\bo\xi}g_{\un,\um}(\mathcal{R}_k^{-1}
\bo \eta)g_{\um,\uo}(\mathcal{R}_k^{-1}\bo \xi) f(\bo x^k+\sigma\,\bo \xi +s\,\bo\eta) \\
&&\Theta_{\un,\um}^{k} (\mathcal{R}_k^{-1}\bo \xi, \mathcal{R}_k^{-1}\bo \eta,  s,\sigma, \hat{\bo u};\bo x) \nonumber\\
&& =
 - (\xi_3+\eta_3) |\bo a_k| + v_0\, \hat{\bo u} \cdot (\bo x^k +\sigma\,\bo \xi +
  s\,\bo\eta) +(|n| -|m|)s+|m|\sigma
 \ea
and $\bo x^k\equiv \mathcal{R}_k \bo x$.  Moreover we parametrize the unit vector $\hat{\bs{u}} \in \mathcal C_0$ as follows
\be
\hat{\bs{u}} = \left( \mu, \nu, \sqrt{1-\mu^2  -\nu^2} \right), \;\;\;\; (\mu, \nu) \in \mathcal{D}_0 \equiv \left\{ (a,b) \in \erre^2, \; a^2 + b^2 < \sin^2 \theta_0 \right\}
\ee
Therefore the integral (\ref{j8}) is rewritten as
\ba\label{j9}
&&\int_0^t \!\! ds \! \!\int_0^s \!\! d\sigma\!\! \int\!\!d\bo \xi \!\!\int\!\!d\bo \eta \int_{ \mathcal{C}_0}
   \!\!\!\!\! d\hat{\bo u} \; L_{\un,\um} (\bo \xi,  \bo \eta, s,\sigma;\bo x)
   \; e^{ \f{i}{\ve} \Theta_{\un,\um}^{k} (\bo \xi, \bo \eta,  s,\sigma, \hat{\bo u};\bo x)}=\nonumber\\
   &&\int_0^t \!\! ds \! \!\int_0^s \!\! d\sigma\!\! \int\!\!d\bo \xi \!\!\int\!\!d\bo \eta \int_{\mathcal{ D}_0}
   \!\!\!\!\! d\mu d\nu \; \widetilde{L}_{\un,\um} (\bo \xi,  \bo \eta, s,\sigma, \mu, \nu ;\bo x)
   \; e^{ \f{i}{\ve} \widetilde{\Theta}_{\un,\um}^{k} (\bo \xi, \bo \eta,  s,\sigma , \mu , \nu;\bo x)}\nonumber\\
\ea
where
\ba\label{theta}&&
\widetilde{L}_{\un,\um} (\bo \xi, \bo \eta, s,\sigma,\mu, \nu;\bo x)=\frac{1}{\sqrt{1-\mu^2-\nu^2}}L_{\un,\um} (\mathcal{R}_k^{-1}\bo \xi,  \mathcal{R}_k^{-1}\bo \eta, s,\sigma;\bo x) \\
&& \widetilde{\Theta}_{\un,\um}^{k} (\bo \xi, \bo \eta,  s,\sigma , \mu , \nu;\bo x)= - (\xi_3+\eta_3) |\bo a_k| + v_0\, \mu( x^k_1 +\sigma\, \xi_1 +
  s\,\eta_1) \nonumber\\
  && +\, v_0\, \nu ( x^k_2 \!+\! \sigma\, \xi_2 \!+\! 
  s\,\eta_2 )+ v_0\sqrt{1 \!-\! \mu^2 \!-\! \nu^2}( x^k_3 \!+\! \sigma\, \xi_3 \!+\!
  s\,\eta_3 ) +(|n| -|m|)s+|m|\sigma
\ea
Let us introduce the following linear change of coordinates
\ba\label{chco1}
&&(\mu,\nu,s) = L_{\ve}^1 (z_1,z_2,z_3) \equiv L_{\ve}^1 \bs{z} \;\;\;\;\, \sigma=L_{\ve}^2 p\\
&&\mu= \f{\ve}{v_0 \tau_k} z_1, \;\;\; \nu= \f{\ve}{v_0 \tau_k} z_2, \;\;\; s= \tau_k +\f{\ve}{v_0} z_3\;\;\;
\sigma=\tau_k +\f{\ve}{v_0} p\ea
The domain of integration in the variables $\bs{z}, p$ is
\ba
&&\Lambda_{\ve} = \left\{ \bs{z} \in \erre^3\, p\in \erre\,|\, z_1^2+z_2^2 < \ve^{-2}v_0^2\tau_k^2 \sin^2 \theta_0,\; -\ve^{-1} v_0 \tau_k < z_3, p < \ve^{-1}v_0 (t-\tau_k)  \right\}\;\;\;\;\;\;
\ea
Using (\ref{chco1}), (\ref{j9}) in (\ref{j7}) we obtain
\ba\label{j11}
&&\|J^{\ve}_{k}(t)\Psi^{\ve}_{0,k}\|^2 =  \f{ \mathcal{N}_{\ve}^2}{\ve^{2}}\sum_{\un}
  \left|\sum_{\um}\int_0^t \!\! ds \! \!\int_0^s \!\! d\sigma\!\! \int\!\!d\bo \xi \!\!\int\!\!d\bo \eta \int_{ \mathcal{C}_k}
   \!\!\!\!\! d\hat{\bo u} \; L_{\un,\um} (\bo \xi,  \bo \eta, s,\sigma;\bo x)
   \; e^{ \f{i}{\ve} \Theta_{\un,\um}^{k} (\bo \xi, \bo \eta,  s,\sigma \hat{\bo u};\bo x)}\right|^2\nonumber\\
&&=\frac{\ve^6\mathcal{N}_{\ve}^2}{v_0^8\tau_k^4} \sum_{\un}\sum_{\um, \um'}
\int_{\Lambda_{\ve}} d \bo z\,dp\!\! \int\!\!d\bo \xi \!\!\int\!\!d\bo \eta \!
 \int_{\Lambda_{\ve}} d \bo z'\, dp' \!\! \int\!\!d\bo \xi' \!\!\int\!\!d\bo \eta'
 \widetilde{L}_{\un,\um} (\bo \xi,  \bo \eta, L_{\ve}^1\bo z,L_{\ve}^2 p;\bo x)\nonumber\\
&&\cdot  \, \overline{\widetilde{L}}_{\un,\um'} (\bo \xi',  \bo \eta', L_{\ve}^1\bo z', L_{\ve}^2p;\bo x)\,e^{i\bo z\cdot \bo \eta +ip\xi_3}
e^{iz_1\left(\f{x_1^k}{\tau_k}+\xi_1\right)+iz_2\left(\f{x_2^k}{\tau_k}+\xi_2\right)+
  i z_3\f{|n|-|m|}{v_0}+i p\f{|m|}{v_0}+iA^{\ve}(\bo z, p, \xi_3,\eta_3; \bo x) }\nonumber\\
&&\cdot  \, e^{-i\bo z'\cdot \bo\eta'-ip'\xi_3'}
 e^{ -   iz_1'\left(\f{x_1^k}{\tau_k}+\xi_1'\right)-iz_2'\left(\f{x_2^k}{\tau_k}+\xi_2'\right)-
  i z_3'\f{|n|-|m|'}{v_0}-i p'\f{|m|'}{v_0}-i B^{\ve}(\bo z',p',\xi_3',\eta_3'; \bo x) }
  \ea
  where
  \ba B^{\ve}(\bo z,p,\xi_3,\eta_3; \bo x)=\frac{\sqrt{1 \!-\! \left( \! \f{\ve z_1}{v_0 \tau_k} \! \right)^2 \!-\! \left(\! \f{\ve z_2}{v_0 \tau_k}\! \right)^2}-1}{\ve^2}(v_0x_3^k+|\bo a_k|\xi_3+|\bo a_k|\eta_3+\ve z_3\eta_3+\ve p\xi_3)
  \ea
Now we  compute the sum over $\un, \um , \um'$ exploiting (\ref{summn1}), (\ref{summn2}) and we obtain
\ba\label{j11}
&&\|J^{\ve}_{k}(t)\Psi^{\ve}_{0,k}\|^2 \equiv\frac{\ve^6\mathcal{N}_{\ve}^2}{(2\pi)^6 v_0^8\tau_k^4}
\int_{\Lambda_{\ve}}\!\! d \bo zdp\int_{\Lambda_{\ve}} \!\! d \bo z'dp\, E_k (\bo z,p, \bo z',p')\ea
where
\ba &&E_k (\bo z,p, \bo z',p')=
\int\!\!d\bo \xi \!\!\int\!\!d\bo \eta \!
  \int\!\!d\bo \xi' \!\!\int\!\!d\bo \eta'
\zeta_{b_1,b_2,b_3}(\mathcal{R}_k^{-1}\bo\xi',
\mathcal{R}_k^{-1}\bo\eta',\mathcal{R}_k^{-1}\bo\eta,\mathcal{R}_k^{-1}\bo\xi)\nonumber\\
&&e^{\frac{3i}{2v_0}(p-p')}\widetilde{V}(\mathcal{R}_k^{-1}\bo\xi)\overline{\widetilde{V}}
(\mathcal{R}_k^{-1}\bo\xi')
\widetilde{V}(\mathcal{R}_k^{-1}\bo\eta)\overline{\widetilde{V}}(\mathcal{R}_k^{-1}\bo\eta')
 e^{i (\bo \eta+\bo \xi) \cdot \bo x^k} e^{-i (\bo \eta'+\bo \xi') \cdot \bo x^k}\nonumber\\
 &&e^{ i\f{\tau_k}{2} (\bo \eta^2+\bo \xi^2)+i\tau_k\bo\eta\cdot\bo\xi}
 e^{ -i\f{\tau_k}{2} (\bo \eta^{'2}+\bo \xi^{'2})-i\tau_k\bo\eta'\cdot\bo\xi'}
 e^{ i\f{\ve}{2v_0} (z_3\bo \eta^2+p\bo \xi^2)+i\frac{\ve z_3 p}{v_0}\bo\eta\cdot\bo\xi}
 e^{ i\f{\ve}{2v_0} (z_3'\bo \eta^{'2}+p'\bo \xi^{'2})+i\frac{\ve z_3' p'}{v_0}\bo\eta'\cdot\bo\xi'}
 \nonumber\\
 &&f\left(\bo x^k+\tau_k\bo\eta+\tau_k\bo \xi +\frac{\ve}{v_0}z_3\bo\eta+\frac{\ve}{v_0}p\bo\xi\right) f\left(\bo x^k+\tau_k\bo\eta'+\tau_k\bo \xi' +\frac{\ve}{v_0}z_3'\bo\eta'+\frac{\ve}{v_0}p'\bo\xi'\right)
\nonumber\\
&&  e^{i\bo z\cdot \bo\eta+ip\xi_3}
e^{iz_1\left(\f{x_1^k}{\tau_k}+\xi_1\right)+iz_2\left(\f{x_2^k}{\tau_k}+\xi_2\right)
+i B^{\ve}(\bo z, p, \xi_3,\eta_3; \bo x) }\nonumber\\
&&e^{-i\bo z'\cdot \bo \eta-ip'\xi_3'}
 e^{ -   iz_1'\left(\f{x_1^k}{\tau_k}+\xi_1'\right)-iz_2'\left(\f{x_2^k}{\tau_k}+\xi_2'\right)-
i B^{\ve}(\bo z', p', \xi_3',\eta_3'; \bo x) }\ea
and $b_1=(p'-z_3')/v_0,\, b_2=(z_3'-z_3)/{v_0}, \,b_3=(z_3-p)/{v_0}$.\newline
It remains to show that the integrals  in (\ref{j11}) are bounded. 
Exploiting the  identity                                       
\ba  \partial^{4}_{\eta_l}\partial^4_{\xi_3}\left[e^{i\bo z\cdot \bo\eta+ip\xi_3}\right]  \partial^{4}_{\eta_l'}\partial^4_{\xi_3'}\left[e^{-i\bo z'\cdot \bo \eta-ip'\xi_3'}
\right]=
z_l^4p^4z_l^{'4}p^{'4}\left[e^{i\bo z\cdot \bo\eta+ip\xi_3}\right]
\left[e^{-i\bo z'\cdot \bo \eta-ip'\xi_3'}\right]\;\;\,l=1,2,3
\ea
we can integrate by parts and  we have 
\ba\label{pi}\langle \bo z\rangle^4\langle \bo z'\rangle^4 \langle p\rangle^4\langle p'\rangle^{4}| E_k (\bo z,p, \bo z',p')|\leq c\,\|\widetilde{V}\|_{W_4^{4,1}}^4\ea
Finally we use the estimate (\ref{pi}) in (\ref{j11}) and we find
\ba\label{j11}
&&\|J^{\ve}_{k}(t)\Psi^{\ve}_{0,k}\|^2 \leq c \,\ve^6
\int_{\Lambda_{\ve}}\!\! d \bo zdp\int_{\Lambda_{\ve}} \!\! d \bo z'dp\, E_k (\bo z,p, \bo z',p')\nonumber\\
&&\leq c \,\ve^6\|\widetilde{V}\|_{W_4^{4,1}}^4\int\!\! d \bo z d\bo z' \f{1}{\langle \bo z\rangle^4\langle \bo z'\rangle^4}\int\!\! dp dp'\f{1}{\langle p\rangle^4\langle p'\rangle^{4}}  \leq c\,\ve^6\|\widetilde{V}\|_{W_4^{4,1}}^4\ea
concluding the proof of the proposition.




\n

\n

\n

\newpage


\begin{thebibliography}{99}

\bibitem[AFFT]{afft} Adami R., Figari R., Finco D., Teta A., On the asymptotic dynamics of a quantum system composed by heavy and light particles.  {\em Comm.   Math. Phys.} {\bf 268}, no. 3, 819-852, 2006.



\bibitem[BGJKS]{bgjks} Blanchard Ph., Giulini D., Joos E., Kiefer C., 
Stamatescu I.-O.  eds., \emph{Decoherence: Theoretical, Experimental
and Conceptual Problems}, Lect. Notes  Phys. {\bf 538}, Springer, 2000.

\bibitem[BH]{bh} Bleinstein N., Handelsman R.A., \emph{Asymptotic Expansions of Integrals}, Dover Publ.,  1975.

\bibitem[CCF]{ccf} Cacciapuoti C.,Carlone R., Figari R., Decoherence induced by scattering:
a three dimensional model. {\em J. Phys. A: Math. Gen.}, {\bf 38}, n. 22, 4933-4946, 
2005.


\bibitem[CR]{cr} Comberscure M., Robert D., {\em Coherent States and Applications in Mathematical Physics}, Springer, 2012.

\bibitem[DFT]{dft} Dell'Antonio G., Figari R., Teta A., A time dipendent perturbative analysis
for a quantum particle in a cloud chamber.  {\em Ann.  H.  Poincare'} {\bf 11}, n. 3, 539-564, 2010.

\bibitem[F]{f} Fedoryuk M.V., The stationary phase method and pseudodifferential operators. {\em Usp. Mat. Nauk} {\bf 26}, n. 1, 67-112, 1971.

\bibitem[FigT]{figt} Figari R., Teta A., Emergence of classical trajectories in  quantum systems: 
the  cloud chamber problem in the analysis of Mott (1929). {\em Archive for History of Exact Sciences}, {\bf 67}, n. 2,  215-234, 2013.


\bibitem[FinT]{fint} Finco D.,Teta A., Asymptotic expansion for the wave function in a one-dimensional
model of inelastic interaction. {\em J. Math. Phys.}  {\bf 52}, 022103, 2011.



\bibitem[GJKKSZ]{gjkksz} Giulini D., Joos E., Kiefer C., Kupsch J., Stamatescu I.-O., Zeh
H.D., {\em Decoherence and the Appearance of a Classical World in
Quantum Theory}, Springer, 1996.

\bibitem[H]{h} Hornberger K., Introduction to Decoherence Theory. In:   Buchleitner A.,  Viviescas C.,  Tiersch M. eds., 
{\em Entanglement and Decoherence. Foundations and Modern Trends}.  
 Lect. Notes Phys. {\bf 768}, 221-276, Springer, 2009.

\bibitem[M]{m} Mott  N.F., The wave mechanics of $\alpha$-ray tracks. {\em Proc. R. Soc.
Lond. A}, {\bf  126}, 79-84, 1929.

\bibitem[R]{r} Robert D.,  Semi-classical approximation in quantum mechanics. A survey of old and recent mathematical results. {\em Helv. Phys. Acta}, {\bf 71}, 44-116 (1998).
\end{thebibliography}
\end{document}